\newif\iftodo
\newif\ifhighlight
\newcommand{\update}[1]{\ifhighlight\color{blue} #1 \color{black} \else #1 \fi}
\newcommand{\STAMstar}{STAM^*}
\title{Self-Replication via Tile Self-Assembly} %TODO Please add
\titlerunning{Self-Replication}%optional, please use if title is longer than one line
\author{Andrew Alseth\thanks{A. Alseth, D. Hader, and M.\ J.\ Patitz are funded in part by the National Science Foundation under award CAREER-1553166.} \and
Daniel Hader \and
Matthew J. Patitz}
\authorrunning{A. Alseth, D. Hader, and M.\,J. Patitz}
\institute{Andrew Alseth \and Daniel Hader \and Matthew J. Patitz
\at Department of Computer Science and Computer Engineering, University of Arkansas \\ \email{awalseth@uark.edu,dhader@uark.edu, patitz@uark.edu}}
\journalname{Natural Computing}
\date{Received: date / Accepted: date}
\begin{document}

\maketitle

%TODO mandatory: add short abstract of the document
\begin{abstract}
In this paper we present a model containing modifications to the Signal-passing Tile Assembly Model (STAM), a tile-based self-assembly model whose tiles are capable of activating and deactivating glues based on the binding of other glues. These modifications consist of an extension to 3D, the ability of tiles to form ``flexible'' bonds that allow bound tiles to rotate relative to each other, and allowing tiles of multiple shapes within the same system. We call this new model the STAM*, and we present a series of constructions within it that are capable of self-replicating behavior. Namely, the input seed assemblies to our STAM* systems can encode either ``genomes'' specifying the instructions for building a target shape, or can be copies of the target shape with instructions built in. A universal tile set exists for any target shape (at scale factor 2), and from a genome assembly creates infinite copies of the genome as well as the target shape. An input target structure, on the other hand, can be ``deconstructed'' by the universal tile set to form a genome encoding it, which will then replicate and also initiate the growth of copies of assemblies of the target shape. Since the lengths of the genomes for these constructions are proportional to the number of points in the target shape, we also present a replicator which utilizes hierarchical self-assembly to greatly reduce the size of the genomes required. The main goals of this work are to examine minimal requirements of self-assembling systems capable of self-replicating behavior, with the aim of better understanding self-replication in nature as well as understanding the complexity of mimicking it. 

\keywords{self-assembly \and self-replication \and Tile Assembly Model \and signal-passing tiles}
\end{abstract}

% \vspace{-10pt}
\update{\paragraph{Notes.}
A conference version of this paper was presented at DNA27 in September 2021\cite{SelfReplicationDNA}.
This paper differs substantially from the conference version; in particular, this version includes significantly more details of the constructions and proofs of their correctness, as well as Theorem~\ref{thm:need-to-deconstruct} which demonstrates the necessity of deconstruction during the process of self-replication for a class of shapes.}
% \vspace{-10pt}

\section{Introduction}
\label{sec:intro}

\vspace{-5pt}
\subsection{Background and motivation}
\vspace{-5pt}

Research in tile based self-assembly is typically focused on modeling the computational and shape-building capabilities of biological nano-materials whose dynamics are rich enough to allow for interesting algorithmic behavior. Polymers such as DNA, RNA, and poly-peptide chains are of particular interest because of the complex ways in which they can fold and bind with both themselves and others. Even when only taking advantage of a small subset of the dynamics of these materials, with properties like binding and folding generally being restricted to very manageable cases, tile assembly models have been extremely successful in exhibiting vast arrays of interesting behavior \cite{RotWin00,SolWin05,IUSA,OneTile,2HAMIU,jCCSA,SummersTemp,DotKarMasNegativeJournal,BeckerRR06,AGKS05g,Dot09}. Among other things, a typical question in the realm of algorithmic tile assembly asks what the minimal set of requirements is to achieve some desired property. Such questions can range from very concrete, such as ``how many distinct tile types are necessary to construct specific shapes?'', to more abstract such as ``under what conditions is the construction of self-similar fractal-like structures possible?''. Since the molecules inspiring many tile assembly models are used in nature largely for the purpose of self-replication of living organisms, a natural tile assembly question is thus whether or not such behavior is possible to model algorithmically.

In this paper we show that we can define a model of tile assembly in which the complexities of self-replication type behavior can be captured, and provide constructions in which such behavior occurs. We define our model with the intention of it (1) being hopefully physically implementable in the (near) future, and (2) using as few assumptions and constraints as possible. Our constructions therefore provide insight into understanding the basic rules under which the complex dynamics of life, particularly self-replication, may occur.%\todo{MJP: we may want to be careful with this wording... AWA: it'd be interesting to think about how to incorporate mutations into a future model}

We chose to use the Signal-passing Tile Assembly Model (STAM) as a basis for our model, which we call the STAM*, because (1) there has been success in physically realizing such systems \cite{SignalTilesExperimental} and potential exists for further, more complex, implementations using well-established technologies like DNA origami \cite{RotOrigami05,OrigamiTiles,wei2012complex,OrigamiBox,OrigamiSeed} and DNA strand displacement \cite{Qian1196,WangE12182,Simmel2019,ZhangDavidYu2011DDnu,StrandDispInTiles,bui2018localized}, and (2) the STAM allows for behavior such as cooperative tile attachment as well as detachment of subassemblies.% without allowing physically infeasible behavior such as communication at arbitrary distances between tiles which can occur in other models where detachment is possible such as with negative glue strength\todo{We should be careful here. The STAM is based on the 2-Handed Assembly Model (2HAM) \cite{AGKS05g,SFTSAFT}, which allows ``communication at arbitrary distances'' via geometry. However, the STAM does limit the speed, or more accurately is robust to any speed of signaling.}.
%The STAM was designed to model tiles implemented in \cite{SignalTilesExperimental}.% [TODO: maybe also mention the folowing: DNA and RNA nanotubes have been created from tiles activated and deactivated using acid/base reactions and enzymes (see Franco papers \cite{amodio2016,green2019})]
We modify the STAM by bringing it into 3 dimensions and making a few simplifying assumptions, such as allowing multiple tile shapes and tile rotation around flexible glues and removing the restriction that tiles have to remain on a fixed grid.%but have also been demonstrated to generally weaken models which is useful for our purposes since we hope to achieve a baseline model capable of self-replication type behavior.
%The motivation behind this model is twofold. First, it is intended to provide a model based on experimentally plausible mechanisms. Second, 
Allowing flexibility of structures and multiple tile shapes provides powerful new dynamics that can mimic several aspects of biological systems and suffice to allow our constructions to model self-replicating behavior.%, and may lead to a more readily physically realizable model, 
Prior work, theoretical \cite{STAMPatternRep} and experimental \cite{SchulYurWinfEvolution}, has focused on the replication of patterns of bits/letters on 2D surfaces, as well as the replication of 2D shapes in a model using staged assembly \cite{RNaseSODA2010}, or in the STAM \cite{STAMshapes}. However, all of these are fundamentally 2D results and our 3D results, while strictly theoretical, are a superset with constructions capable of replicating all finite 2D and 3D patterns and shapes.

Biological self-replication requires three main categories of components: (1) instructions, (2) building blocks, and (3) molecular machinery to read the instructions and combine building blocks in the manner specified by the instructions. We can see the embodiment of these components as follows: (1) DNA/RNA sequences, (2) amino acids, and (3) RNA polymerase, transfer RNA, and ribosomes, among other things. With our intention to study the simplest systems capable of replication, we started by developing what we envisioned to be the simplest model that would provide the necessary dynamics, the STAM*, and then designed modular systems within the STAM* which each demonstrated one or more important behaviors related to replication. Quite interestingly, and unintentionally, our constructions resulted in components with strong similarities to biological counterparts. As our base encoding of the instructions for a target shape, we make use of a linear assembly which has some functional similarity to DNA. Similar to DNA, this structure also is capable of being replicated to form additional copies of the ``genome''. In our main construction, it is necessary for this linear sequence of instructions to be ``transcribed'' into a new assembly which also encodes the instructions but which is also functionally able to facilitate translation of those instructions into the target shape. Since this sequence is also degraded during the growth of the target structure, it shares some similarity with RNA and its role in replication. Our constructions don't have an analog to the molecular machinery of the ribosome, and can therefore ``bootstrap'' with only singleton copies of tiles from our universal set of tiles in solution. However, to balance the fact that we don't need preexisting machinery, our building blocks are more complicated than amino acids, instead being tiles capable of a constant number of signal operations each (turning glues on or off due to the binding of other glues).

%TODO: Discuss and compare with Soloveichik and Winfree's universal 2-dimensional shape constructor \cite{SolWin07} which uses a TM-based construction to form each ``pixel'' of the shape, resulting in optimal ``genome compression'' but very large scaling of the target shape.

\vspace{-10pt}
\subsection{Our results}
\vspace{-5pt}

Beyond the definition of the STAM* as a new model, we present a series of STAM* constructions. They are designed and presented in a modular fashion, and we discuss the ways in which they can be combined to create various (self-)replicating systems.

\vspace{-10pt}
\subsubsection{Genome-based replicator}
\vspace{-5pt}

We first develop an STAM* tileset which functions as a simple self-replicator (in Section \ref{sec:simple-replicator}) that begins from a seed assembly encoding information about a target structure, a.k.a. a \emph{genome}, and grows arbitrarily many copies of the genome and target structure, a.k.a. the \emph{phenotype}. This tileset is universal for all 3D shapes comprised of $1\times 1 \times 1$ cubes when they are inflated to scale factor 2 (i.e. each $1 \times 1 \times 1$ block in the shape is represented by a cube of $2 \times 2 \times 2$ tiles). This construction requires a genome whose length is proportional to the number of cube tiles in the phenotype; for non-trivial shapes the genome is a constant factor longer in order to follow a Hamiltonian path through an arbitrary 3D shape at scale factor 2. This is compared to the Soloveichik and Winfree universal (2D) constructor \cite{SolWin07} where a ``genome'' is optimally shortened, but the scale factor of blocks is much larger.

The process by which this occurs contains analogs to natural systems. We progress from a genome sequence (acting like DNA), which is translated into a messenger sequence (somewhat analogous to RNA), that is modified and consumed in the production of tertiary structures (analogous to proteins). We have a number of helper structures that fuel both the replication of the genome and the translation of the messenger sequence.

% Genome-based replicator:
% \begin{enumerate}
%     \item A simple self-replicator which can begin from a seed assembly encoding information about a target structure, a.k.a. a \emph{genome}, and grow arbitrarily many copies of the genome and target structure, a.k.a. the \emph{phenotype}.
    
%     \item Uses a universal set of tiles.
    
%     \item Any 3D shape can be replicated in this way at scale factor 2. (Simple inductive proof for the existence of a Hamiltonian path.)
    
%     \item Input as the phenotype instead of genome cannot replicate with a universal set of tiles without the ability to deconstruct the phenotype.
%     \begin{enumerate}
%         \item I think we may be able to prove this
%     \end{enumerate}
    
%     \item This construction requires a genome whose length is proportional to the number of tiles in the phenotype (and actually, slightly longer to ``turn corners''). Compare this to Soloveichik and Winfree universal (2D) constructor \cite{SolWin07} where ``genome'' would be optimally shortened, but scale factor of blocks is much larger.
    
%     \item With each copy of the genome (which is analogous to DNA), one of two processes can be initiated: (1) an exact copy of the genome is created, (2) a ``messenger'' assembly, somewhat analogous to RNA, is created.
    
%     \item Each copy of the ``RNA'' structure templates the formation of an assembly of the target shape (the phenotype) while itself being degraded.
    
% \end{enumerate}

\vspace{-10pt}
\subsubsection{Deconstructive self-replicator}
\vspace{-5pt}

In Section \ref{sec:deconstruct}, we construct an STAM* tileset that can be used in systems in which an arbitrarily shaped seed structure, or phenotype, is disassembled while simultaneously forming a genome that describes its structure. This genome can then be converted into a linear genome (of the form used for the first construction) to be replicated arbitrarily and can be used to construct a copy of the phenotype. We show that this can be done for any 3D shape at scale factor 2 which is sufficient, and in some cases necessary, to allow for a Hamiltonian path to pass through each point in the shape. This Hamiltonian path, among other information necessary for the disassembly and, later, reassembly processes, is encoded in the glues and signals of the tiles making up the phenotype. We then show how, using simple signal tile dynamics, the phenotype can be disassembled tile by tile to create a genome encoding that same information. Additionally, a reverse process exists so that once the genome has been constructed from a phenotype, a very similar process can be used to reconstruct the phenotype while disassembling the genome.

In sticking with the DNA, RNA, protein analogy, this disassembly process doesn't have a particular biological analog; however, this result is important because it shows that we can make our system robust to starting conditions. That is, we can begin the self-replication process at any stage be it from the linear genome, ``kinky genome'' (the messenger sequence from the first construction), or phenotype. Finally, since this construction requires the phenotype to encode information in its glues and signals, we show that this can be computed efficiently using a polynomial time algorithm given the target shape. This not only shows that the STAM* systems can be described efficiently for any target shape via a single universal tile set, but that results from intractable computations aren't built into our phenotype (i.e. we're not ``cheating'' by doing complex pre-computations that couldn't be done efficiently by a typical computationally universal system). Due to space constraints we only include a result about the necessity for deconstruction in a universal replicator in Section \ref{sec:need-to-deconstruct}.

% Deconstructive self-replicator
% \begin{enumerate}
%     \item Input is the structure to replicate
    
%     \item Could probably be any 3D shape at scale factor 2
    
%     \item The ``genome'' is embedded, i.e. the Hamiltonian path is encoded in the blocks
    
%     \item One tile on the outer surface has a special glue to mark that it is the beginning, and every tile (or $2 \times 2 \times 2$ block) knows which neighbor comes next in the Hamiltonian path, and the final block is specially marked
    
%     \item The input structure is deconstructed and the genome is built, then the first construction is used to make copies of both (although the copies of the structure don't have the glues that enable their deconstruction).
    
%     \item Question: is it possible to have a similar construction that works without the embedded Hamiltonian path? It may be possible if each block knows which neighbors it is connected to. However, a major problem would be handling when a block's neighbor has already been taken apart but it still ``points to'' that neighbor which could cause the deconstruction to ``get stuck''. Need to think a lot more about it, but it would probably be helpful to have the first deconstructive construction done first.
% \end{enumerate}

\vspace{-12pt}
\subsubsection{Hierarchical assembly-based replicator}
\vspace{-5pt}

For our final construction, in Section \ref{sec:2ham}, our aims were twofold. First, we wanted to compress the genome so that its total length is much shorter than the number of tiles in the target shape. Second, we wanted to more closely mimic the biological process in which individual proteins are constructed via the molecular machinery, and then they are released to engage in a hierarchical self-assembly process in which proteins combine to form larger structures. 

Biological genomes are many orders of magnitude smaller than the organisms which they encode, but for our previous constructions the genomes are essentially equivalent in size to the target structures. Our final construction is presented in a ``simple'' form in which the general scaling approximately results in a genome which is length $n^{\frac{1}{3}}$ for a target structure of size $n$. However, we discuss relatively simple modifications which could, for some target shapes, result in genome sizes of approximately $\log{n}$, and finally we discuss a more complicated extension (which also consumes a large amount of ``fuel'', as opposed to the base constructions which consume almost no fuel) that can achieve asymptotically optimal encoding.

%(There are approximately 3.2 billion base pairs in the human genome, and the human body consists of approximately 100 trillion cells, each composed of many complicated components.)

\vspace{-10pt}
\subsubsection{Combinations and permutations of constructions}
%\vspace{-5pt}

Due to length restrictions for this version of the paper, and our desire to present what we found to be the ``simplest'' systems capable of combining to perform self-replication, there are several additions to our results which we only briefly mention. For instance, to make our first construction (in Section \ref{sec:simple-replicator}) into a standalone self-replicator, and one which functions slightly more like biological systems, the input to the system, i.e. the seed assembly, could instead be a copy of the target structure with a genome ``tail'' attached to it. The system could function very similarly to the construction of Section \ref{sec:simple-replicator} but instead of genome replication and structure building being separated, the genome could be replicated and then initiate the growth of a connected messenger structure so that once the target structure is completed, the genome is attached. Thus, the input assembly would be completed replicated, and be a self-replicator more closely mirroring biology where the DNA along with the structure cause the DNA to replicate itself and the structure. Attaching the genome to the structure is a technicality that could satisfy the need to have a single seed assembly type, but clearly it doesn't meaningfully change the behavior.
At the end of Section \ref{sec:2ham} we discuss how that construction could be combined with those from Sections \ref{sec:simple-replicator} and \ref{sec:deconstruct}, as well as further optimized. The main body of the paper contains high-level overviews of the definition of the STAM* as well as of the results. Full technical details for each section can be found in the corresponding section of the technical appendix.%Due to space constraints, the main body contains overviews and the appendix has full technical details.\footnote{Additionally, a full version will appear on \url{www.arxiv.org} before the proceedings are published.}

%This paper is organized as follows. In Section \ref{sec:prelims}, we define notation used throughout the paper and the STAM* model. In Section \ref{sec:simple-replicator} we present our first construction, a genome-based replicator. Section \ref{sec:deconstruct} presents the construction which can be used to deconstruct an input assembly and create a corresponding genome. In Section \ref{sec:2ham} we present our replicator which utilizes hierarchical assembly and connections to the first two. Technical details from all sections can be found in the technical appendix at the end of the paper.

\vspace{-5pt}
\section{Preliminaries}
\label{sec:prelims}
\vspace{-5pt}

In this section we define the notation and models used throughout the paper.

We define a \emph{3D shape} $S \subset \mathcal{Z}^3$ as a connected set of $1 \times 1 \times 1$ cubes (a.k.a. unit cubes) which define an arbitrary polycube, i.e. a shape composed of unit cubes connected face to face where each cube represents a voxel (3-D pixel) of $S$.
For each shape $S$, we assume a canonical translation and rotation of $S$ so that, without loss of generality, we can reference the coordinates of each of its voxels and directions of its surfaces, or faces.
We say a unit cube is \emph{scaled by factor $c$} if it is replaced by a $c \times c \times c$ cube composed of $c^3$ unit cubes. Given an arbitrary 3D shape $S$, we say $S$ is \emph{scaled by factor $c$} if every unit cube of $S$ is scaled by factor $c$ and those scaled cubes are arranged in the shape of $S$. We denote a shape $S$ scaled by factor $c$ as $S^c$.

\vspace{-10pt}
\subsection{Definition of the STAM*}

The 3D Signal-passing Tile Assembly Model* \\(3D-STAM*, or simply STAM*) is a generalization of the STAM \cite{jSignals,jSignals3D,jSTAM-fractals,SignalsReplication} (that is similar to the model in \cite{JonoskaSignals1,JonoskaSignals2}) in which (1) the natural extension from 2D to 3D is made (i.e. tiles become 3-dimensional shapes rather than 2-dimensional squares), (2) multiple tile shapes are allowed, (3) tiles are allowed to flip and rotate \cite{OneTile,jRTAM}, and (4) glues are allowed to be rigid (as in the aTAM, 2HAM, STAM, etc., meaning that when two adjacent tiles bind to each other via a rigid glue, their relative orientations are fixed by that glue) or \emph{flexible} (as in \cite{FTAM}) so that even after being bound tiles and subassemblies are free rotate with respect to tiles and subassemblies to which they are bound by bending or twisting around a ``joint'' in the glue. (This would be analogous to rigid glues forming as DNA strands combine to form helices with no single-stranded gaps, while flexible glues would have one or more unpaired nucleotides leaving a portion of single-stranded DNA joining the two tiles, which would be flexible and rotatable.) See Figure~\ref{fig:glue-example} for a simple example. These extensions make the STAM* a hybrid model of those in previous studies of hierarchical assembly \cite{AGKS05g,DDFIRSS07,j2HAMIU,2HAM-temp1,j2HAMSim}, 3D tile-based self-assembly \cite{CookFuSch11,OptimalShapes3D,BeckerTimeOpt,DDDIU}, systems allowing various non-square/non-cubic tile types \cite{Polyominoes,Polygons,OneTile,GeoTiles,GeoTilesUCNC2019,KariTriHex}, and systems in which tiles can fold and rearrange \cite{FTAM,FlexibleVsRigid,FlexibleCompModel,JonoskaFlexible}.

Due to space constraints, we now provide a high-level overview of several aspects of the STAM* model, and full definitions can be found in Section \ref{sec:prelims-append} of the Technical Appendix.

The basic components of the model are \emph{tiles}. Tiles bind to each other via \emph{glues}. Each glue has a \emph{glue type} that specifies its domain (which is the string label of the glue), integer strength, \emph{flexibility} (a boolean value with {\tt true} meaning \emph{flexible} and {\tt false} meaning \emph{rigid}), and length (representing the length of the physical glue component). A glue is an instance of a glue type and may be in one of three states at any given time, {\tt latent,} {\tt on,} {\tt off}. A pair of adjacent glues are able to bind to each other if they have complementary domains and are both in the {\tt on} state, and do so with strength equal to their shared strength values (which must be the same for all glues with the same label $l$ or the complementary label $l^*$).

A \emph{tile type} is defined by its 3D shape (and although arbitrary rotation and translation in $\mathbb{R}^3$ are allowed, each is assigned a canonical orientation for reference), its set of glues, and its set of \emph{signals}. Its set of glues specify the types. locations, and initial states of its glues. Each signal in its set of signals is a triple $(g_1,g_2,\delta)$ where $g_1$ and $g_2$ specify the \emph{source} and \emph{target} glues (from the set of the tile type's glues) and \\$\delta \in \{\texttt{activate,deactivate}\}$. Such a signal denotes that when glue $g_1$ forms a bond, an action is initiated to turn glue $g_2$ either {\tt on} (if $\delta == $  $\texttt{activate}$) or {\tt off} (otherwise). A \emph{tile} is an instance of a tile type represented by its type, location, rotation, set of glue states (i.e. $\texttt{latent,on}$ or $\texttt{off}$ for each), and set of \emph{signal states}. Each signal can be in one of the  signal states $\{\texttt{pre,firing,post}\}$. A signal which has never been activated (by its source glue forming a bond) is in the {\tt pre} state. A signal which has activated but whose action has not yet completed is in the {\tt firing} state, and if that action has completed it is in the {\tt post} state. Each signal can ``fire'' only one time, and each glue which is the target of one or more signals is only allowed to make the following state transitions:  (1) $\texttt{latent} \rightarrow \texttt{on}$, (2) $\texttt{on} \rightarrow \texttt{off}$, and (3) $\texttt{latent} \rightarrow \texttt{off}$.

%\vspace{-2pt}
We use the terms \emph{assembly} and \emph{supertile}, interchangeably, to refer to the full set of rotations and translations of either a single tile (the base case) or a collection of tiles which are bound together by glues. A supertile is defined by the tiles it contains (which includes their glue and signal states) and the glue bonds between them. A supertile may be flexible (due to the existence of a cut consisting entirely of flexible glues that are co-linear and there being an unobstructed path for one subassembly to rotate relative to the other), and we call each valid positioning of it sets of subassemblies a \emph{configuration} of the supertile. A supertile may also be translated and rotated while in any valid configuration. We call a supertile in a particular configuration, rotation, and translation a \emph{positioned supertile}.

%\vspace{-2pt}
Each supertile induces a \emph{binding graph}, a multigraph whose vertices are tiles, with an edge between two tiles for each glue which is bound between them. The supertile is \emph{$\tau$-stable} if every cut of its binding graph has strength at least $\tau$, where the weight of an edge is the strength of the glue it represents. That is, the supertile is $\tau$-stable if cutting bonds of at least summed strength of $\tau$ is required to separate the supertile into two parts.

%\vspace{-2pt}
For a supertile $\alpha$, we use the notation $|\alpha|$ to represent the number of tiles contained in $\alpha$. The \emph{domain} of a positioned supertile $\alpha$, written $\dom \alpha$, is the union of the points in $\mathbb{R}^3$ contained within the tiles composing $\alpha$. Let $\alpha$ be a positioned supertile. Then, for $\vec{v} \in \mathbb{R}^3$, we define the partial function $\alpha(\vec{v}) = t$ where $t$ is the tile containing $\vec{v}$ if $\vec{v} \in \dom \alpha$, otherwise it is undefined. Given two positioned supertiles, $\alpha$ and $\beta$, we say that they are \emph{equivalent}, and we write $\alpha \approx \beta$, if for all $\vec{v} \in \mathbb{R}^3$ $\alpha(\vec{v})$ and $\beta(\vec{v})$ both either return tiles of the same type, or are undefined. We say they're \emph{equal}, and write $\alpha \equiv \beta$, if for all $\vec{v} \in \mathbb{R}^3$ $\alpha(\vec{v})$ and $\beta(\vec{v})$ either both return tiles of the same type having the same glue and signal states, or are undefined.

%\begin{wrapfigure}{r}{0.3\textwidth}
%\vspace{-10pt}
\begin{figure}
    \centering
    \includegraphics[width=0.3\textwidth]{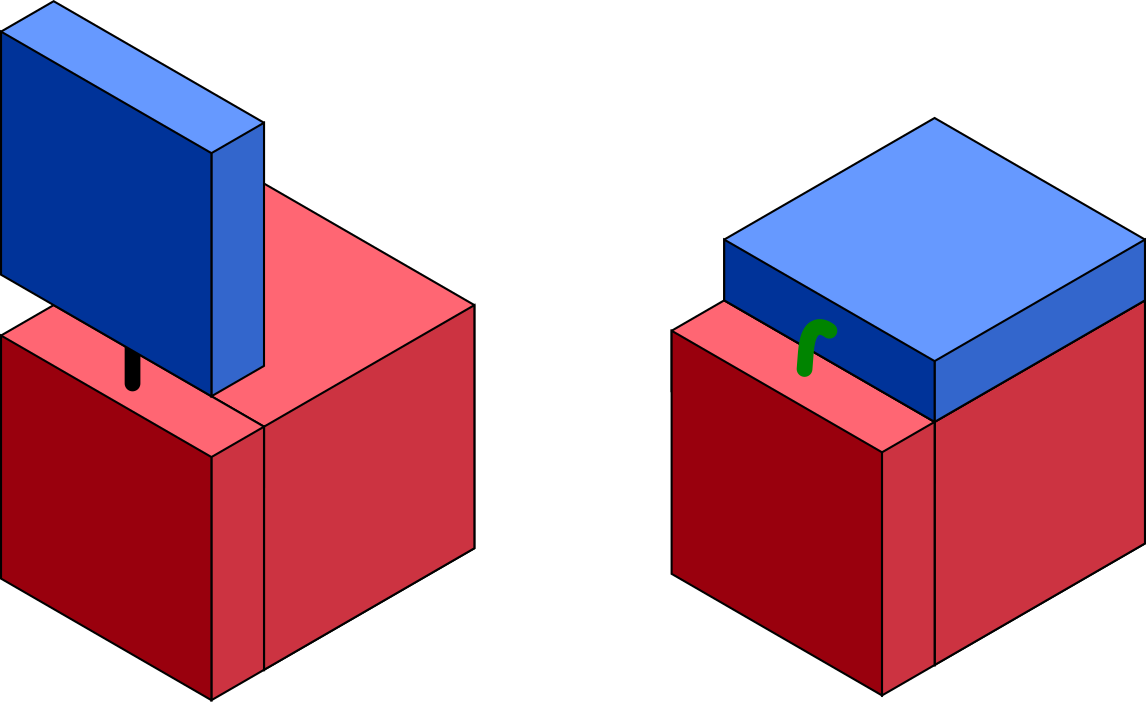}
    \caption{Example showing flat and cubic tiles, and possible behavior of a flexible glue allowing the blue tile to fold upward, away from the red cubic tile, or down against it. In all constructions, we assume lengths for all flexible glues which make the folding and alignment in this figure possible, and length 0 for rigid glues between cubic and flat tiles (as though one tile's glue strand binds into a cavity).\label{fig:glue-example}}
\end{figure}
%\vspace{-15pt}
%\end{wrapfigure}

%\vspace{-2pt}
An STAM* tile assembly system, or TAS, is defined as $\mathcal{T} = (T,C,\tau)$ where $T$ is a finite set of tile types, $C$ is an initial configuration, and $\tau \in \N$ is the minimum binding threshold (a.k.a. temperature) specifying the minimum binding strength that must exist over the sum of binding glues between two supertiles in order for them to attach to each other. The initial configuration $C = \{(S,n) \mid S$ is a supertile over the tiles in $T$ and $n \in \N \cup \infty$ is the number of copies of $S\}$. Note that for each $s \in S$, each tile $\alpha = (t,\vec{l},S,\gamma) \in s$ has a set of glue states $S$ and signal states $\gamma$. By default, it is assumed that every tile in every supertile of an initial configuration begins with all glues in the initial states for its tile type, and with all signal states as $\texttt{pre}$, unless otherwise specified. The initial configuration $C$ of a system $\mathcal{T}$ is often simply given as a set of supertiles, which are also called \emph{seed} supertiles, and it is assumed that there are infinite counts of each seed supertile as well as of all singleton tile types in $T$. If there is only one seed supertile $\sigma$, we will we often just use $\sigma$ rather than $C$.

\vspace{-10pt}
\subsubsection{Overview of STAM* dynamics}

% \todo{potentially replace with the full dynamics from the appendix?}
\vspace{-5pt}

An STAM* system $\mathcal{T} = (T,C,\tau)$ evolves nondeterministically in a series of (a possibly infinite number of) steps. Each step consists of randomly executing one of the following actions: (1) selecting two existing supertiles which have configurations allowing them to combine via a set of neighboring glues in the {\tt on} state whose strengths sum to strength $\ge \tau$ and combining them via a random subset of those glues whose strengths sum to $\ge \tau$ (and changing any signals with those glues as sources to the state {\tt firing} if they are in state {\tt pre}), or (2) randomly select two adjacent unbound glues of a supertile which are able to bind, bind them and change attached signals in state {\tt pre} to {\tt firing}, or (3) randomly select a supertile which has a cut $< \tau$ (due to glue deactivations) and cause it to \emph{break} into 2 supertiles along that cut, or (4) randomly select a signal on some tile of some supertile where that signal is in the {\tt firing} state and change that signal's state to {\tt post}, and as long as its action ({\tt activate} or {\tt deactivate}) is currently valid for the signal's target glue, change the target glue's state appropriately.\footnote{The asynchronous nature of signal firing and execution is intended to model a signalling process which can be arbitrarily slow or fast. Please see Section \ref{sec:prelims-append} for more details.} Although at each step the next choice is random, it must be the case that no possible selection is ever ignored infinitely often. (See Section \ref{sec:prelims-append} for more details.)

Given an STAM* TAS $\calT=(T,C,\tau)$, a supertile is \emph{producible}, written as $\alpha \in \prodasm{T}$, if either it is a single tile from $T$, or it is the result of a (possibly infinite) series of combinations of pairs of finite producible assemblies (which have each been positioned so that they do not overlap and can be $\tau$-stably bonded), and/or breaks of producible assemblies. % it is the $\tau$-stable result of translating two producible assemblies without overlap.
A supertile $\alpha$ is \emph{terminal}, written as $\alpha \in \termasm{T}$, if (1) for every $\beta \in \prodasm{T}$, $\alpha$ and $\beta$ cannot be $\tau$-stably attached, (2) there is no configuration of $\alpha$ in which a pair of unbound complementary glues in the {\tt on} state are able to bind, and (3) no signals of any tile in $\alpha$ are in the {\tt firing} state. %A TAS is \emph{directed} if it has only one terminal, producible supertile.

In this paper, we define a shape as a connected subset of $\mathbb{Z}^3$ to both simplify the definition of a shape and to capture the notion that to build an arbitrary shape out of a set of tiles we will actually approximate it by ``pixelating'' it. Therefore, given a shape $S$, we say that assembly $\alpha$ has shape $S$ if $\alpha$ has only one valid configuration (i.e. it is \emph{rigid}) and there exist (1) a rotation of $\alpha$ and (2) a scaling of $S$, $S'$, such that the rotated $\alpha$ and $S'$ can be translated to overlap where there is a one-to-one and onto correspondence between the tiles of $\alpha$ and cubes of $S'$ (i.e. there is exactly $1$ tile of $\alpha$ in each cube of $S'$, and none outside of $S'$).\footnote{In this paper we only consider completely rigid assemblies for target shapes, since the target shapes are static. We could also target ``reconfigurable'' shapes, i.e. sets of shapes, but don't do so in this paper. Also, it could be reasonable to allow multiple tiles in each pixel location as long as the correct overall shape is maintained, but we don't require that.}

%\vspace{-5pt}
\begin{definition}
We say a shape $X$ \emph{self-assembles in $\calT$ with waste size $c$}, for $c \in \mathbb{N}$, if there exists terminal assembly $\alpha\in\termasm{T}$ such that $\alpha$ has shape $X$, and for every $\alpha\in\termasm{T}$, either $\alpha$ has shape $X$, or $|\alpha| \le c$. If $c = 1$, we simply say $X$ \emph{self-assembles in} $\calT$.
\end{definition}

% \vspace{-5pt}
\begin{definition}
We call an STAM* system $\mathcal{R} = (T,C,\tau)$ a \emph{shape self-replicator for shape $S$} if $C$ consists exactly of infinite copies of each tile from $T$ as well as of a single supertile $\sigma$ of shape $S$, there exists $c \in \mathbb{N}$ such that $S$ self-assembles in $\mathcal{R}$ with waste size $c$, and the count of assemblies of shape $S$ increases infinitely.
\end{definition}

% \vspace{-5pt}
\begin{definition}
We call an STAM* system $\mathcal{R} = (T,C,\tau)$ a \emph{self-replicator for $\sigma$ with waste size $c$} if $C$ consists exactly of infinite copies of each tile from $T$ as well as of a single supertile $\sigma$, there exists $c \in \mathbb{N}$ such that for every terminal assembly $\alpha \in \termasm{T}$ either (1) $\alpha \approx \sigma$, or (2) $|\alpha| \le c$, and the count of assemblies $\approx \sigma$ increases infinitely.\footnote{We use $\approx$ rather than $\equiv$ since otherwise either both the seed assemblies and produced assemblies are terminal, meaning nothing can attach to a seed assembly and the system can't evolve, or neither are terminal and it becomes difficult to define the product of a system. However, our construction in Section \ref{sec:deconstruct} can be modified to produce assemblies satisfying either the $\approx$ or $\equiv$ relation with the seed assemblies.} If $c=1$, we simply say $\mathcal{R}$ is a self-replicator for $\sigma$.
\end{definition}

% \vspace{-5pt}
% \begin{definition}
% We call STAM* tile set $U$ a \emph{universal shape self-replicator} if there exists a constant scale factor $c$ such that, for every shape $S$, there exists an STAM* system $\calT_{S^c} = (U,C_{S^c},\tau)$ that is a shape self-replicator for $S^c$.
% \end{definition}

%\todo{Define classes of systems and assemblies (and shapes?) to be something like those using the same tile shapes and glue lengths(?) Define ``universal'' replicators.}

The multiple aspects of STAM* tiles and systems give rise to a variety of metrics with which to characterize and measure the complexity of STAM* systems, beyond metrics seen for models such as the aTAM or even STAM. For a brief discussion, please see the end of Section~\ref{sec:prelims-append}.

% \vspace{-10pt}
\subsubsection{STAM* conventions used in this paper}

%\begin{wrapfigure}{r}{0.3\textwidth}
\begin{figure}
    %\vspace{-30px}
    \centering
    \includegraphics[width=0.15\textwidth]{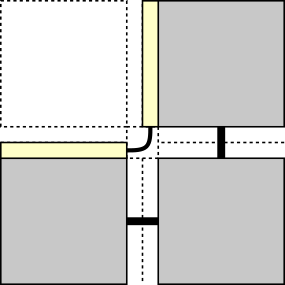}
    \caption{The glue lengths used in our constructions: (1) length $2\epsilon$ rigid bonds between cubic tiles, (2) length 0 rigid bonds between flat and cubic tiles, and (3) length $3\sqrt{2}\;\epsilon/2$ flexible glues between flat tiles.}
    \label{fig:glue-lengths}
    %\vspace{-25px}
%\end{wrapfigure}
\end{figure}

Although the STAM* is a highly generalized model allowing for variety in tile shapes, glue lengths, etc., throughout this paper all constructions are restricted to the following conventions.

\begin{enumerate}

    \item All tile types have one of two shapes (shown in Figure~\ref{fig:glue-example}):
    \begin{enumerate}
        \item A \emph{cubic} tile is a tile whose shape is a $1 \times 1 \times 1$ cube.
    
        \item A \emph{flat} tile is a tile whose shape is a  $1 \times 1 \times \epsilon$ rectangular prism, where $\epsilon < 1$ is a small constant.
        
        \item We call a $1 \times 1$ face of a tile a \emph{full} face, and a $1 \times \epsilon$  face is called a \emph{thin} face.
    \end{enumerate}

    \item Glue lengths are the following (and are shown in Figure \ref{fig:glue-lengths}):
    \begin{enumerate}
        \item All rigid glues between cubic tiles, as well as between thin faces of flat tiles, are length $2\epsilon$.
        
        \item All rigid glues between cubic and flat tiles are length $0$. (Note that this could be implemented via the glue strand of one tile extending into the tile body of the other tile in order to bind, thus allowing the tile surfaces to be adjacent without spacing between the faces.)
       
        \item All flexible glues are length $\frac{3}{2}\sqrt{2}\epsilon$. \footnote{These glue lengths were chosen so that (1) rigidly bound cubic tiles could each have a flat tile bound to each of their sides if needed and (2) so that two flat tiles attached to diagonally adjacent rigid tiles could be attached via flexible glue.}
        
        %\item \todo{AWA: We never explicitly discussed this, but an assumption used in the simple replicator model is that rigid glues between thin faces of flat tiles are equivalent to that of the rigid glue length between cubic tiles i.e. length $2\epsilon$. Does this hold for you, Daniel?} \todo{DH: I don't use rigid glues between flat tiles, so that's fine with me}
    \end{enumerate}
\end{enumerate}

Given that rigidly bound cubic tiles cannot rotate relative to each other, for convenience we often refer to rigidly bound tiles as though they were on a fixed lattice. This is easily done by first choosing a rigidly bound cubic tile as our origin, then using the location $\vec{l}$, orientation matrix $R$, and rigid glue length $g$, put in one-to-one correspondence with each vector $\vec{v}$ in $\mathbb{Z}^3$, the vector $\vec{l} + g R \vec{v}$. Once we define an absolute coordinate system in this way, we refer to the directions in 3-dimensional space as North ($+y$), East ($+x$), South ($-y$), West ($-x$), Up ($+z$), and Down ($-z$), abbreviating them as $N,E,S,W,U,$ and $D$, respectively.

\update{
    \vspace{-10pt}
    \subsection{Detailed STAM* dynamics}
    \label{sec:prelims-append}
    \begin{enumerate}
    
        \item The binding of a glue causes any signals associated with that glue to change states, i.e. fire (if they haven't already fired due to a prior binding event).
        
        \item A glue and its complementary pair which are bound overlap, causing the distance between their tiles to be the length of the glue (not two times the length).
        
        \item The binding of a single rigid glue or two flexible glues on different surfaces lock a tile in place. Two flexible glues on the same surface prevent ``flipping'' (or ``twisting'') but allow ``hinge-like'' rotation.
        
        \item The assembly process proceeds step by step by nondeterministically selecting one of the following types of moves to execute unless and until none is available. While the following set of choices for a next step are made randomly, no action which is valid can be postponed infinitely long.
        \begin{enumerate}
            \item Randomly select any pair of supertiles, $\alpha$ and $\beta$, which can bind via a sum of $\ge \tau$ strength bonds if appropriately positioned (and binding only via glues in the $\texttt{on}$ state). Position $\alpha$ and $\beta$ to combine them to form a new supertile by binding a random subset of the glues which can bind between them whose strengths sum to $\ge \tau$. For each bound glue which has a signal associated with it, but that signal is still in the \texttt{pre} state, change the signal's state to \texttt{firing}. Note that rigid glues must form bonds which extend perpendicularly from their surfaces, but flexible glues are free to bend to form bonds.
            
            \item Randomly select any supertile which has a cut in its binding graph $< \tau$ (due to one or more glue deactivations), and split that supertile into two supertiles along that cut. We call this operation a \emph{break}.
            
            \item Randomly select any pair of subassemblies (each of one or more tiles) in the same supertile but bound only by flexible glues so that the subassemblies are free to rotate relative to each other, and perform a valid rotation of one of those subassemblies.
            
            \item Randomly select a supertile and pair of unbound glues within it such that the supertile has a valid configuration in which those glues are able to bind (i.e. they are complementary, both in the $\texttt{on}$ state, and the glues can reach each other), and bind them. For each which has a signal associated with it, but that signal is still in the \texttt{pre} state, change the signal's state to \texttt{firing}.
            
            \item Randomly select a signal whose state is \texttt{firing} from any tile and execute it. This entails, based on the signal's definition, that its target glue is either \texttt{activate}d or \texttt{deactivate}d if that is still a valid transition for that glue, and for the signal's state to change to \texttt{post}, marking it as completed and unable to fire again. The STAM* is based on the STAM and it preserves the design goal of modeling physical mechanisms that implement the signals on tiles but which are arbitrarily slower or faster than the average rates of (super)tile attachments and detachments.  Therefore, rather than immediately enacting the actions of signals, each signal is put into a state of \texttt{firing} along with all signals initiated by the glue (since it is technically possible for more than one signal to have been initiated, but not yet enacted, for a particular glue). Any \texttt{firing} signal can be randomly selected from the set, regardless of the order of arrival in the set, and the ordering of either selecting some signal from the set or the combination of two supertiles is also completely arbitrary.  This provides fully asynchronous timing between the initiation, or firing, of signals and their execution (i.e. the changing of the state of the target glue), as an arbitrary number of supertile binding (or breaking) events may occur before any signal is executed from the \texttt{firing} set, and vice versa.
        \end{enumerate}
    \end{enumerate}
    
    The multiple aspects of STAM* tiles and systems give rise to a variety of metrics with which to characterize and measure the complexity of STAM* systems. Following is a list of some such metrics.
    \begin{enumerate}
        \item Tile complexity: the number of unique tile types
        
        \item Tile shape complexity: the number of unique tile shapes, or the maximum number of surfaces on a tile shape, or the maximum difference in sizes between tile shapes
        
        \item Tile glue complexity: the maximum number of glues on any tile type
        
        \item Seed complexity: the size of the seed assembly (and/or the number of unique seed assemblies.
        
        \item Signal complexity: the maximum number of signals on any tile type
        
        \item Junk complexity: the size of the largest terminal assembly which is not considered the ``target assembly'' (a.k.a. \emph{junk assembly}), or the number of unique types of junk assemblies
    \end{enumerate}
}
%[TODO: Note that multiple tiles may be able to connect to the same side of one tile if one or more glues are flexible and other geometric factors allow.]

\section{A Genome Based Replicator}
\label{sec:simple-replicator}

We now present our first construction in the STAM*, in which a ``universal'' set of tiles will cause a pre-formed seed assembly encoding a Hamiltonian path through a target structure, which we call the \emph{genome}, to replicate infinitely many copies of itself as well as build infinitely many copies of the target structure at temperature 2. We consider 4 unique structures which are generated/utilized as part of the self-replication process: $\sigma,\mu,\mu^\prime$, and $\pi$. The seed assembly, $\sigma$, is composed of a connected set of flat tiles considered to be the \emph{genome}. Let $\pi$ represent an assembly of the target shape encoded by $\sigma$. $\mu$ is an intermediate ``messenger'' structure directly copied from $\sigma$, which is modified into $\mu^\prime$ to assemble $\pi$. We split $T$ into subsets of tiles, $T = \{ T_{\sigma} \cup T_{\mu} \cup T_{\varphi} \cup T_{\pi}\}$. $T_\sigma$ are the tiles used to replicate the genome, $T_\mu$ are the tiles used to create the messenger structure, $T_\pi$ are the cubic tiles which comprise the phenotype $\pi$, and $T_\phi$ are the set of tiles which combine to make fuel structures used in both the genome replication process and conversion of $\mu$ to $\mu^\prime$. 

The tile types which make up this replicator are carefully designed to prevent spurious structures and enforce two key properties for the self-replication process. First, a genome is never consumed during replication, allowing for exponential growth in the number of completed genome copies. Second, the replication process from messenger to phenotype strictly follows $\mu \rightarrow \mu^\prime \rightarrow \pi$; each step in the assembly process occurs only after the prior structure is in its completed form. This prevents unexpected geometric hindrances which could block progression of any further step. Complete details of $T$ are located in Section \ref{sec:tiles}. 

\vspace{-10pt}
\subsection{Replication of the genome}
\label{sec:detail-genome}

The minimal requirements to generate copies of $\sigma$ in $\mathcal{R}$ are the following: (1) for all individual tile types $s\in \sigma, s \in T_\sigma$, (2) the last tile is the end tile $E$, and (3) the first tile in $\sigma$ is a start tile in the set $(S^+,S^-)$.  However, for the shape-self replication of $S$ one additional property must hold: (4) $\sigma$ encodes a Hamiltonian path which ends on an exterior cubic tile. We define the genome to be `read' from left to right; given requirements (2) and (3), the leftmost tile in a genome is a start tile and the rightmost is an end tile. (4) can be guaranteed by scaling $S$ up to $S^2$ and utilizing the algorithm in Section \ref{sec:hampath}, selecting a cubic tile on the exterior as a start for the Hamiltonian path and then reversing the result. This requirement ensures the possibility of cubic tile diffusion into necessary locations at all stages of assembly.

\begin{figure}[ht]
    \centering
    \begin{subfigure}{0.46\textwidth}
        \centering
        \includegraphics[height=4cm]{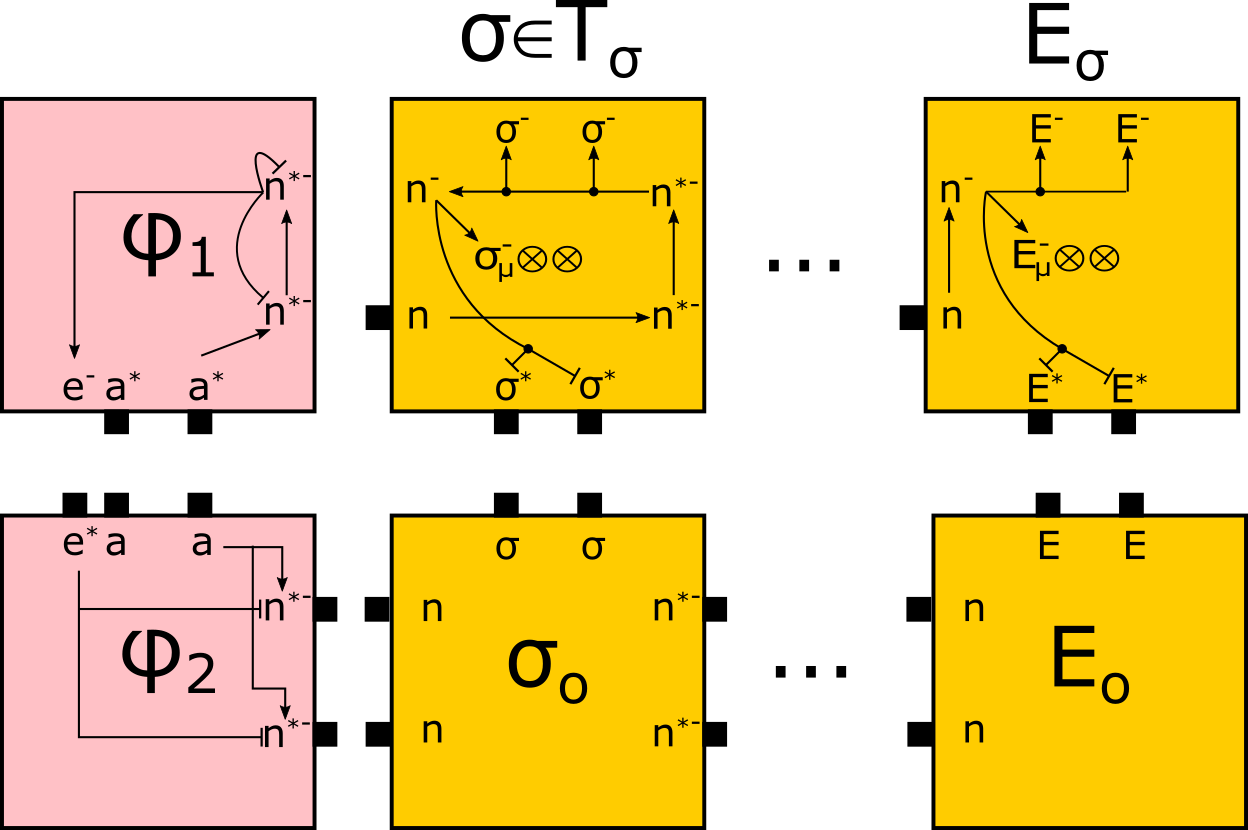}
    \caption{\label{fig:genome_tiles}}
    \end{subfigure}
    \hspace{0.04\textwidth}
    \begin{subfigure}{0.46\textwidth}
        \centering
        \includegraphics[width=0.4\linewidth]{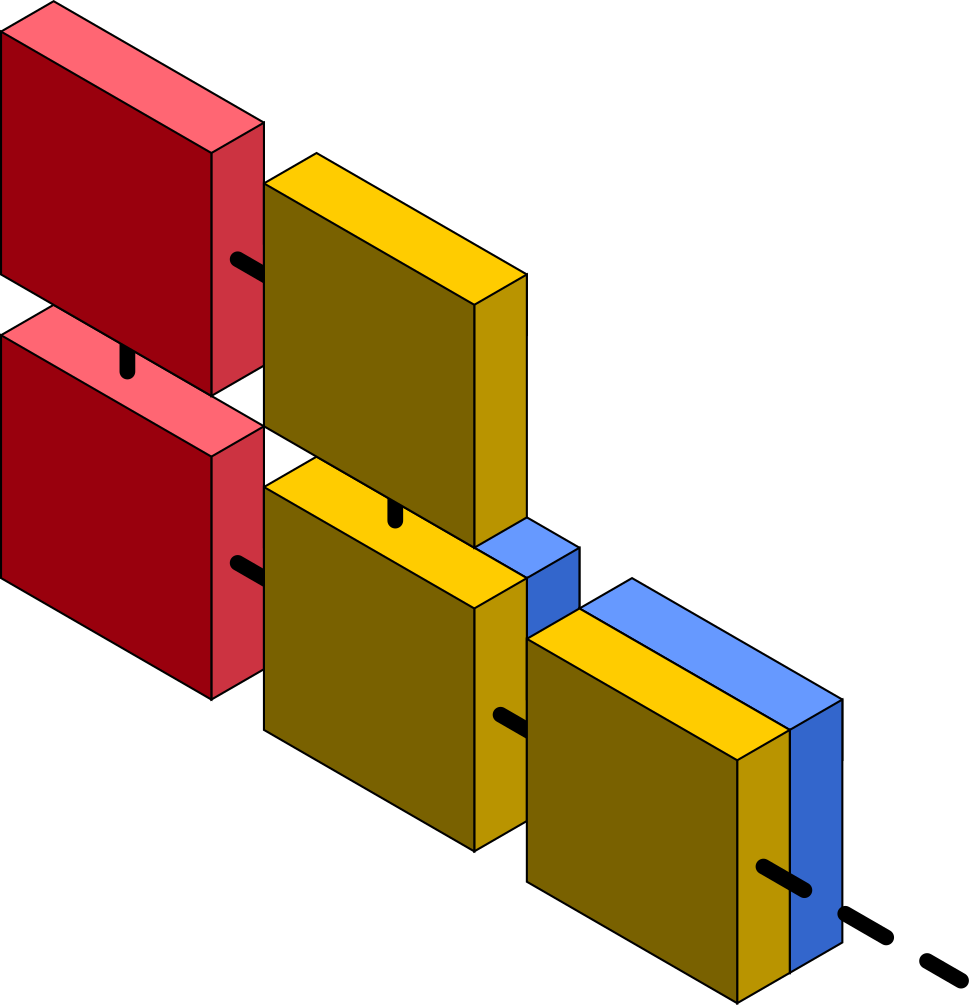}
    \caption{\label{fig:SR-copy-while-translating}}
    
    \end{subfigure}
    \caption{
    (a) Initial genome replicator tiles. Note that $\otimes \otimes$ represents a two strength 1 glues which are on the full face of the seed tiles opposite from the reader (b) 
    Illustration of an arbitrary translation process occurring at the same time as genome replication. Red tiles are representative of $\varphi$, gold tiles are representative of $\sigma$, and blue tiles are representative of $\mu$.}
    \label{fig:SR-replication}
    
    \vspace{-10px}
\end{figure}

% The process of replicating a genome $\sigma$ into a new copy $\sigma^\prime$ is carried out left to right, initiated by a fuel assembly which is jettisoned after all tiles in $\sigma^\prime$ are connected with strength 2. This allows for the genome to be copied without itself being used up, leading to exponential growth. An example of generating a copy of $\sigma$ is available in Figure \ref{fig:seed-copy-steps}. Full detail is available in Section~\ref{sec:detail-genome}, and the full tileset can be found in .

\update{
    The replication process of $\sigma$ begins with the attachment of tiles from the set $T_{\sigma}$ to $\sigma$ due to the two strength-1 glues on the north face of individual tiles comprising $\sigma$.
    We denote the incomplete copy of $\sigma$ as $\sigma^\prime$.
    Asynchronously, a fuel tile assembly $\varphi$ comprised of two subtiles $\varphi_1, \varphi_2 \in T_\phi$ binds to the leftmost tile of $\sigma$.
    Upon the binding of a start tile to the north thin face of the start tile of $\sigma^\prime$, the signal provided by $\varphi$ begins a chain reaction binding to the the active `n' glue on the west thin face of the newly attached tile and the signal propagates through the chain of connected $\sigma^\prime$ tiles. Once the end tile $E_\sigma$ is bound to the remainder of $\sigma^\prime$ by the active `n' glue, it returns a signal through its newly activated west glue to fully connect it to the prior tile and then detach from the genome to the south. This signal cascades back through the remaining tiles of $\sigma^\prime$ until reaching $\varphi$, at which point $\varphi$ deactivates its glues. allowing the newly replicated copy of $\sigma$ to separate and begin the process of replicating itself and translating copies of $\mu$.
}
\begin{figure}[ht]
    \centering
    \begin{subfigure}[b]{0.45\textwidth}
        %\centering
    \includegraphics[height=3.0cm]{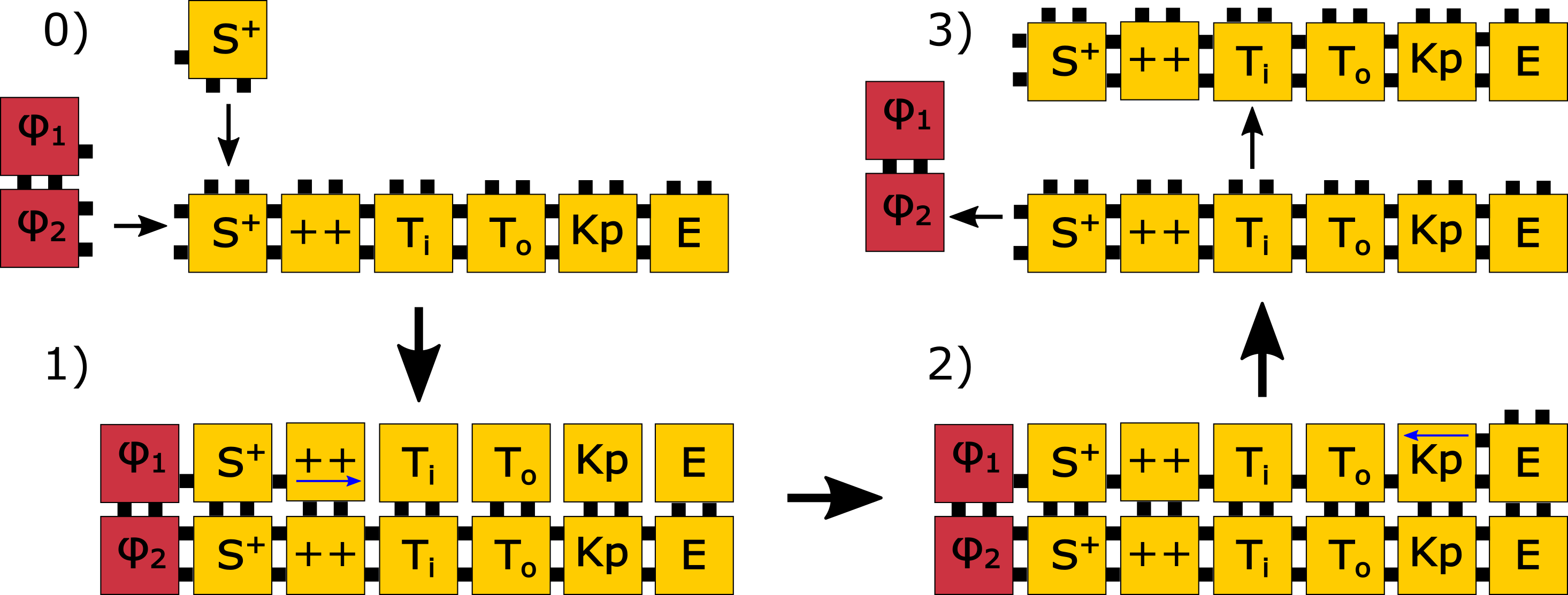}
    \caption{\label{fig:seed-copy-steps}}
    \end{subfigure}
    %\vspace{10pt}
    % \vspace{3pt}
    \begin{subfigure}[b]{0.32\textwidth}
        \centering
        \includegraphics[height=4.2cm]{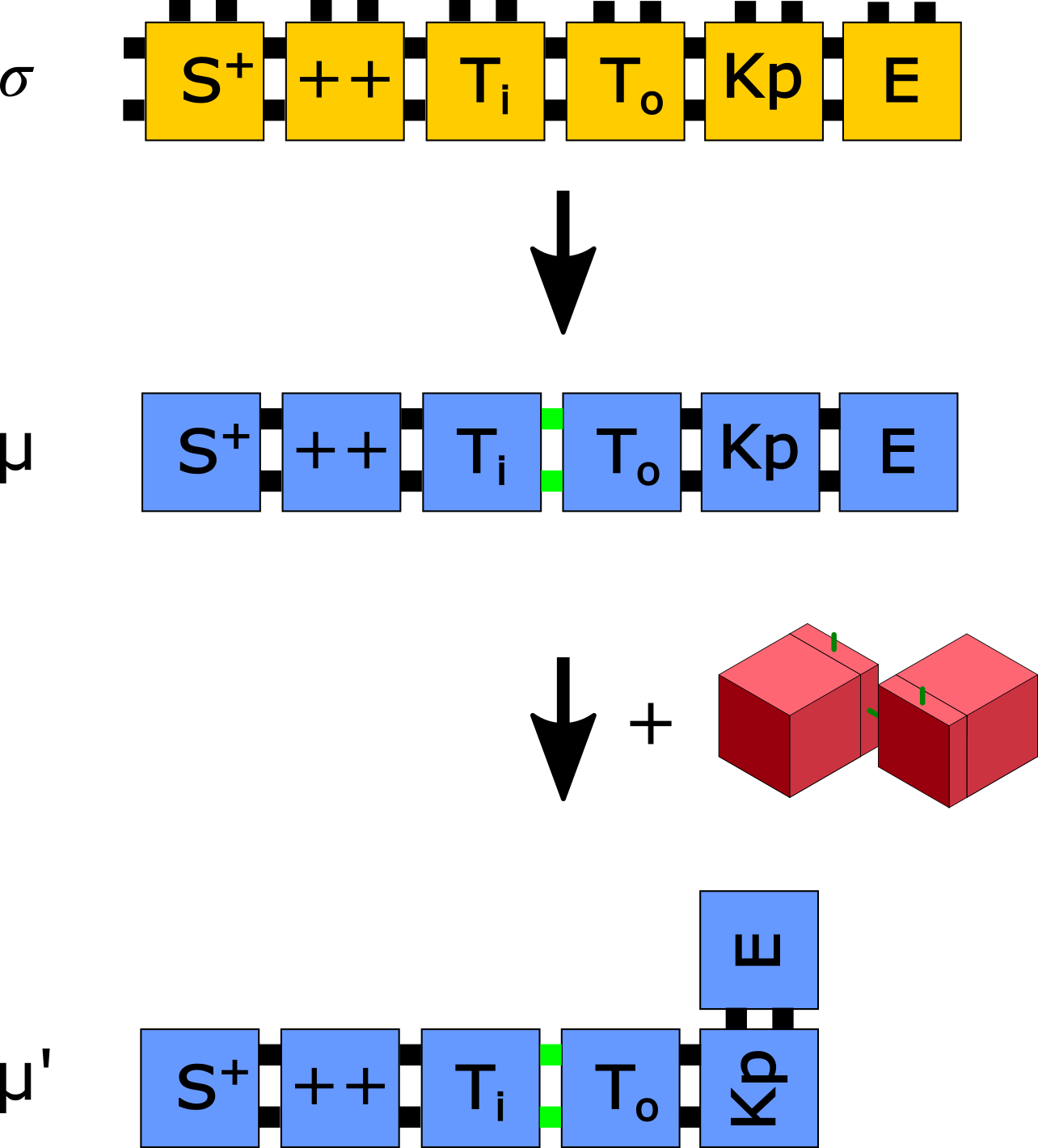}
    \caption{\label{fig:SR_translation}}

    \end{subfigure}
    \caption{(a) In step 0 (before replication begins) both fuel and tiles from $T_\sigma$ bind to $\sigma$. Step 1 indicates the fuel tile binding with the leftmost $S^+$ tile in $\sigma^\prime$, propagating the binding of tiles from west to east indicated by blue arrow on the ++ tile. Step 2 begins after all $\sigma^\prime$ glues are bound by strength-1, leading to the propagation of a second glue binding $\sigma^\prime$ from east to west. Additionally, glues on the north face of $\sigma^\prime$ tiles are activated and glues on the south face binding to $\sigma$ are deactivated once they have a strength-2 connection to. Step 3 demonstrates the detachment - once the second glue binds to the fuel duple ($\varphi_1, \varphi_2$) signals propagate to detach from $\sigma$ and $\sigma^\prime$. 
    (b) Process of translation: the information encoded in $\sigma$ is copied to $\mu$ by a mapping of tiles via glue domains. Green glues on $\mu$ and $\mu^\prime$ are flexible. One kink-ase (red) is used to convert $\mu$ to $\mu^\prime$}
    \vspace{-10px}
\end{figure}

\vspace{-10pt}
\subsection{Translation of $\sigma$ to $\mu$}

\emph{Translation} is defined as the process by which the Hamiltonian path encoded in $\sigma$ is built into a new messenger assembly $\mu$. Since the signals to attach and detach $\mu$ from $\sigma$ are fully contained in the tiles of $T_{\mu}$, translation continues as long as $T_{\mu}$ tiles remain in the system. We note that the translation process can occur at the same time as $\sigma$ is replicating. This causes no unwanted geometric hindrances as demonstrated in Figure \ref{fig:SR-copy-while-translating}.

\vspace{-10pt}
\subsubsection{Placement of $\mu$ tiles}

Messenger tiles from the set $T_\mu$ attach to $\sigma$ as soon as complementary glues on the back flat face of $\sigma$ are activated after the binding of $\varphi$ to $\sigma^\prime$. The process of building $\mu$ does not require a fuel structure to continue, as the messenger tiles have built-in signals to deactivate the glues on $\mu$ which attach $\mu$ to $\sigma$. This allows for a genome to replicate the messenger structure without itself being consumed in any manner.

Each genome tile contains two active strength-1 glues on its full face which are mapped to a single messenger tile type. Messenger tiles from the set $T_\mu$ attach to $\sigma$ as soon as complementary glues on the back flat face of $\sigma$ are activated after the binding of the fuel duple $\varphi$ to $\sigma^\prime$. The process of building $\mu$ does not require a fuel structure to continue, as the messenger tiles have built-in signals to deactivate the glues on $\mu$ which attach $\mu$ to $\sigma$. This allows for a genome to replicate the messenger structure without itself being consumed in any manner. Once a flat tile in $\mu$ is bound to its eastern neighbor, signals are fired from the eastern glues to deactivate the glue connecting $\mu$ to $\sigma$. This leaves $\mu$ as its own separate assembly when every tile has attached to its neighbor(s). The example of translation shown in Figure \ref{fig:SR_translation} illustrates that the same information (i.e., sequence of tiles representing a Hamiltonian path) remains encoded in $\mu$, but allows for new structural functionality that would otherwise not be possible by $\sigma$.

%As $\mathcal{R}$ has $\tau = 2$, two glues must bind east thin face of one tile to the west thin faces of the succeeding tile: (1) a general binding glue type `m' and (2) a placement helper glue of type `-' or `+'. 

%The two possible placement signals are `-' and `+'. The impact of this signal upon $\pi$ is shown in Figure \ref{fig:minustile}.\todo{It looks like Figure 5 is referenced in the text before Figure 4, so maybe they should be swapped to move 5 closer to this text?}

%`-' indicates the prior messenger was placed on the face and a `+' indicates the prior voxel was placed on the back of the prior tile. [TODO: come up with better terminology/look at what Matt/Daniel have done to describe each block].

%Once these signals have been activated, all the intermediate messenger tiles propagate the signal forwards upon recognition based upon their placement protocol.\todo{MJP: what is this protocol? Would it be correct to say that this protocol will be explained in the next section, or somewhere else?} 

%The end tile then begins the process of converting $\mu$ into $\mu^\prime$ by the activation of its `p' glue on the west thin face. Before translation begins, we note that both $\sigma$ and $\mu$ are identical in \todo{Before it is transformed, is $\mu$ pretty similar in structure (not signals) to a copy of $\sigma$? If so, that might be worth noting.}

\vspace{-10pt}
\subsubsection{Modification of $\mu$ to $\mu^\prime$}\label{sec:details-kinkase}

The current shape of $\mu$ is such that it could only replicate a trivial 2D structure; $\mu$ must be modified to follow a Hamiltonian path in 3 dimensions as made possible by a set of \emph{turning} tiles. Additionally, in the current state of $\mu$ no cubic tiles can be placed as all the glues which are complementary to cubic tiles are currently in the \texttt{latent} state. Once a glue of type `p' is bound on the start tile, we then consider $\mu$ to have completed its modification into $\mu^\prime$. The `p' glue on turning tiles can only be bound once they have been turned, and as such the turning tiles present in $\mu^\prime$ must be turned before assembly of $\pi$ begins.
%For incremental and turning tiles, this signal is directly passed from the side attached to the following tile to its predecessor in the Hamiltonian path.

Turning tiles modify the shape of $\mu$ by adding `kinks' into the otherwise linear structure by the use of a fuel-like structure called a \emph{kink-ase}. The kink-ase structure is generated from a set of 2 flat tiles and 2 cube tiles. These tiles must first fully bind to each other before connections can be made to a turning tile. The unique form of kink-ase allows for the orientation of two adjacent tiles to be modified without separating $\mu$, shown in Figure \ref{fig:linear-to-kin}. The turning tiles are physically rotated such that the connection between a turning tile and its predecessor along the west thin edge of the turning tile is broken, and then reattached along either the up or down thin edge of the turning tile. Each turning tile requires the use of a single kink-ase, which turns into a junk assembly. 
% The turning process is described in depth in Section \ref{sec:details-kinkase}, with the overall process in relation to $\sigma, \mu, \mu^\prime$ shown in Figure \ref{fig:SR_translation}. 

\begin{figure}[ht]
    \centering
    % \begin{subfigure}{\linewidth}
    %     \centering
    \includegraphics[width=\linewidth]{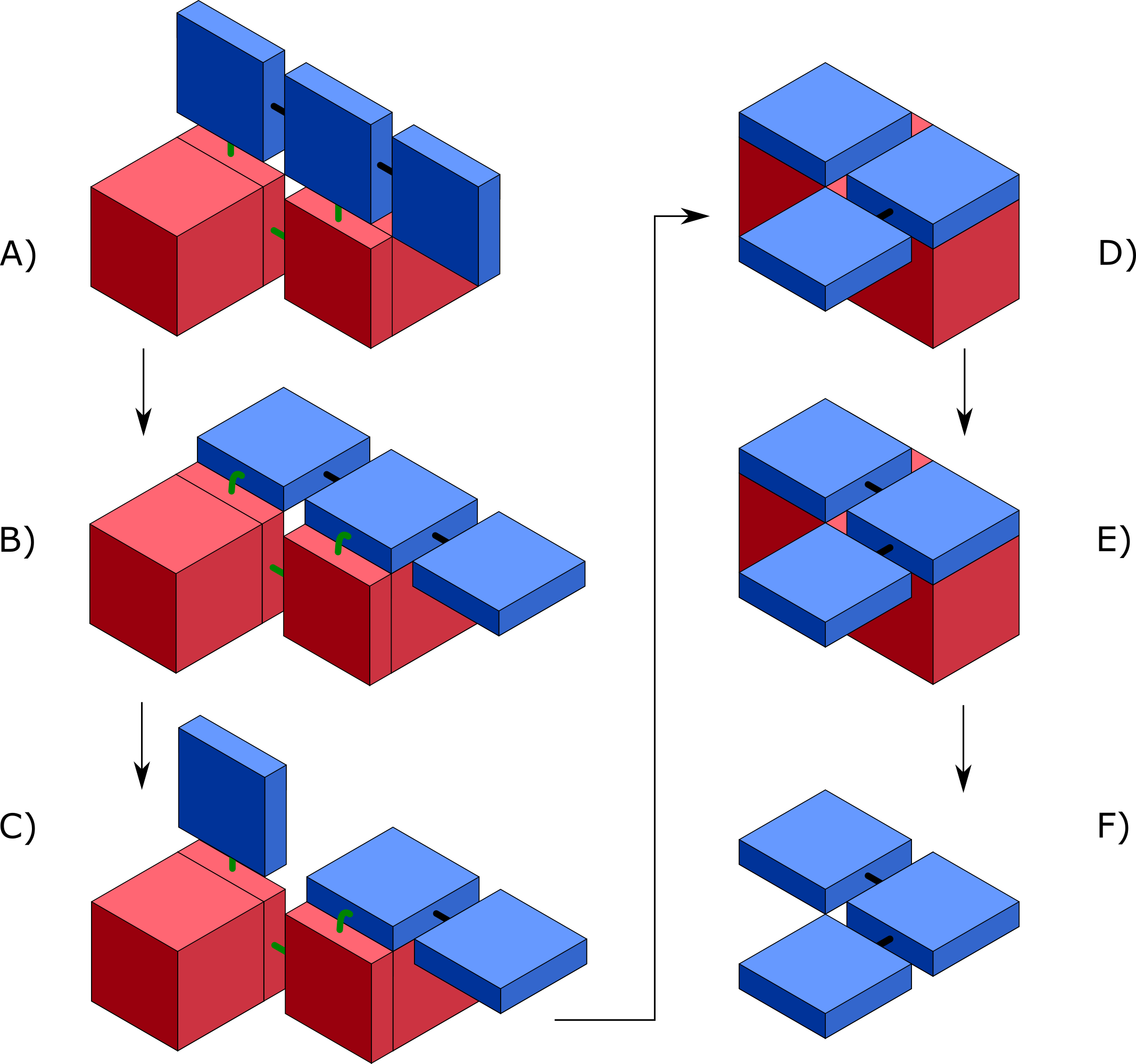}
    % \end{subfigure}
    % \hspace{0.04\textwidth}
    % \begin{subfigure}{\linewidth}
    %     \centering
    %     \includegraphics[height=5cm]{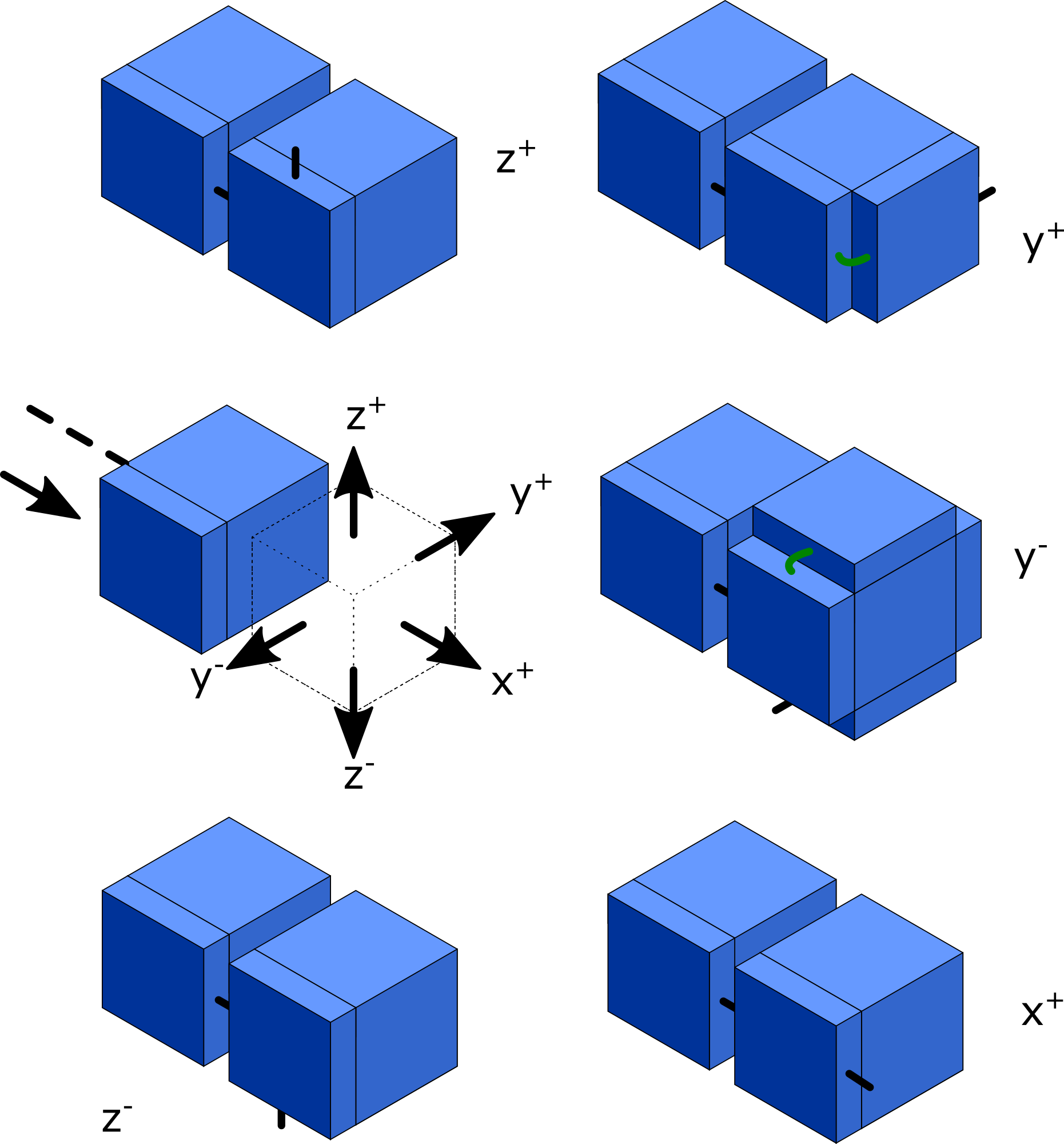}
    % \caption{  \label{fig:SR-correctness}}
 
    % \end{subfigure}
    \caption{Conversion of one turning tile. Blue tiles indicate $\mu$, whereas the red indicate the kink-ase.
    % (b) The inductive steps required in the creation of $\pi$ which follows a Hamiltonian path given by a $\sigma$. The arrow going into the flat tile is the direction taken by the Hamiltonian path in the prior tile addition step. The five arrows indicate possible directions for the direction of the Hamiltonian path after the placement of the transparent cubic tile.
    }
    \label{fig:linear-to-kin}
    \vspace{-10px}
\end{figure}

\update{
    We now describe in detail how $\mu$ is converted to $\mu^\prime$ utilizing the kink-ase structure, with the steps in this section matching up with the intermediate structures shown in Figure~\ref{fig:linear-to-kin}.
    \begin{enumerate}
        \item[A)] kink-ase attaches to a turning tile and the predecessor which will be re-oriented in $\mu$. Simultaneously, glues are activated on the kink-ase cube structure attached to the turning tile to bind the turning tile face and to the kink-ase cube structure attached to the predecessor tile to enable the folding of the cube structure in step D). Note - glues connecting tiles in $\mu$ may be either rigid or flexible depending upon the Hamiltonian path generated for $\pi$. This does not effect any intermediate steps presented.  
        \item[B)] The turning tile's rear face binds to the kink-ase due to random movement allowed by the flexible glues which attach the kink-ase to the turning and predecessor tiles, i.e. the flexible bond allows the tile to rotate and randomly assume various relative positions. When it enters the correct configuration, the glues bind to ``lock it in''.
        \item[C)] Upon connection of turning tile face to kink-ase cube, a signal deactivates the rigid glue attaching the predecessor tile to the turning tile. A signal activates glues on the exposed face of the kink-ase tile attached to cube and turning tile structure. The flexible connection between the predecessor tile and kink-ase ensures $\mu$ does not split into two pieces.
        \item[D)] Kink-ase cube and kink-ase tile with activated glue bind on faces when they rotate into the correct configuration, bringing the turning tile into correct geometry with the predecessor tile. The kink-ase cube face adjacent to the predecessor tile activates its glue, allowing for binding with the face of the two. The flexible glue allows for random movement for the complementary glues to attach and bind. Concurrently, the flexible glue on the turning tile is deactivated and a rigid glue of similar type to the turning tile glue deactivated in step C) is activated. 
        \item[E)] A rigid glue between the turning tile and predecessor tile binds, leading to re-connection between both prior detached portions of $\mu$. Activation of the final glue leads to the turning tile signaling to kink-ase to detatch from $\mu$. 
        \item[F)] This structure represents $\mu$ after one turning tile has been resolved. A completion signal is passed through glues attaching the turning tile and predecessor tile. This process continues for all turning tiles serially, working backwards from the termination tile. This is to prevent any interference between structures incurred by multiple adjacent turning tiles.
    \end{enumerate}
}
\vspace{-15pt}
% \subsection{Detailed assembly of \texorpdfstring{$\pi$}{pi}}
% Figure \ref{fig:piconstruction} shows the detailed process of a step of the assembly of a structure $\pi$. Additionally, Figure \ref{fig:minustile} demonstrates the action of a `+' or `-' glue binding - we reflect the full face on which a cubic tile is bound.
\begin{figure*}
    \centering
    \begin{subfigure}{0.6\textwidth}
        \includegraphics[width=1.0\textwidth]{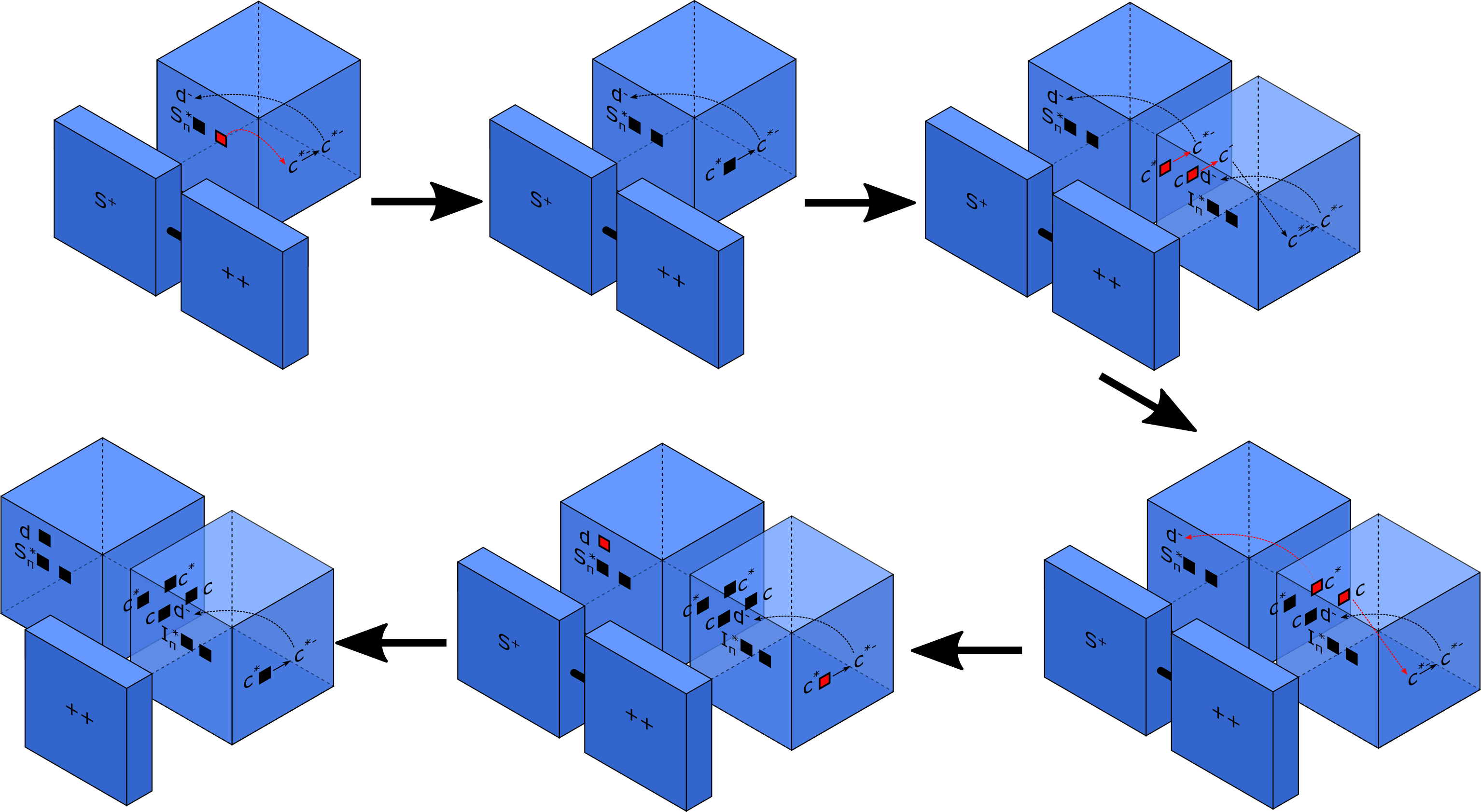}
        \caption{ \label{fig:assembling-pi}}
    \end{subfigure}
    \hspace{0.04\textwidth}
    \begin{subfigure}{0.2\textwidth}
        \includegraphics[width=1.0\textwidth]{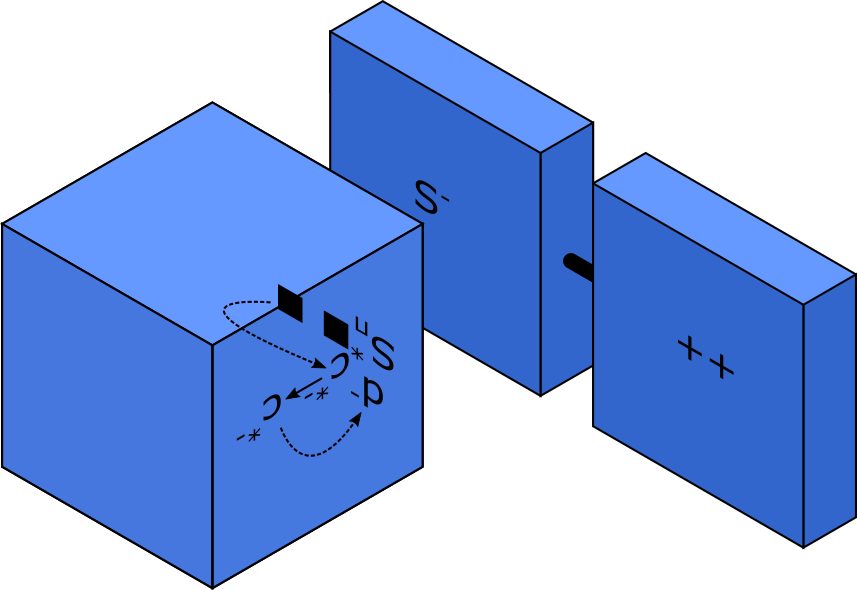}
        \caption{\label{fig:minustile}}
    \end{subfigure}
    \caption{(a) The process of assembling $\pi$ from $\mu^\prime$. Spaces between flat tiles and cubic tiles are exaggerated for illustrative purposes. The red arrows and squares demonstrate the signals propagating through adjacent tiles to solidify connections between two successive cubic tiles in the Hamiltonian path of a phenotype. At the final step of this figure, The cubic tile associated with the ``++'' tile of $\mu^\prime$ has its $c^*$ glue in the \texttt{on} state (b) Demonstration of cube tile placement utilizing $S^-$ genome tile in place of $S^+$ - this causes the glues on the flat tile to place the cube tile on the opposite full face of $\mu^\prime$}
    \label{fig:piconstruction}
\end{figure*}
%Once the correct orientation has been achieved by a turning tile, the `p' glue is able to bind to the the turning tile's predecessor. After all turning tiles have been modified correctly the `p' glue on the the start tile is bound, $\mu$ has been correctly modified into $\mu^\prime$.

% \begin{figure}[ht]
%     \centering
%     \begin{subfigure}{0.46\textwidth}
%         \includegraphics[width=1.0\textwidth]{}
%         \caption{\label{fig:kink-ase}}
 
%     \end{subfigure}
%     \hspace{0.04\textwidth}
%     \begin{subfigure}{0.46\textwidth}
%         \includegraphics[width=1.0\textwidth]{figures/SR-linear-to-kinky.png}
%         \caption{  \label{fig:linear-to-kin}}
%     \end{subfigure}
%     \caption{(a) Kink-ase structure. Note, the visible green glues are flexible, (b) Conversion of one turning tile. Blue tiles indicate $\mu$, whereas the red indicate the kink-ase. Black connections are rigid glues, green connections are flexible glues. Note that one connection is only shown, but it represents a strength two connection between the tiles/cubes. In structure A) from left to right the blue tiles are: predecessor, turning, and successor tiles}
  
% \end{figure}

\vspace{-10pt}
\subsection{Assembly of $\pi$}

At the end of translation, two strength-1 glues complementary to tiles in $T_\pi$ are active on all tiles of $\mu^\prime$. The only cubic tile which starts with two complementary glues \texttt{on} is the start cubic tile. Once this cubic tile is bound to the start tile, a strength-1 glue of type `c' is activated on the cube. This glue allows for the cooperative binding of the next cubic tile in the Hamiltonian path to the superstructure of both $\mu^\prime$ and the first tile of $\pi$. 

After this process continues and a cubic tile is bound to both its neighbors (or just one neighbor in the case of the start and end tiles) with strength 2, a `d' glue is activated on the face of the cubic tile bound to $\mu^\prime$. This indicates to the flat tile of $\mu^\prime$ that the cube tile is fully connected to its neighbors with strength 2. To prevent any hindrances to the placement of any cubic tiles in $\pi$, the flat tile jettisons itself from the remaining tiles of $\mu^\prime$ by deactivating all active glues and becoming a junk tile\footnote{Due to the asynchronous nature of signals, there may be instances which the addition of cubic tiles of $\pi$ are temporarily blocked. These will be eventually resolved, allowing assembly to continue.}. This process is repeated, adding cube by cube until the end tile in $\mu^\prime$ is reached. Once the end cube has been added to $\pi$, it has shape $S^2$ and $\mu^\prime$ has been disassembled into junk tiles. An example process is shown in Figure \ref{fig:SR_translation}, with a detailed step-by-step visualization of glue activation shown in Figure~\ref{fig:piconstruction}. Additionally, Figure~\ref{fig:minustile} demonstrates how tiles are placed on the opposite side of the genome.

\begin{figure}[ht]
    \centering
    \includegraphics[width=\linewidth]{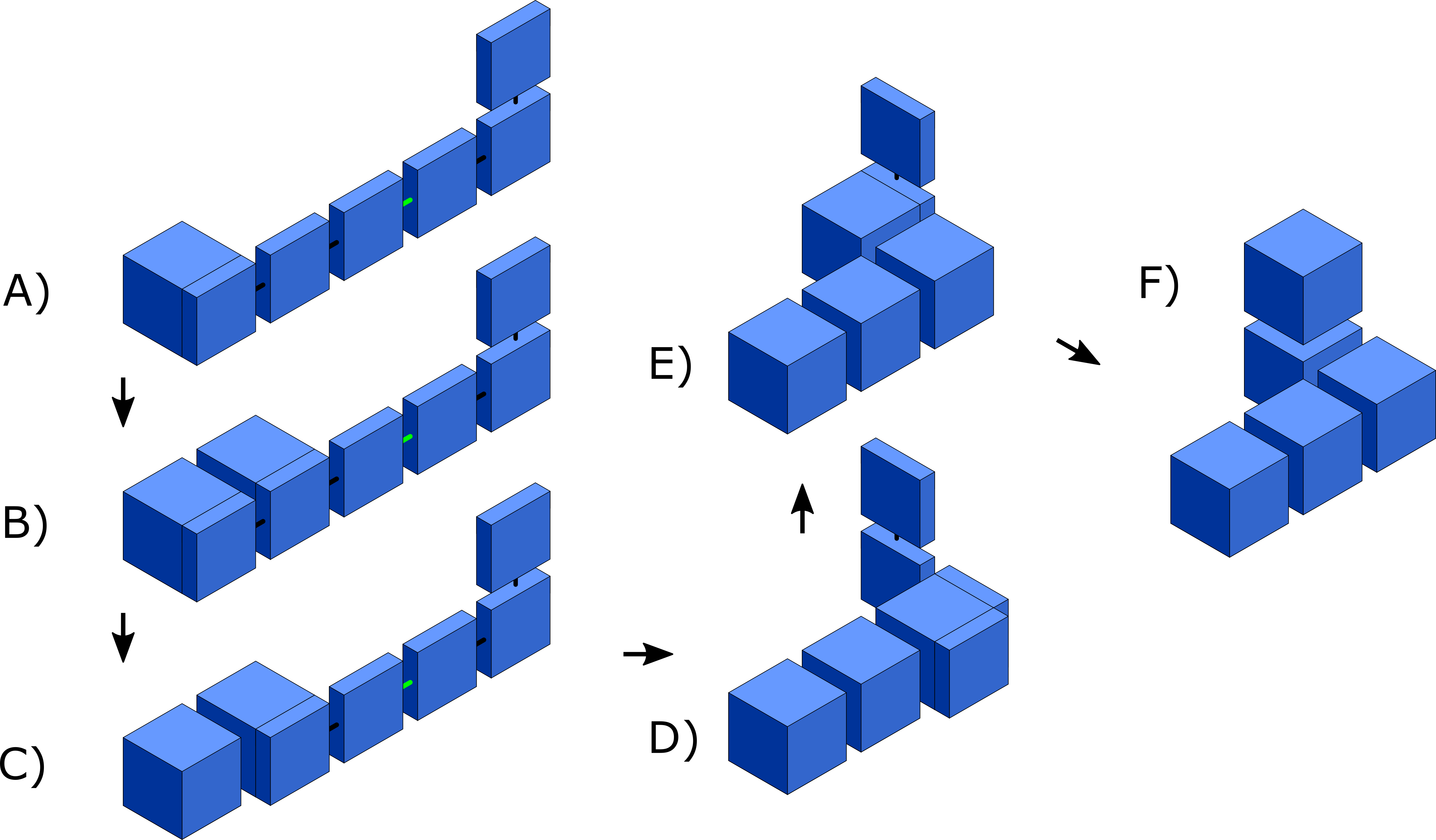}
    \caption{Building $\pi$ from $\mu^\prime$ (same as in Figure \ref{fig:SR_translation}). After the start cube binds to $\mu^\prime$ in step A, the process of assembling $\pi$ successively adds cubic tiles then detaches flat tiles from $\mu^\prime$. Step F is phenotype $\pi$ originally encoded by $\sigma$.}
    \vspace{-10px}
\end{figure}

\update{
    % \vspace{-10pt}
    \subsection{Tiles of \texorpdfstring{$T$}{T}}
    \label{sec:tiles}
    We provide the enumerated sets of tiles in this section which provide for the dynamics as described in the prior sections.
    
    \vspace{-10pt}
    \subsubsection{\texorpdfstring{$T_\sigma$}{T sigma}}
    As shown in Figure \ref{fig:genome_tiles}, all tiles except for the end tile have the same structure of signals and glues, where the glues are a specific mapping to tiles in $T_\mu$. Glues which bind between $T_\sigma$ and $T_\mu$ have the $\mu$ subscript in the glue description. Glues without the $\mu$ subscript bind between the north and south glues of tiles in $T_\sigma$.
    
    \vspace{-10pt}
    \subsubsection{$T_\mu$}
    \begin{figure*}
        \centering
        \includegraphics[width=0.95\textwidth]{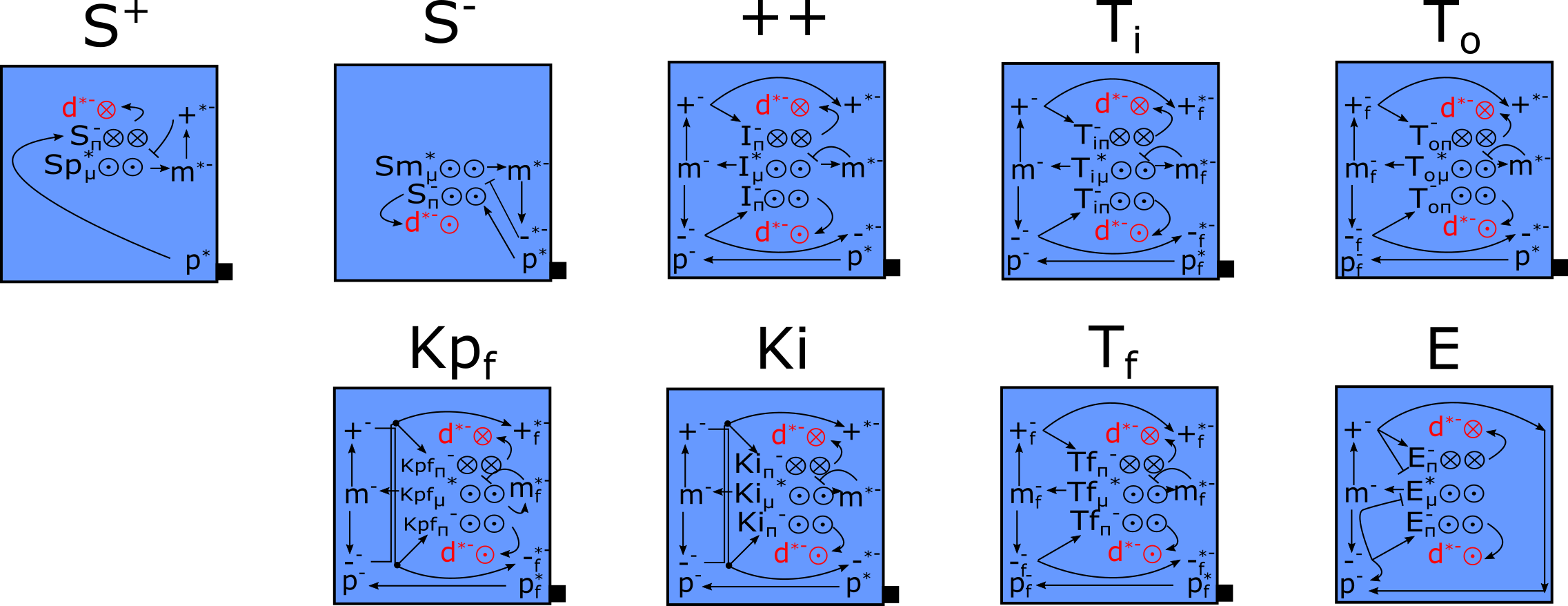}
        
        \caption{Messenger tile types (non-kink). Note that the red `d' glues have deactivation signals to all glues on the tile, but are omitted for visual clarity. This turns the messenger tile into a `junk' product.}
        \label{fig:start-tile-vox}
        \vspace{-10pt}
    \end{figure*}

    The tiles presented in Figure \ref{fig:start-tile-vox} represent the base tiles which make up a messenger sequence. Any glue which contains an `f' subscript is a flexible glue. The tile denoted $Ki$ is a placeholder for both $Kp$ and $Km$ tiles, where all glues which contain an `$i$' can be replaced with $p$ or $m$, respectively. All of the tiles aside from $T_i, T_f, Kp_f \text{ or } E$ can be a predecessor to a turning tile. This requires additional glues and signals in order to attach to a kink-ase structure. These modifications are shown in Figure \ref{fig:kink-mods}, and we note that these glues and signals overlay on top of the tiles in Figure \ref{fig:start-tile-vox}; glues not used in the turning process are omitted. The tiles to the right indicate the specific glues and signals for the $Kp,\:Km$ tiles. The tiles to the left indicate the specific glues and signals which must be present on the predecessor tiles to $Kp$ or $Km$. We note that $Kp$ and $Km$ can also be modified with the tiles on the left hand side. In the case of either two $Kp$ or $Km$ tiles in a row, it is required to leave the flexible glues $f_f,g_f$ \texttt{on} instead of \texttt{off} when  the `p' glue on the east side of a tile is bound.
    
    We note that the modifications require a mapping of a specific glue from $T_\sigma$ to $T_\mu$. This is accomplished by adding an additional `m' or `p' to the glue based upon the modification made. Glue which connect $T_\mu$ and $T_\pi$ have the subscript $\pi$.
    
    \begin{figure}[ht]
        \centering
        \includegraphics[width=0.67\linewidth]{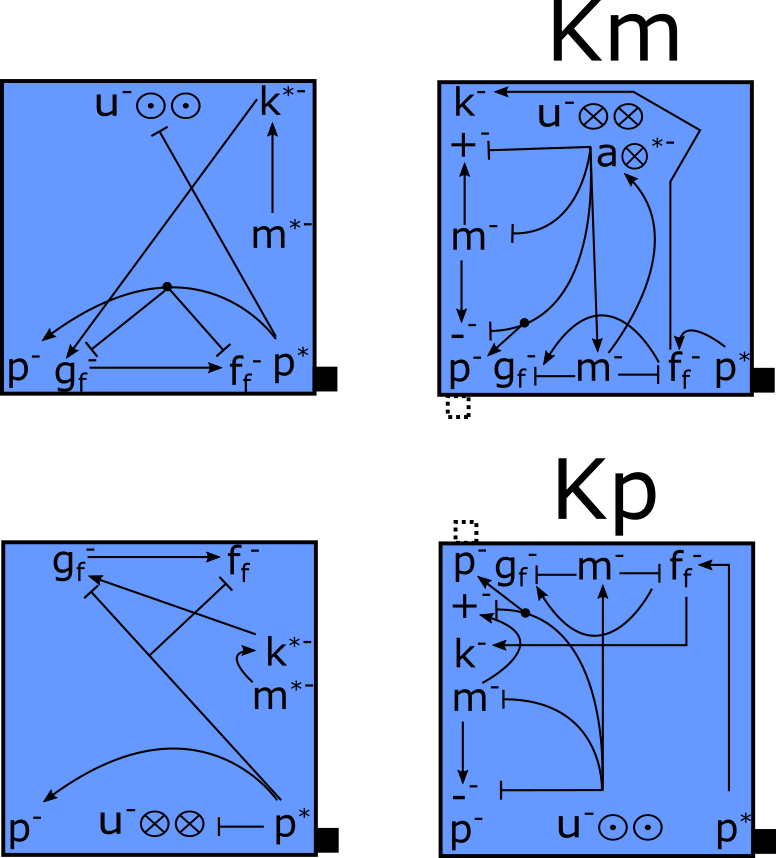}
        
        \caption{Tile modifications for use with kink-ase. Note that the dashed square indicates the face that the `p' glue is attached}
        \label{fig:kink-mods}
        \vspace{-5pt}
    \end{figure}
    
    \subsubsection{$T_\phi$}
    \begin{figure}[ht]
        \centering
        \includegraphics[width=0.99\linewidth]{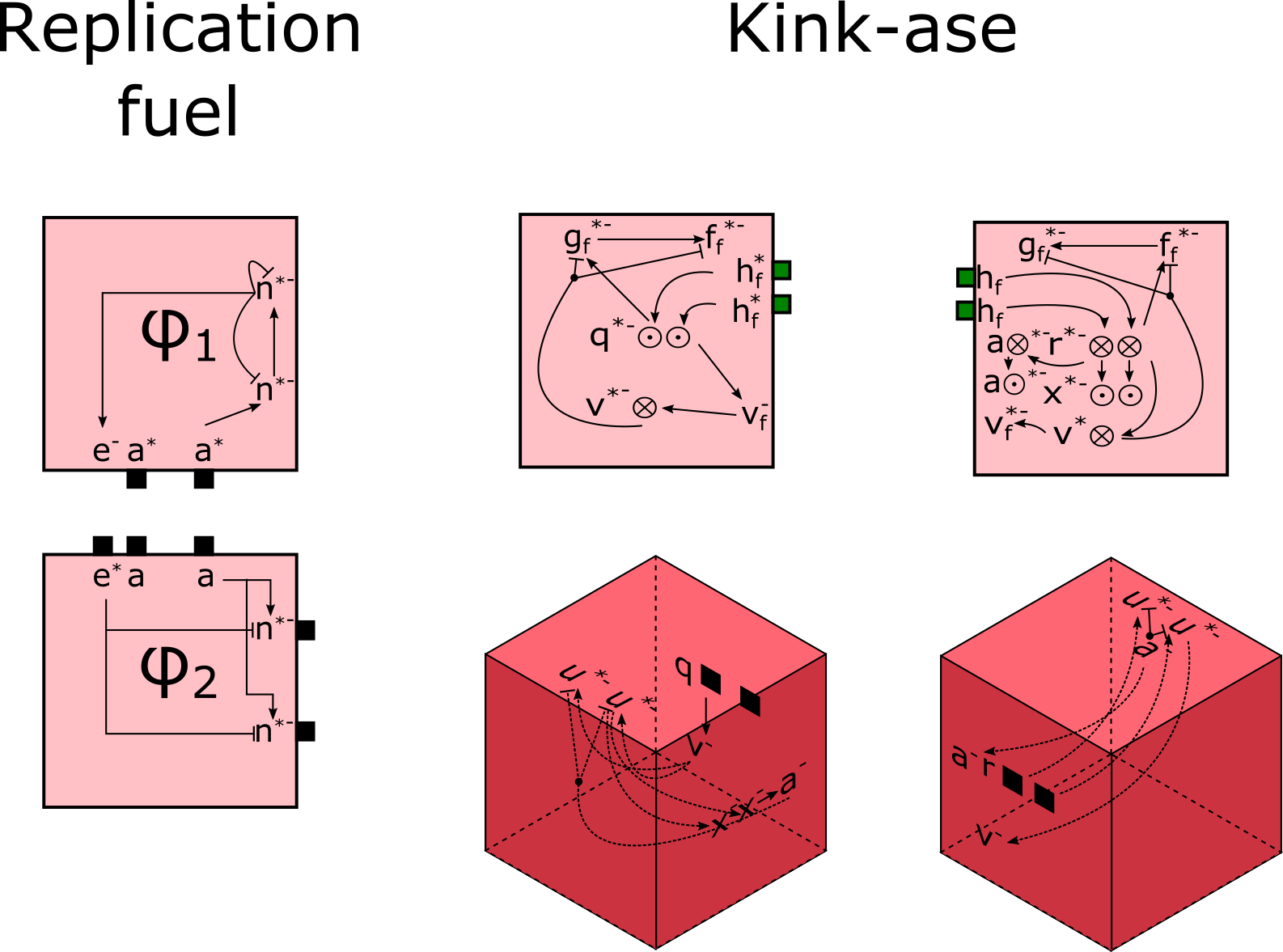}
        
        \caption{Fuel tiles. Tiles on left utilized during replication of $\sigma$, tiles on right combine to form kink-ase structure}
        \label{fig:fuel-tiles}
        \vspace{-10pt}
    \end{figure}
    The tiles presented in Figure \ref{fig:fuel-tiles} are those that cause the replication of $\sigma$ and form kink-ase. The kink-ase tiles first combine to form supertiles of size 4 as shown in Figure \ref{fig:linear-to-kin}. These supertiles are then able to perform the designated functions of the kink-ase. Similarly, the tiles $\varphi_1$ and $\varphi_2$ combine to a supertile in before replication of $\sigma$ can begin.
    %All structures generated by these tiles must first create their sub-assembly before becoming capable of carrying out their designated functionality.
    
    \vspace{-10pt}
    \subsubsection{$T_\pi$}
    The tiles $T_\pi$ are the structural blocks which recreate a desired shape given an input genome. Two strength 1 glues of the type `c' bind the final structure between cubic tiles in the Hamiltonian path dictated by $\sigma$.
    \begin{figure}[ht]
        \centering
        \includegraphics[width=\linewidth]{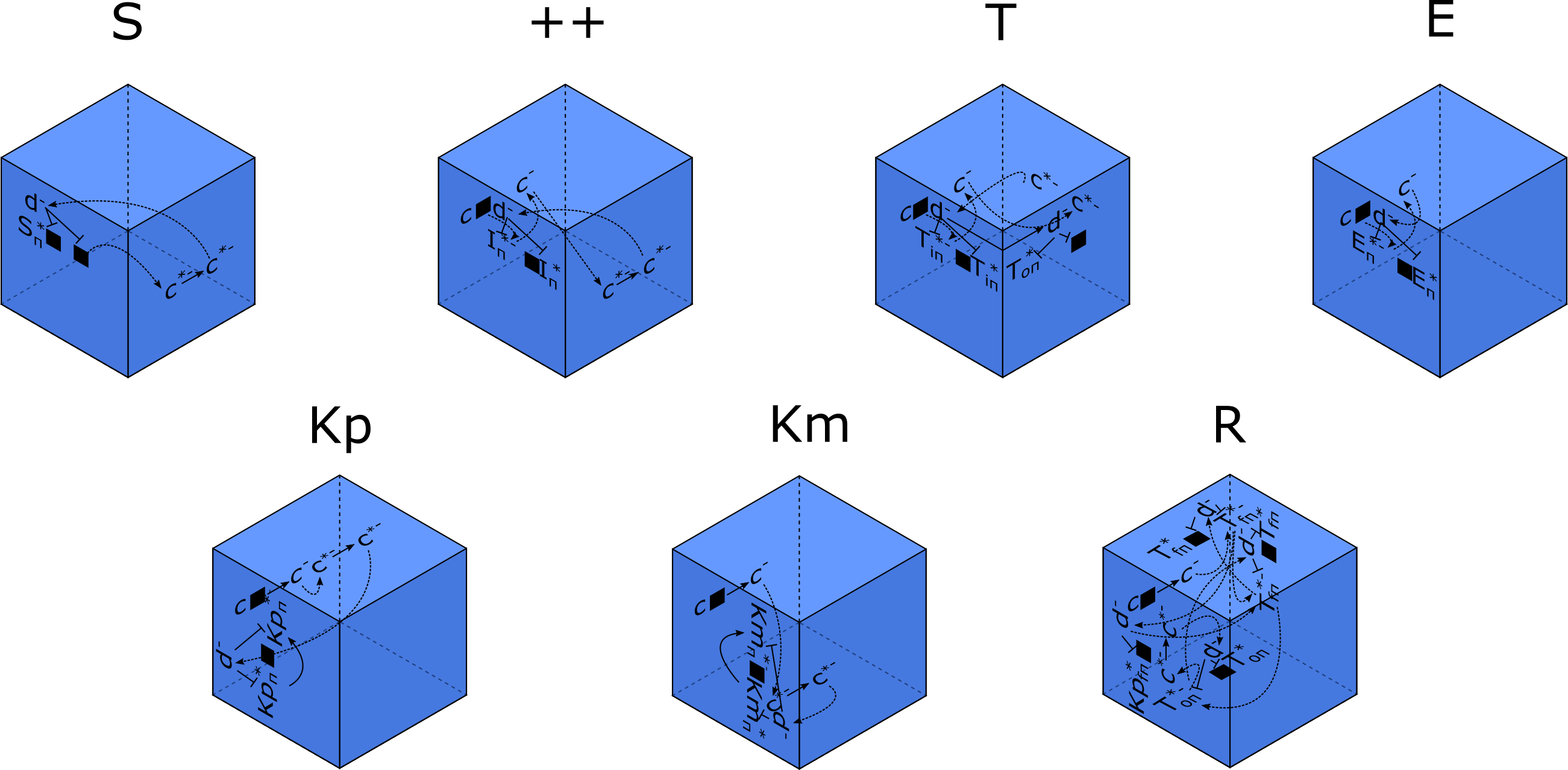}
        
        \caption{Structural tiles which create the assembly $\pi$. Note that the `R' tile has a second $Kp^*_{f\pi}$ glue activated, however is omitted for visual purposes.}
        \label{fig:struct-tiles}
        \vspace{-10pt}
    \end{figure}
}
% Issues to think about:
% \begin{enumerate}
%     \item We want to make sure that partially completed copies cannot interfere with each other by binding together. I think a solution to this is the use of ``input'' and ``output'' glues, i.e. since the ordering of growth is fixed by the Hamiltonian path, it can be known in advance whether, at the time that a block is attaching, each neighbor to which it ultimately binds is already in place. Use ``output'' glues facing any neighbor location where the neighbor won't already be attached, and ``input'' glues otherwise. [This won't be an issue; we will make all steps complete before the next step]
% \end{enumerate}

\vspace{-10pt}
\subsection{Analysis of \texorpdfstring{$\mathcal{R}$}{R} and its correctness}

\begin{figure}[ht]
    \centering
    % \begin{subfigure}{\linewidth}
    %     \centering
    % \includegraphics[height=5cm]{figures/SR-linear-to-kinky.png}
    % \caption{\label{fig:linear-to-kin}}
    % \end{subfigure}
    % \hspace{0.04\textwidth}
    % \begin{subfigure}{\linewidth}
        \centering
        \includegraphics[width=\linewidth]{figures/SR-correctness.png}
    \caption{  \label{fig:SR-correctness}}

    % \end{subfigure}
    \caption{
    % (a)  Conversion of one turning tile. Blue tiles indicate $\mu$, whereas the red indicate the kink-ase. (b) 
    The inductive steps required in the creation of $\pi$ which follows a Hamiltonian path given by a $\sigma$. The arrow going into the flat tile is the direction taken by the Hamiltonian path in the prior tile addition step. The five arrows indicate possible directions for the direction of the Hamiltonian path after the placement of the transparent cubic tile.}
    \vspace{-10pt}
\end{figure}

%It has been shown that any shape at scale 2 has a Hamiltonian path which exists through it.\todo{Reference where that was shown.}
%We discuss the details of how the Hamiltonian path for an arbitrary scale 2 phenotype is generated in Section \ref{sec:hampath}. We demonstrate that the STAM* system $\mathcal{R} = (T,\sigma,2)$ self-assembles $S^2$ given a Hamiltonian path through a structure encoded in a genome assembly $\sigma$. We utilize the $\mu^\prime$ generated from $\sigma$ to designate the placement of cube tiles.

%\begin{theorem}\label{thm:universal-constructor}
%Any shape $S \subset \mathcal{Z}^3$ comprised of unit cubes can be generated at scale 2 by a universal self-replicating $STAM^*$ system $R = (T,\sigma,2)$ 
%given a Hamiltonian path $H$ encoded in a valid seed assembly $\sigma$.
%\end{theorem}

\begin{theorem}\label{thm:universal-constructor}
There exists an STAM* tile set $T$ such that, given an arbitrary shape $S$, there exists STAM* system $\mathcal{R} = (T,\sigma,2)$ and $S^2$ self-assembles in $\mathcal{R}$ with waste size $4$.
\end{theorem}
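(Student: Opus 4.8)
The plan is to establish the two requirements of self-assembly with waste size $4$ separately: first, that a terminal assembly of shape $S^2$ is producible in $\mathcal{R}$ (the correctness/existence direction), and second, that every terminal assembly of $\mathcal{R}$ either has shape $S^2$ or consists of at most four tiles (the waste bound). I would begin by fixing the universal tile set $T = T_\sigma \cup T_\mu \cup T_\varphi \cup T_\pi$ described above --- crucially independent of $S$ --- and specifying the seed $\sigma$ as the genome obtained by running the Hamiltonian-path algorithm of Section~\ref{sec:hampath} on $S^2$, choosing the path to terminate on an exterior cube and reversing it, then encoding the resulting sequence of straight/turn instructions into the glues of a linear chain of flat tiles from $T_\sigma$. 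Scaling to $S^2$ is precisely what guarantees such a path exists, so this step legitimizes the encoding for an \emph{arbitrary} $S$ while using the single tile set $T$.

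For the existence direction I would exhibit one valid production sequence reaching $\pi$ and then verify it is terminal. This traces the four phases already described: (i) a fuel assembly from $T_\varphi$ drives a left-to-right copy of $\sigma$ into a new $\sigma'$ and is jettisoned, leaving $\sigma$ intact so that genomes are never consumed; (ii) messenger tiles of $T_\mu$ bind to the back faces of a genome and then self-detach into a free messenger $\mu$; (iii) kink-ase assemblies rotate the turning tiles of $\mu$ about their flexible glues to realize the encoded three-dimensional path, producing $\mu^\prime$; and (iv) cubic tiles of $T_\pi$ attach cooperatively along $\mu^\prime$, each jettisoning the corresponding flat tile once it is strength-$2$ bound to its path neighbors, until the end cube completes $\pi$. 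I would then check that the completed $\pi$ is rigid, is in one-to-one and onto correspondence with the cubes of $S^2$ under some rotation (hence has shape $S^2$), and is terminal: all its exposed glues are in the \texttt{off} or \texttt{latent} state, no complementary \texttt{on} glues remain unbound, and no signals are in the \texttt{firing} state.

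The main work, and the main obstacle, is the waste bound. Here I would argue that every producible assembly falls into one of a small number of classes and that the only \emph{terminal} ones are $\pi$ (shape $S^2$) and bounded junk. The spent fuel and jettisoned flat tiles are singletons or small constant-size fragments, while the largest junk class is the spent kink-ase built from two flat and two cube tiles, which yields the bound of $4$. The delicate part is ruling out \emph{spurious} terminal assemblies: I must show that the glue and signal design prevents any unintended aggregation (no two singletons or partial structures bind except along an intended pathway), and that every genome-type, messenger-type, or partially-assembled-phenotype intermediate is \emph{non-terminal} --- that is, it always admits a further binding, break, or signal-firing step until it either becomes $\pi$ or decomposes into bounded junk. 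Because signal firing is asynchronous, the subtle point is that a temporarily stalled configuration (for example, a cube whose placement is blocked pending a deactivation elsewhere) still has some enabled STAM* action and is therefore not terminal; I would formalize this by verifying, for each intermediate class, that at least one of the four STAM* actions is always applicable, invoking the enforced $\mu \rightarrow \mu^\prime \rightarrow \pi$ ordering to preclude geometric deadlocks. This exhaustive case analysis over the reachable configuration classes is where essentially all of the difficulty resides.
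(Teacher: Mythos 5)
Your proposal follows essentially the same route as the paper: a universal tile set $T = T_\sigma \cup T_\mu \cup T_\varphi \cup T_\pi$ independent of $S$, a seed genome encoding a Hamiltonian path through $S^2$ obtained from the scale-2 algorithm (reversed so it ends on an exterior cube, which is exactly how the paper guarantees diffusion at every stage), the $\sigma \to \mu \to \mu' \to \pi$ pipeline, and the spent kink-ase (two flat plus two cubic tiles) as the source of the waste bound of $4$. The one substantive difference is where the weight of the argument sits. You locate ``essentially all of the difficulty'' in the terminality and waste analysis, and you defer the shape-correctness of $\pi$ to a one-line check of a bijection with the cubes of $S^2$. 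The paper does the opposite: its proof \emph{is} that bijection check, carried out by induction along the Hamiltonian path with a base case (start flat tile and its cube) and an inductive step that performs a five-way case analysis on the relative direction of the next cube --- $x^+$, $y^+$, $y^-$, $z^+$, $z^-$, with $x^-$ excluded because it is occupied by the previous cube --- identifying for each direction the specific combination of messenger tile types that realizes it (e.g.\ a single $++$ tile for $x^+$, a single $Kp$ or $Km$ turning tile plus one kink-ase for $z^\pm$, and a four-tile gadget $Kpf, T_f, T_f, T_o$ for the awkward $y^-$ case where the path doubles back through the plane of the flat tile), re-orienting the frame of reference after each placement. If you intend your ``check the correspondence'' step to be rigorous you will need to supply exactly this enumeration, since the claim that the fixed tile set can encode \emph{every} local configuration of an arbitrary 3D Hamiltonian path is the theorem's real content; conversely, the waste and terminality argument you emphasize is handled in the paper only by appeal to the stated design invariants (genomes are never consumed; the strict $\mu \to \mu' \to \pi$ ordering precludes geometric deadlock), so your more careful classification of terminal assemblies would, if carried out, be a strengthening rather than a restatement.
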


% We \update{first} provide a high level overview of the correctness proof.
% We demonstrate inductively that the construction process of an assembly $\pi$ correctly generates a structure of shape $S^2$, as shown in Figure \ref{fig:SR-correctness}.
% The intuition is that at each step in the Hamiltonian path, there exists some combination of flat tiles which can correctly orient the placement of every cubic tile in the Hamiltonian path.
% This overall set of tiles are encoded in $\sigma$, demonstrating the ability of $\mathcal{R}$ to replicate arbitrarily many copies of $S^2$.

% \subsection{Proof of Theorem \ref{thm:universal-constructor}}
% \label{sec:univ-cons}

\update{
    We prove Theorem \ref{thm:universal-constructor} via induction. 
    Our base case is the start flat tile and its associated cube.
    Our inductive step is the addition of a cube and a direction associated with the next step of the Hamiltonian path within $S^2$.
    This direction is provided by the successor tile in $\mu^\prime$, and all possible directions are enumerated in Figure \ref{fig:SR-correctness}.
    At each step, we place a cubic tile in its associated direction based upon the flat tile in $\mu^\prime$.
    We analyze the possible direction of placement.
    Since $\mu$ is a translation of $\sigma$, $x^-$ is not included as it is the location of the prior cubic tile.
    As a note, the directions provided in the proof reflect those indicated in Figure \ref{fig:SR-correctness}, not necessarily the absolute reference of the entire system.
    Additionally, as our genome $\sigma$ has a Hamiltonian path ending on an exterior face of $S$, we can guarantee that diffusion is possible for a tile at any stage of construction
    
    %When adding a second tile, we can guarantee that any second block in the Hamiltonian path from our beginning will be in the same direction as the glue from the start tile. The intuition behind this is that any two points create a line.
    
    \begin{itemize}
        \item[$x^+$:] This placement and output direction is carried out by the ++ tile type - the cubic tile is placed in the existing direction of travel
        \item[$y^+$:] This correlates to the $T_i$ and $T_o$ tile type. 
        \item[$y^-$:] This case is the most complex; we are changing the direction of travel in a direction which takes us through the tile of $\mu^\prime$. This requires the use of the following 4 tiles: $Kpf,T_f,T_f,T_o$. This could also be completed with a set of 3 tiles $Kp, Km, Km$, however this increases fuel usage per $y^-$ from 1 to 3, and overall tile usage from 8 to 19 when including all the singleton tiles utilized to create the kink-ase structures consumed by the 3 turning tiles.
        \item[$z^-$:] A single $Km$ tile carries out this tile placement and path change. Note, the prior flat tile must additionally be modified to carry out the turning action by the kink-ase.
        \item[$z^+$:] A single $Kp$ tile carries out this tile placement and path change. Note, the prior tile must additionally be modified to carry out the turning action by the kink-ase.
    \end{itemize}
    
    After the addition of a tile, we re-orient the frame of reference to align with that shown in Figure \ref{fig:SR-correctness}.
    The last tile in the Hamiltonian path will not have a new direction - this is indicated by the end tile.
    We have then generated the structure $S^2$ utilizing $R$.
    
    \subsubsection{STAM* Metrics of \texorpdfstring{$R$}{R}}
    The STAM* metrics of $R$ follow from the tileset found in Section~\ref{sec:tiles}: 
    \begin{itemize}
        \item Tile complexity $= 57$
            \begin{itemize}
                \item $|T_\sigma|=22$  
                \item $|T_\mu|=22$
                \item $|T_\pi|=7$
                \item $|T_\phi|=6$
            \end{itemize}
        \item Tile shape complexity $= 2$
        \item Signal complexity $= 7$
        \item Seed complexity $= O(n)$; each cube in the phenotype must be placed by a tile, with some requiring multiple (e.g. turns). As described above, for any structure with greater than 2 tiles we end up with the following number of tiles in $\sigma$ based upon the changes in directions which must occur: ``start tile'' $+$ ``end tile'' $+ |z^+| + |z^-|+ 2|y^+|+4|y^-|+|x^+|$.
    \end{itemize}

}

%All but one of the cases is straightforward - $y^-$ seems to indicate that a flat tile is present on the face where the succeeding cubic tile in $\pi$ is bound. However, this tile is sent a deactivation signal, such that all of its glues will be placed in the \off/ state. This will potentially lead to a temporary delay in the binding of $\pi$ tiles due to the asynchronous nature of actions. After some finite number of steps such that all other outstanding signals have been fired we can guarantee that flat tile will detach, allowing for the continuation of the building process.

\section{A Self-Replicator that Generates its own Genome}
\label{sec:deconstruct}

In this section we outline our main result: a system which, given an arbitrary input shape, is capable of disassembling an assembly of that shape block-by-block to build a genome which encodes it. We describe the process by which this disassembly occurs and then show how, from our genome, we can reconstruct the original assembly. Here we describe the construction at a high level.
% The technical details for this construction can be found in the appendix in Section \ref{sec:deconstruct-append}. 
We prove the following theorem by implicitly defining the system $\mathcal{R}$, describing the process by which an input assembly is disassembled to form a ``kinky'' genome which is then used to make a copy of a linear genome (which replicates itself) and of the original input assembly.

\newcounter{tempcounter}
\setcounter{tempcounter}{\value{theorem}}

\begin{theorem}\label{thm:deconstructor}
There exists a universal tile set $T$ such that for every shape $S$, there exists an STAM* system $\mathcal{R} = (T,\sigma_{S^2},2)$ where $\sigma_{S^2}$ has shape $S^2$ and $\mathcal{R}$ is a self-replicator for $\sigma_{S^2}$ with waste size 2.
\end{theorem}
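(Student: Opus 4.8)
The plan is to build the universal tile set $T$ so that it simultaneously supports a \emph{deconstruction} phase and the entire replication machinery of Theorem~\ref{thm:universal-constructor}, and to take $\sigma_{S^2}$ to be a rigid assembly of shape $S^2$ whose cubic tiles carry, baked into their glue labels and signal sets, (i) a Hamiltonian path through $S^2$ that ends on an exterior cube (obtained from the algorithm of Section~\ref{sec:hampath}, using the scaling to $S^2$ to guarantee such a path exists), and (ii) the deconstruction signals needed to peel the assembly apart in reverse order along that path. The high-level strategy is that the seed is first ``read off'' into a linear genome of exactly the kind consumed in Section~\ref{sec:simple-replicator}, after which that section's dynamics take over to replicate the genome exponentially and grow fresh copies of the phenotype.

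First I would specify the deconstruction dynamics. Starting from the exterior endpoint of the Hamiltonian path (the only cube whose ``detach'' glue is initially active, which is also the only tile guaranteed free to diffuse away), each cube in turn detaches from its path-successor and, as it leaves, exposes a glue that attaches a flat messenger tile recording the local direction of the path; these flat tiles chain together into a kinky genome of the form $\mu^\prime$. This is precisely the reverse of the $\mu^\prime \to \pi$ assembly step of Section~\ref{sec:simple-replicator}, so the same per-step signal bookkeeping guarantees that a cube may detach only after its successor has been recorded, which both prevents premature fragmentation of $S^2$ and forces the flat chain to be built in the correct order. I would then straighten this kinky genome into a linear genome $\sigma$ by running the kink-ase process in reverse, yielding an assembly identical in form to the seed genomes of Theorem~\ref{thm:universal-constructor}.

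With a linear genome in hand, I would reduce to the setting of Section~\ref{sec:simple-replicator}: the genome replicates without ever being consumed (hence exponentially), and each copy translates to $\mu$, modifies to $\mu^\prime$, and assembles a phenotype $\pi \approx \sigma_{S^2}$. This gives the required unbounded growth in the count of assemblies $\approx \sigma_{S^2}$, since the single seed bootstraps one genome and thereafter genomes and phenotypes are produced without bound. The distinction between $\approx$ and $\equiv$ is exactly what makes this consistent: the produced copies $\pi$ have their signals in spent (\texttt{post}) states and so are \emph{terminal} and do not re-deconstruct, whereas the seed carries fresh deconstruction signals; the two agree up to tile type but not state. For the waste bound I would enumerate every fragment that can become terminal without having shape $S^2$ --- spent kink-ase pieces, detached cubes, and the flat tiles jettisoned during assembly --- and verify from the tile design that each is dissolved to at most two tiles before going inert. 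This is where the factor-$4$ waste of Theorem~\ref{thm:universal-constructor} must be tightened to $2$, most plausibly by adding deactivation signals that fully dissolve each fuel and kink-ase structure rather than jettisoning it intact.

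The main obstacle will be showing that the asynchronous, nondeterministic dynamics admit no spurious terminal assembly exceeding the waste bound. Concretely I would need: (a) that deconstruction cannot deadlock, so the seed is always fully converted into a genome rather than stalling as a partial fragment; (b) that free tiles and loose fuel pieces cannot combine into illicit stable structures, which I would enforce by making every productive glue interaction cooperative and strictly sequenced by signals, as in Section~\ref{sec:simple-replicator}; and (c) that the flexible geometry never blocks a required detachment or attachment --- the glue-length conventions of Figure~\ref{fig:glue-lengths} should again suffice, but the deconstruction direction introduces new intermediate configurations whose clearance must be checked independently. Establishing (a)--(c) together with the refined waste accounting is the technical heart of the argument.
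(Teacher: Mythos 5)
Your overall architecture matches the paper's: deconstruct the seed phenotype along an embedded Hamiltonian path into a kinky genome, straighten it into a linear genome via reverse kink-ase, replicate that, and rebuild phenotypes. However, there is a genuine gap at the step you describe as ``precisely the reverse of the $\mu^\prime \to \pi$ assembly step.'' Deconstruction is \emph{not} a straightforward reversal of that process, and treating it as such skips the technical heart of the paper's argument. In the forward direction of Section~\ref{sec:simple-replicator}, the kinky messenger $\mu^\prime$ is built first as a free-floating structure and cubes attach to it from outside; in the reverse direction, each flat genome tile must attach \emph{in place} to a face of a structural tile that is still embedded in the phenotype, and that face's adjacent location is frequently occupied by another phenotype cube that has not yet been removed. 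The paper handles this with a four-part invariant (the \emph{safe disassembly criterion}) maintained inductively across iterations, a six-case analysis of the local structure of the Hamiltonian path around the current tile (Lemma~\ref{lem:DR_all_cases}), and auxiliary \emph{utility} and \emph{filler} genome tiles: in the blocked cases the genome must be temporarily affixed to a different part of the phenotype so that the blocking structural tile can be removed first, and sometimes two structural tiles must detach in a single iteration. Your obligations (a) and (c) correctly identify that deadlock and geometric clearance are the risks, but you offer no mechanism to discharge them, and the naive reversal does deadlock in exactly the blocked configurations (the paper's cases 3, 5, and 6). Relatedly, the paper keeps the partially built genome dangling off a single structural tile by flexible glues, detaching each previous genome tile from the structure as it goes, precisely so that the growing genome never obstructs later attachments; this is absent from your description.

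A secondary issue is the waste bound. Because you route reassembly through the full Section~\ref{sec:simple-replicator} pipeline ($\sigma \to \mu \to \mu^\prime \to \pi$), you inherit its junk: kink-ase assemblies of size~4 and the two-tile fuel structure $\varphi$, which is why Theorem~\ref{thm:universal-constructor} has waste size~4. You flag that these must be dissolved to size at most~2 but give no construction. The paper avoids the problem differently: its reassembly is the direct reverse of its own disassembly (reusing the same six cases), and the only waste exceeding a single tile is the pairs of filler genome tiles shed in cases 5 and 6, which is exactly where the waste bound of~2 comes from. Your route could be made to work, but it requires redesigning the kink-ase and fuel detachment, whereas the paper's route gets the bound essentially for free from its case analysis.
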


%\subsection{The Phenotype and Genome}

In this construction, there are two main components which here we call the \emph{phenotype} and the \emph{kinky genome}.

\update{
    Given a shape $S$, the phenotype $P$ will be a 2-scaled copy of the shape, so that each cube in $S$ corresponds to a $2\times 2\times 2$ block of tiles in $P$. The shape of the phenotype will therefore be identical to $S$ modulo our small, constant scale-factor. $P$ will be made up of tiles from some fixed $\STAMstar$ tile system $\mathcal{T}$ which we will define in more detail later.
    
    Let $H$ be a Hamiltonian path that goes through each tile in $P$ exactly once. We will construct $H$ later, but for now assume that it exists. Each tile in $P$ will contain the following information encoded in its glues and signals.
    \vspace{-3pt}
    \begin{itemize}
        \item Which immediately adjacent tile locations belong to the phenotype
        
        \item Which immediately adjacent tile locations correspond to the next and previous points in the Hamiltonian path
        
        \item Any glues and signals necessary for allowing the deconstruction and reconstruction process to occur as described in Sections~\ref{sec:dissasembly} and~\ref{sec:reassembly}
    \end{itemize}
}
% The phenotype, which is the seed of our STAM* system, is a scale 2 version of our target shape made entirely out of cubic tiles. These tiles are connected to one another so that the assembly is $\tau$-stable at temperature 2. We require the phenotype to be a 2-scaled version of $S$ since the disassembly process requires a Hamiltonian path to pass through each of the tiles in $P$. This path describes the order in which the disassembly process will occur. Generally it is often either impossible or intractable to find a Hamiltonian path through an arbitrarily connected graph; however, using a 2-scaled shape we show that it's always possible efficiently. 
% Additionally, the tiles in the phenotype contain glues and signals that will allow the various attachments and detachments to occur in the disassembly process. The genome is a sequence of flat tiles connected one to the next, whose glues encode the construction of the phenotype. 

In our system, the genome will be constructed as the phenotype is deconstructed and then will be duplicated or used to make copies of the original phenotype. 
Throughout this section, we refer to the cubic tiles that make up the phenotype as structural tiles and the flat tiles that make up the genome as genome tiles. Additionally, the tiles used in this construction are part of a finite tile set $T$, making $T$ a universal tile set. The genome is referred to as ``kinky'' due to the fact it must contain flexible glues, in contrast to the linear genome utilized in Section~\ref{sec:simple-replicator}.

\begin{figure}
    \centering
    \includegraphics[width=\linewidth]{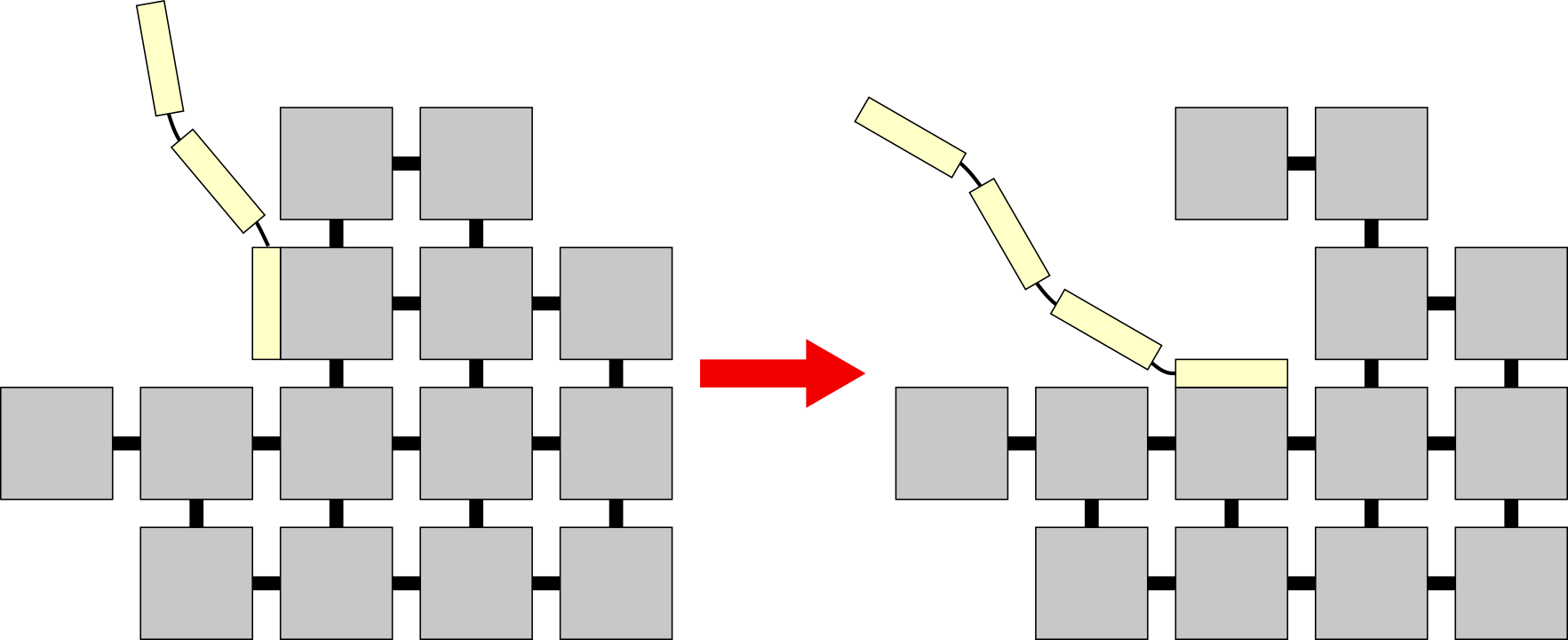}
    \caption{During disassembly, the genome will be dangling off of a single structural tile in the phenotype. In each iteration, a new genome tile will attach and the old structural tile will detach along the Hamiltonian path embedded in the phenotype.}
    \label{fig:DR_overview}
    \vspace{-10pt}
\end{figure}

\vspace{-10pt}
\subsection{Disassembly}
\label{sec:dissasembly}

\update{
    Given a phenotype $P$ with embedded Hamiltonian path $H$, the disassembly process occurs iteratively by the detachment of at most 2 of tiles at at time. The process begins by the attachment of a special genome tile to the start of the Hamiltonian path. In each iteration, depending on the relative structure of the upcoming tiles in the Hamiltonian path, new genome tiles will attach to the existing genome encoding the local structure of $H$ (to be used during the reassembly process) and, using signals from these incoming genome tiles, a fixed number of structural tiles belonging to nearby points in the Hamiltonian path will detach from $P$. A property called the \emph{safe disassembly criterion} will be preserved after each iteration assuring that disassembly can continue as described. This process will continue until we reach the last tile in the Hamiltonian path. Once the final genome tile binds to the existing genome and this final tile, signals will cause these final structural tiles to detach and leave the genome in its final state where it can be used to make linear DNA as described above or replicate that phenotype as described below.
    
    \vspace{-10pt}
    \subsubsection{Relevant Tiles and Directions}
    
    In each iteration of our disassembly procedure, indexed by $i$, we will label a few important directions and tiles which will be useful. Since our tiles in this model are not required to reside in a fixed lattice, we define our cardinal directions $\{N, E, S, W, U, D\}$ arbitrarily so that they are aligned with the faces of some arbitrarily chosen tile in our phenotype. These directions will only be used when referring to tiles bound rigidly to the phenotype so there will be no ambiguity in their use.
    
    The first tile, which we will call the \emph{previous structural tile} and write as $S^\text{prev}_i$, is the structural tile to which the genome is attached at the beginning of iteration $i$. This tile will detach from the rest of the phenotype by the end of iteration $i$. The \emph{next structural tile}, written $S^\text{next}_i$, is the structural tile to which the genome will be attached at the end of iteration $i$. Note that in some cases, this may not be the tile corresponding to the next tile in the Hamiltonian path, since we may detach more than one tile in an iteration.
    
    We will refer to the corresponding attached genome tiles accordingly and write $G^\text{prev}_i$ and $G^\text{next}_i$ respectively.
    
    The first direction, which we will call the \emph{next path direction} and write $D^p_i$, represents the direction from the previous structural tile to the next tile in the Hamiltonian path. Next, we will refer to the direction corresponding to the face of the previous structural tile upon which the previous genome tile is attached as the \emph{genome direction} and write $D^g_i$.
    
    We also define a direction called the \emph{dangling genome direction}, written $D^d_i$, relative to the previous genome tile attached to the previous structural tile. At each iteration of the disassembly process new genome tiles will attach to the existing genome and the phenotype. By the end of in iteration, the previous genome tile will have detached from the structure and the next genome tile will be attached to the next structural tile. The dangling genome direction is defined to be the direction relative to the previous genome tile in which the rest of the genome is attached.
    
    Figure \ref{fig:SDC_directions} illustrates what these directions look like in a particularly simple case.
    
    \begin{figure}
        \centering
        \includegraphics[width=0.45\linewidth]{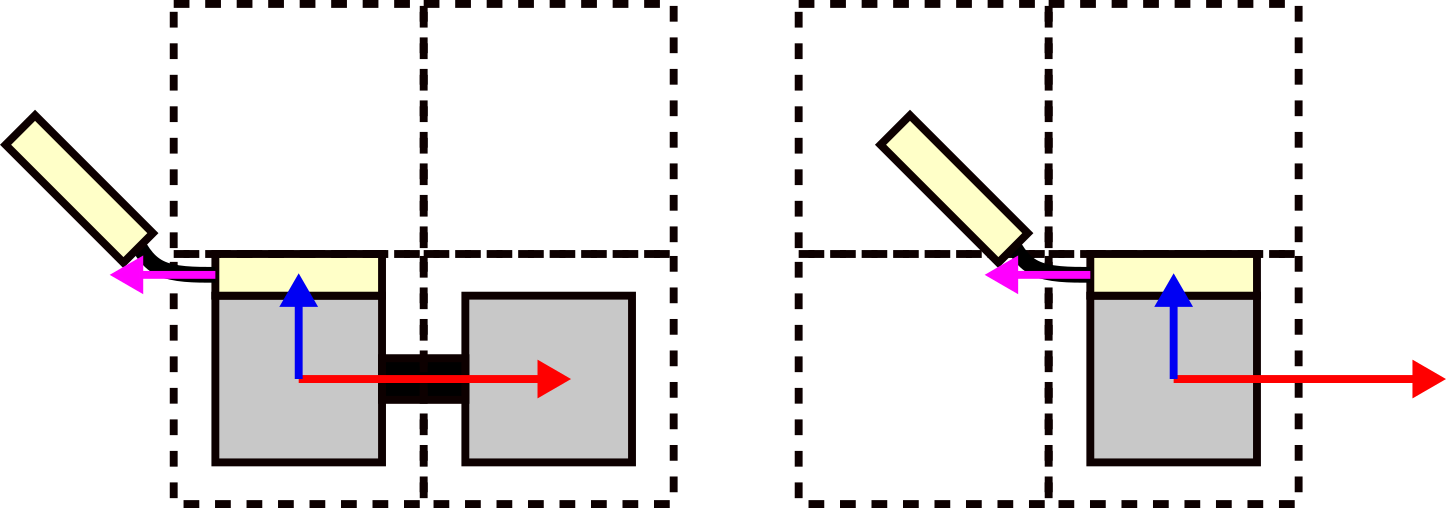}
        \caption{The relevant directions before and after an iteration of the disassembly process. The red arrow represents the next path direction, the blue arrow represents the genome direction, and the magenta arrow indicates the dangling genome direction. In this simple case the directions do not change after an iteration, but this is not always the case.}
        \label{fig:SDC_directions}
        \vspace{-10pt}
    \end{figure}
    
    \vspace{-10pt}
    \subsubsection{The Safe Disassembly Criterion}
    
    To facilitate in showing that the disassembly process works without error, we define a criterion which is preserved through each iteration of the disassembly process effectively acting as an induction hypothesis. We call this criterion, the \emph{safe disassembly criterion} or \emph{SDC}. The SDC is met exactly when all of the following are met:
    
    \begin{enumerate}
        \item There is no phenotype tile in the location location in the direction $D^g_i$ relative to the previous structural tile. This essentially means that there was room for the previous genome tile to attach to the previous structural tile.
        
        \item At the current stage of disassembly, there is a path of empty tile locations that connects the previous tile location to a location outside the bounding box of the phenotype. This condition ensures that if our path digs into the phenotype during disassembly, there is a path by which detached tiles can escape and new genome tiles can enter to attach.
        
        \item The dangling genome direction is not the same as the next path direction. This ensures that the existing genome is not dangling off of the previous genome tile in such a way that it would block the attachment of the next genome tile. This also ensures that our genome will never have to branch, though it may take turns.
        
        \item Both the previous genome tile and some adjacent structural tile are presenting glues which allow for the attachment of another genome tile.
    \end{enumerate}
    
    \subsubsection{Disassembly Cases}
    
    In each iteration of disassembly, there will be 6 effective possibilities regarding the local structure of the Hamiltonian path. Each of these possibilities will necessitate a different sequence of tile attachments and detachments for disassembly to occur. These cases are illustrated in figure \ref{fig:DR_case_enum} and described as follows.
    
    % \begin{figure}
    %     \centering
    %     \includegraphics[width=\linewidth]{}
    %     \caption{All of the essentially different cases that require a unique disassembly procedure. We orient these illustrations so that the previous genome direction is always up for convenience. Also note that we always illustrate the dangling genome direction to the left, but this need not be the case, this is just for making visualization easier. In reality, the dangling genome direction could be in any direction relative to the previous genome, so long as it satisfied the SDC condition that it is not the same as the next path direction. Gray squares represent attached structural tiles, green squares represent a location in which it does not matter if an attached structural tile exists, and empty squares represent locations in which no attached structural tile exists.}
    %     \label{fig:DR_case_enum_append}
    % \end{figure}
    
    \begin{figure}
        \centering
        \includegraphics[width=\linewidth]{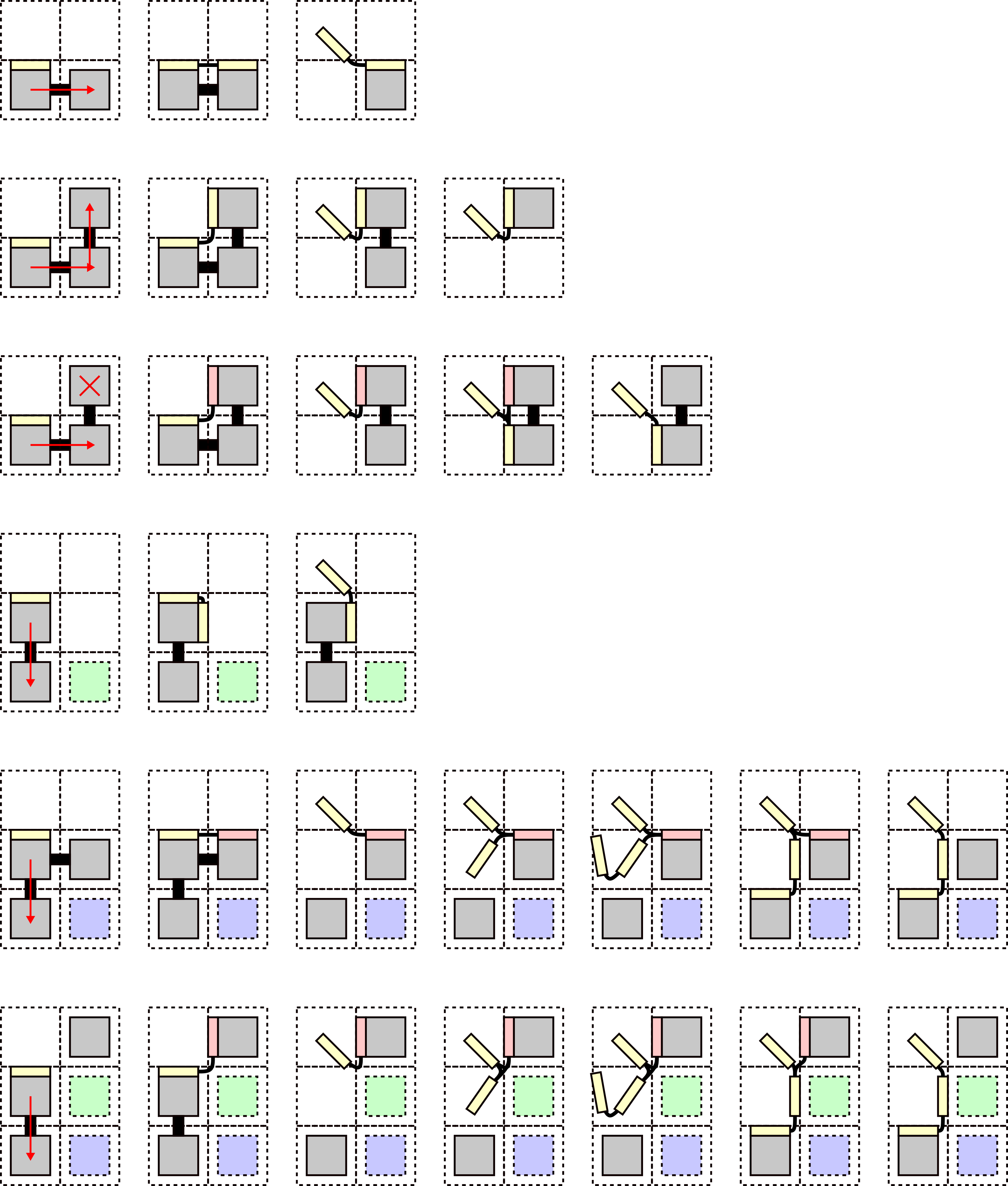}
        \caption{A side view of the disassembly process for all 6 cases. Each row is a unique case, where the leftmost image is the starting condition. We orient these illustrations so that the previous genome direction is always up for convenience. Also note that we always illustrate the dangling genome direction to the left, but this need not be the case, this is just for making visualization easier. In reality, the dangling genome direction could be in any direction relative to the previous genome, so long as it satisfied the SDC condition that it is not the same as the next path direction. Gray squares represent attached structural tiles, green squares represent a location in which it does not matter if an attached structural tile exists, and empty squares represent locations in which no attached structural tile exists.}
        \label{fig:DR_case_enum}
        \vspace{-10pt}
    \end{figure}
    
    \begin{lemma}\label{lem:DR_all_cases}
    The 6 cases illustrated in Figure~\ref{fig:DR_case_enum} are all of the possible cases for a disassembly iteration.
    \end{lemma}
    
    First note that the next path direction can either be perpendicular to the previous genome direction or not. If it is, we consider two cases. Either the tile location in the next genome direction relative to the next structural tile in the Hamiltonian path contains an attached structural tile or it doesn't. Case 1 is where it doesn't. If on the other hand it does, call the tile in that location the blocking tile; case 2 occurs when the blocking tile follows the next structural tile in the Hamiltonian path and case 3 occurs when it doesn't.
    
    Supposing that the next path direction is not perpendicular to the previous genome direction, either it's the same direction or the opposite direction. By condition 1 of the SDC, it cannot be the same direction since there can be no structural tile attached in that location so all other cases must have the next path direction opposite the previous genome direction.
    
    Now we define the working direction to be the direction opposite the dangling genome direction. This direction will be the direction in which genome tile attachments will occur during the remaining cases. Ultimately this choice is arbitrary, except that the working direction cannot be the dangling genome direction. Let location $a$ be the tile location in the working direction of the previous structural tile and location $b$ be the tile location in the opposite direction of the next path direction of location $a$. Case 4 is when neither location $a$ nor $b$ contains an attached structural tile, case 5 occurs when only location $a$ has an attached tile, and case 6 occurs otherwise.
    
    Notice that since we defined these cases by dividing the possibility space into pieces where either some condition is or isn't met, this enumeration of cases represents all possibilities, thus proving Lemma \ref{lem:DR_all_cases}.
    
    % \vspace{-10pt}
    \subsubsection{The Disassembly Process}
    
    Here we describe the disassembly process in enough detail that anyone familiar with basic tile assembly constructions should be able to derive the full details of the process without much difficulty.
    
    Before any of the iterative disassembly cases can occur, the disassembly process begins with the attachment of the initial genome tile. The structural tile corresponding to the first point in the Hamiltonian path will be presenting a strength 2 glue to which this initial genome tile can attach. At this point in the process, this will be the only tile to which anything can attach with sufficient strength. This attachment activates a signal which turns off all glues in this initial structural tile except those holding it to the initial genome tile and the next structural tile in the Hamiltonian path. Also, now that this first genome tile has attached, the next genome tile can cooperatively attach initiating the disassembly process so that in the first iteration, the initial genome tile acts as the previous genome tile and the structural tile to which it's attached acts as the previous structural tile.
    
    In each following iteration, once complete, what used to be called the next structural tile and next genome tile become the previous structural tile and previous genome tile for the next iteration and any relevant directions in the next iteration are specified relative to these new previous tiles.
    
    Each of the cases as described above makes use of a unique sequence of tile genome attachments and signals; however, much of the logic in each of the cases is the same. We will describe two of the cases in greater detail than the rest, specifically cases 1 and 3, since understanding the details of those cases will make understanding the others much easier. Figure \ref{fig:DR_case_enum} illustrates the high level process of each case. It's important to keep in mind that the entire structure of the Hamiltonian path is encoded in the glues and signals of the phenotype tiles. This means that these cases can occur without issue since, for example, in an iteration where case 3 needs to occur, there will only be the glues and signals for case 3 present on the relevant tiles and none that would allow tiles for say case 5 to attach.
    
    \begin{figure}
        \centering
        \includegraphics[width=0.7\linewidth]{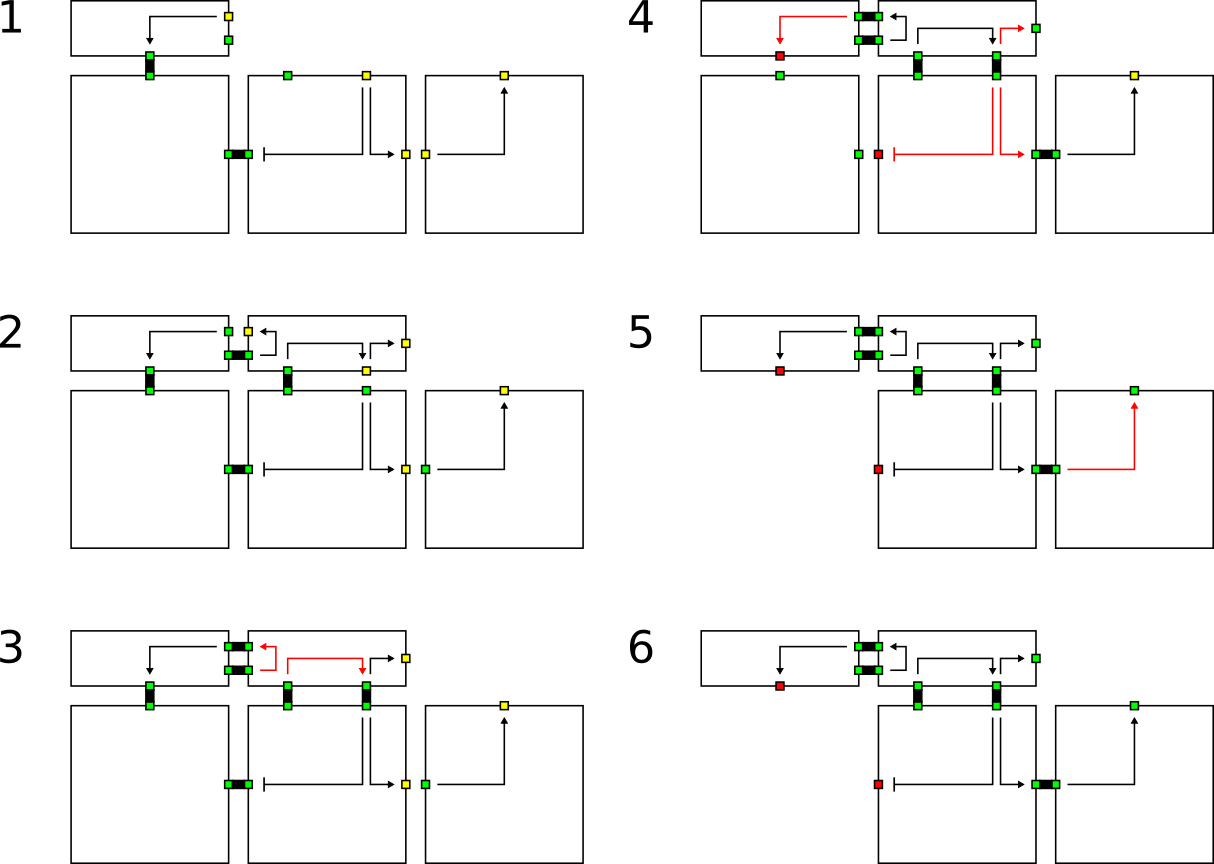}
        \caption{A side view of some of the relevant glues and signals firing during the simplest disassembly case.}
        \label{fig:DR_case_details-append}
        \vspace{-15pt}
    \end{figure}
    
    \begin{enumerate}
        \item This case is the simplest case and is illustrated in Figure \ref{fig:DR_case_details-append}. First, a genome tile $G$ attaches cooperatively to the previous genome tile and the next structural tile. This attachment causes signals to fire in $G$ that activate 2 glues from the latent state to the on state. The first of these glues is a rigid, strength 2 glue that allows $G$ to bind rigidly and with more strength to the next structural tile. The other glue is a flexible, strength 2 glue that allows the genome to more strongly attach to the previous genome tile. The attachment of these glues activate signals which turn the old glues serving the same purpose into the off state. Additionally, signals are activated in the previous genome tile and the next structural tile disabling the glues in both that held onto the previous structural tile. Signals also deactivate any glues in the next structural tile that are attached to all other structural tiles except for the one following it in the Hamiltonian path.
        
        At this point, there are no glues holding the previous structural tile to the genome nor the phenotype. This structural tile is now free to float away from what's left of the phenotype which is possible since the genome to which it was attached is now only bound with a flexible glue to the next genome tile and, by SDC condition 2, there is a path of empty tile locations along which it can escape.
        
        In addition to all of the signals described previously, signals also activate a glue on the next genome tile which enables the attachment of the genome tile that will initiate the next iteration of the disassembly process.
        
        By definition of case 1, SDC conditions 1 and 2 will be met after this process is done. Additionally, since the dangling genome direction now corresponds to the direction of the detached structural tile, condition 3 must also be satisfied. Condition 4 is also satisfied since glues were activated on the upcoming tile in the path to allow for cooperative binding of a new genome tile.
        
        \item This case is largely similar to case 1 except that the next genome tile attaches to the structural tile following the next structural tile in the Hamiltonian path since the next is being blocked. In this case, it will be necessary for this tile to ``know'' that the next genome tile will attach to it. To accomplish this, all of the necessary glues that allowed the disassembly process to occur in the first case exist on this tile instead of the one immediately following the previous structural tile in the Hamiltonian path.
        
        \item In this case, we have to remove the previous structural tile before we can attach the genome to the next structural tile since it is being blocked. We do this by utilizing what we call \emph{utility genome tiles}. These utility tiles are flat tiles that temporarily affix the genome to another part of the phenotype so that the previous tile can safely detach without the genome also detaching.
        
        At first, this case proceeds similar to case 2 (and is illustrated in Figure \ref{fig:DR_case_enum}), but with a utility tile attaching to the blocking structural tile instead of the next genome tile. This attachment activates signals which cause the previous structural tile to detach. Since the tile to which the utility tile attached is not immediately adjacent to the previous structural tile, this is done using a chain of signals (which is a common gadget in STAM systems). The detachment of the previous structural tile allows the next genome tile to cooperatively bind to the previous one and to the next structural tile. This attachment causes signals to deactivate glues holding the utility tile in place allowing it to detach.
        
        \item This case is largely degenerate and doesn't involve detachment of any tiles. Instead, utilizing cooperation, the next genome tile attaches to another face of the previous structural tile which also plays the part of the next structural tile. Depending on the tile or lack thereof in the green tile location from Figure \ref{fig:DR_case_enum}, the next iteration will either be case 1, 2, or 3.
        
        \item This case is largely similar to case 3 except that the utility tile attaches in a different location. Once this occurs, instead of a new tile attaching cooperatively to the next tile, which is impossible since the next tile is not adjacent to the previous genome tile, a filler genome tile attaches to glues that are now present after the attachment of the utility genome tile. This filler genome tile acts as a spacer and after signals activate its glues, the next genome tile can attach to it and the next genome tile.
        
        There is one consideration that needs to be made in this case. If the tile location illustrated in blue in case 5 of Figure \ref{fig:DR_case_enum} is the tile in the Hamiltonian path immediately following the next structural tile, then condition 3 of the SDC will not be met. This is because the dangling genome direction at the start of the next step will be in the same direction as the next path direction. To handle this, we simply require that two filler genome tiles attach between the utility tile and the next genome tile in this case. Since the structure of the Hamiltonian path is known in advance, this is possible, by requiring a different utility tile attach in the case where two filler tiles would be necessary than if only one was. Now, similar to case 3, the utility tile is free to detach following signals from the attachment of the next genome tile.
        
        \item This case is identical to case 5 except that the utility tile attaches in a different location.
    \end{enumerate}
}

\vspace{-20pt}
\subsection{Reassembly}
\label{sec:reassembly}

% Once the genome is built, we show that the original shape can be reconstructed. This occurs when a special structural tile attaches to the genome. This tile is identical to the last tile in the Hamiltonian path of the original phenotype and initiates the reassembly process. Section \ref{sec:deconstruct-append} contains more details of the reassembly process, but essentially that reassembly occurs very similarly to disassembly in reverse - still using the same 6 cases as above and instead of having a new genome tile attach and the old structural tiles detach, the opposite occurs. 

\update{
    At each iteration of the disassembly process, tiles attached to the genome encoding which tiles were detached.
    In some stages multiple tiles were detached, but it shouldn't be hard to see how that could be encoded in a single genome tile.
    Recall that this genome is a ``kinky'' genome.
    At this point, we could have defined the disassembly process above so that this genome immediately reconstructs the phenotype, the process for which is defined below; however, the definition of self-replicator requires that we construct arbitrarily many copies of the phenotype.
    Because of this, we can instead define the genome here so that it has the glues and signals necessary to convert into a linear genome as described in Section \ref{sec:simple-replicator}.
    
    We refer to the processes described in Section \ref{sec:details-kinkase}.
    There we use a gadget called kink-ase to convert a linear sequence of genome tiles into a ``kinky'' one which is capable of constructing a shape.
    This process is easily reversible using a similar gadget which follows the steps in Figure \ref{fig:linear-to-kin} in reverse.
    This process converts the kinky genome made during the disassembly of our phenotype into a linear genome which can be replicated arbitrarily using the process described in Section \ref{sec:detail-genome}.
    For our purposes, it's useful to modify this linear genome duplication process so that our linear genome is duplicated into two copies: one that can be further used for genome duplication and one that can be converted back to kinky form and used to reassemble the phenotype.
    This simply requires that we specify a second set of the corresponding glues and signals on the genome constructed from the disassembly process.
    This guarantees that we are generating arbitrarily many copies of the phenotype.
    
        % \todo{this is a kinky genome, right? If so, let's say that it has signals that force it to first create a linear genome before it can either make more linear genome copies or make a phenotype which causes the kinky genome to be consumed. There is overlap in Sections \ref{sec:kinky-to-linear-append} \ref{sec:decontructor-correctness}}
    
    Once we have kinky genomes ready to reconstruct the phenotype, we can begin the reassembly process. This process behaves much like the disassembly process, but with the genome being disassembled and the structure being reassembled. Once a reassembly fuel tile attaches to the special tile at the end of the genome, signals will activate glues allowing a structural tile, identical to the last tile in the Hamiltonian path of the original phenotype, to attach. This initiates the reassembly process and each of the tiles in the Hamiltonian path will attach in reverse order as the genome disassembles from the back. This process is in some ways more straightforward than disassembly because the only tiles that detach are genome tiles and they detach completely. In the assembly process, both structural tiles and genome tiles had to detach and the detachment of genome tiles had to happen in such a way that they were still attached by flexible glues to the rest of the genome.
    
    The following is an outline of the reassembly processes for each of the cases. Figure \ref{fig:DR_case_enum} can still be used as a reference but be careful to keep in mind that the process is happening in the opposite direction, initiated by the attachment of what was called the \emph{next} structural tile in the disassembly process. In this section we reverse the terminology so that in each iteration, what were the previous structural and genome tiles are now the next structural and genome tiles and vice-versa. In each iteration of this process, the attachment of the previous structural tile to our genome initiates the sequence of attachments, detachments, and signals that allow the next structural tile to attach and the previous genome tile to detach.
    
    \begin{enumerate}
        \item This is the most basic case, the attachment of the previous structural tile to the genome activates glues on the next genome tile. This enables the next structural tile to attach cooperatively which causes signals to deactivate glues so that the previous genome tile detaches.
        
        \item The attachment of the previous structural tile in this iteration activates glues on it which immediately allows the next structural tile to attach. Again this attachment activates signals which turn on glues to allow another tile to attach forming the corner. Finally, the next genome tile can bind to this last structural tile which causes glues to deactivate so that the previous genome tile detaches.
        
        \item The attachment of the structural tile to the genome in the previous iteration activates a glue on the genome tile and adjacent structural tile allowing a utility tile to attach. This causes signals to deactivate glues holding the previous genome tile and activating glues on the structural tile to which it was bound. This allows a new structural tile to attach and then the corresponding genome tile. These attachments create signal paths that deactivate glues on the utility tile and the structural tile to which it was attached, allowing it to fall off.
        
        \item This stage just represents the genome tile turning a corner which causes the old genome tile to detach after signals deactivate its glues. This can only happen after case 1, 2, or 3 similar to the analogous case during disassembly.
        
        \item The attachment of the structural tile activates glues which allow the utility tile to attach. This attachment initiates signals which do 3 things. the signals deactivate glues holding the previous genome to the structural tile, the signals deactivate glues holding the utility tile to the old genome tiles, and the signals activate glues on the next genome tile. The next genome tile can then cooperate with the old structural tile to attach a new structural tile. Note that in this case the filler genome tiles from the disassembly will remain attached to the previous genome tile and they will detach as a short chain.
        
        \item This case is almost identical to the previous case with a slightly different binding location for the utility tile.
    \end{enumerate}
    
    Note that in each of the cases described above it's possible to reassemble the phenotype structure using the same tiles that were originally in the seed phenotype. As described here, we require that some of the signals in these reassembled phenotype tiles will be fired to facilitate in the reassembly process; however, with a more careful design it wouldn't be difficult to describe a process which reassembles the phenotype without using any signals on the structural tiles if this was a desired property. Additionally, during cases 5 and 6, pairs of filler tiles will detach depending on the next direction of the path in that iteration. This results in our waste size being 2, but again with a more careful design it would be easy to specify tiles which, say, bind to these waste pairs and break them down into single tiles if having waste size 1 was a desired property.
}

\vspace{-10pt}

\update{
    \subsection{Phenotype Generation Algorithm}
    
    In this section, we describe an efficient algorithm for describing the $\STAMstar$ system in which this process runs. Given that we require complex information to be encoded in the glues and signals of our components, particularly in the phenotype since it requires an encoded Hamiltonian path, it might seem like we are ``cheating'' by baking potentially intractable computations in these glues and signals. This however is not the case in the sense that, as we will show, all of the required tiles, glues, signals, paths, etc. (all from a fixed, finite set of types) can be described by a polynomial time algorithm given an arbitrary shape to self-replicate.
    
    The algorithm described consists largely of two parts. First, we will determine a Hamiltonian path through our shape, and second we will use this path to determine which glues need to be placed where on our tiles.
    
    \subsubsection{Generating A Hamiltonian Path} \label{sec:hampath}
    
    \begin{lemma}
    Any scale factor 2 shape $S^2$ admits a Hamiltonian path and generating this path given a graph representing $S^2$ can be done in polynomial time.
    \end{lemma}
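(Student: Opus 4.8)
The plan is to establish two separate claims: first, that every scale-factor-2 shape $S^2$ admits a Hamiltonian path through the graph whose vertices are the unit cubes (tiles) of $S^2$ and whose edges join face-adjacent cubes; and second, that such a path can be computed in polynomial time in the size of $S^2$. The key structural observation I would exploit is that $S^2$ is built from $2\times 2\times 2$ blocks, one block for each voxel of the original shape $S$, and that these blocks are glued together face-to-face exactly according to the adjacency structure of $S$. So I would work at two levels: a coarse level, which is just the connectivity graph of $S$ itself (one node per original voxel), and a fine level, which is the internal $2\times 2\times 2$ grid inside each block together with the connections between adjacent blocks.

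First I would reduce the problem on the coarse graph. Since $S$ is connected, its voxel-adjacency graph is connected, so it admits a spanning tree $\mathcal{T}$. I would root $\mathcal{T}$ and traverse it in a fixed depth-first order; this gives an Eulerian-style ordering in which each block is visited, and each parent-child edge of $\mathcal{T}$ is crossed exactly twice (once entering the subtree, once leaving). The central gadget is a \emph{block router}: inside each $2\times 2\times 2$ block there are $8$ fine cubes, and I would show that for any prescribed set of incident block-faces that the path must enter and exit through (as dictated by the DFS traversal of $\mathcal{T}$), one can route a path that visits all $8$ internal cubes exactly once while connecting the required entry/exit ports. The $2\times 2\times 2$ block has enough internal degrees of freedom (each face of the block has a $2\times 2$ array of four fine cubes through which to pass) that a finite case analysis over the possible entry/exit face-pairs suffices; this is exactly what scale factor $2$ buys us, and is why the statement can fail at scale factor $1$.

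I would then stitch the per-block routes together along the DFS traversal: when the traversal crosses a tree edge from block $B$ to child block $B'$, the path exits $B$ through the shared face and enters $B'$, completes $B'$ (recursively routing all of $B'$ and its subtree), and returns through the same shared face back into $B$ to continue. The one subtlety is that a block may be entered and exited multiple times (once per adjacent tree edge plus the initial visit), so the router gadget must handle a block that is ``passed through'' several times, each pass consuming some of its $8$ internal cubes; I would phase the construction so that each visit to a block consumes a disjoint, pre-allocated quota of its internal cubes and the last visit cleans up any remainder. Verifying that the per-visit quotas always sum correctly to $8$ and that a consistent internal routing exists for every combination arising in the traversal is the main obstacle, and it is where the bulk of a careful case analysis lives.

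Finally, for the polynomial-time claim, I would observe that every ingredient is efficiently computable: building the voxel-adjacency graph of $S$ and a spanning tree is linear, the DFS traversal is linear, and each block router is resolved by a constant-size table lookup (the block is of fixed size $8$, so there are only finitely many entry/exit patterns, each with a precomputed internal route). Concatenating these local routes along the traversal yields the global Hamiltonian path in time linear in the number of cubes of $S^2$, hence polynomial in the size of the input. I expect the difficulty to be concentrated entirely in the case analysis for the router gadget and the bookkeeping of multiple visits; the high-level traversal-and-stitch argument and the complexity bound are routine once the gadget is in hand.
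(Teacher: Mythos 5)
Your high-level strategy (spanning tree of the voxel-adjacency graph of $S$, plus a constant-size gadget inside each $2\times2\times2$ block, plus a linear-time stitch) is sound in outline and the complexity argument is fine, but the proposal has a genuine gap exactly where you locate it, and the gap is not a formality: it is the entire content of the lemma. Your router gadget is stated for a single pass (``route a path that visits all $8$ internal cubes exactly once while connecting the required entry/exit ports''), but in a DFS traversal a block of tree-degree $d$ is occupied in $d$ separate segments, so for $d\ge 2$ no single pass can cover all $8$ cubes and the single-pass gadget is simply not the object you need. What you actually need is the following finite but nontrivial claim: for every $d\le 6$, every choice of $d$ distinct faces of the block, and every cyclic order in which the traversal crosses those faces, the $8$ vertices of the $2\times2\times2$ block can be partitioned into $d$ vertex-disjoint paths whose endpoints lie on the prescribed entry/exit faces in the prescribed order (with the freedom to choose which of the $4$ fine-cube ports on each shared face is used on each crossing, and with the returning port allowed to differ from the departing port). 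Each corner cube is incident to only $3$ of the $6$ faces, the quotas must sum to $8$ over up to $6$ segments, and consistency must hold simultaneously with the choices made in the neighboring blocks. You assert that ``a careful case analysis lives'' there but do not carry it out, so the proof is incomplete as written; until that disjoint-path-cover claim is verified for all arising configurations, the construction could fail at a high-degree node of the spanning tree.

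For comparison, the paper avoids this bookkeeping entirely by using a cycle-merging argument (following the Moteins construction): each $2\times2\times2$ block is assigned one of two fixed local Hamiltonian cycles on its $8$ cubes, the two orientations are placed in a checkerboard pattern so that adjacent blocks present anti-parallel edges across their shared face, and then for each edge of the spanning tree the two local cycles are merged into one by a local swap of two edges. Because every block contributes the same fixed $8$-cycle regardless of its degree in the tree, and each merge is a purely local, degree-independent operation, no multi-visit quota analysis is needed; the result is in fact a Hamiltonian circuit, and the polynomial-time bound follows from the same observations you make. If you want to keep your DFS-and-router architecture you must supply the disjoint-path-cover case analysis; otherwise the checkerboarded-cycle-merge route gives the lemma with essentially no case analysis at all.
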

        
    In general, the problem of finding a Hamiltonian path through a graph is \textbf{NP}-complete and may be impossible for many shapes we may wish to use; however, if we scale our shape by a constant factor of 2, that is replace every voxel location with a $2\times 2\times 2$ block of tiles, then not only is there always a Hamiltonian path, but it can be computed efficiently. The algorithm for generating this Hamiltonian path is described in further detail in \cite{Moteins} and was inspired by \cite{SummersTemp}, but we will describe the procedure at a high level here using terminology that is convenient for our purposes.
    
    \begin{enumerate}
        \item Given a shape $S$, we first find a spanning tree $T$ through the graph whose vertices correspond to locations in $S$.
        \item We embed this spanning tree in a space scaled by a factor of 2 so that each vertex corresponds to a $2\times 2\times 2$ block of locations.
        \item To each $2\times 2\times 2$ block in this space, we assign one of two orientation graphs $G_o^1$ or $G_o^2$. These graphs each form a simple oriented cycle through all points. These graphs are assigned so that they form a checkerboard pattern such that no blocks assigned $G_o^1$ are adjacent to any blocks assigned $G_o^2$ and vice versa. Figure \ref{fig:orient_graph} illustrates what the orientation graphs look like for adjacent blocks.
        \item For each edge in the spanning tree $T$, we join the orientation graphs corresponding to the vertices of the edge so that they form a single continuous cycle as illustrated in Figure \ref{fig:orient_graph}. This process is described in more detail in \cite{Moteins}.
        \item Once we do this for all edges in our spanning tree, the connected orientation graphs will form a Hamiltonian circuit through the $2\times 2\times 2$ blocks corresponding to the tiles in our shape. This is easy to see by analyzing a few cases corresponding to all possible vertex types in the spanning tree and noting that in none of them does the path ever become disconnected. This is done in \cite{Moteins}.
    \end{enumerate}
    
    \begin{figure}
    \centering
    \subfloat{%
        \includegraphics[width=0.2\textwidth]{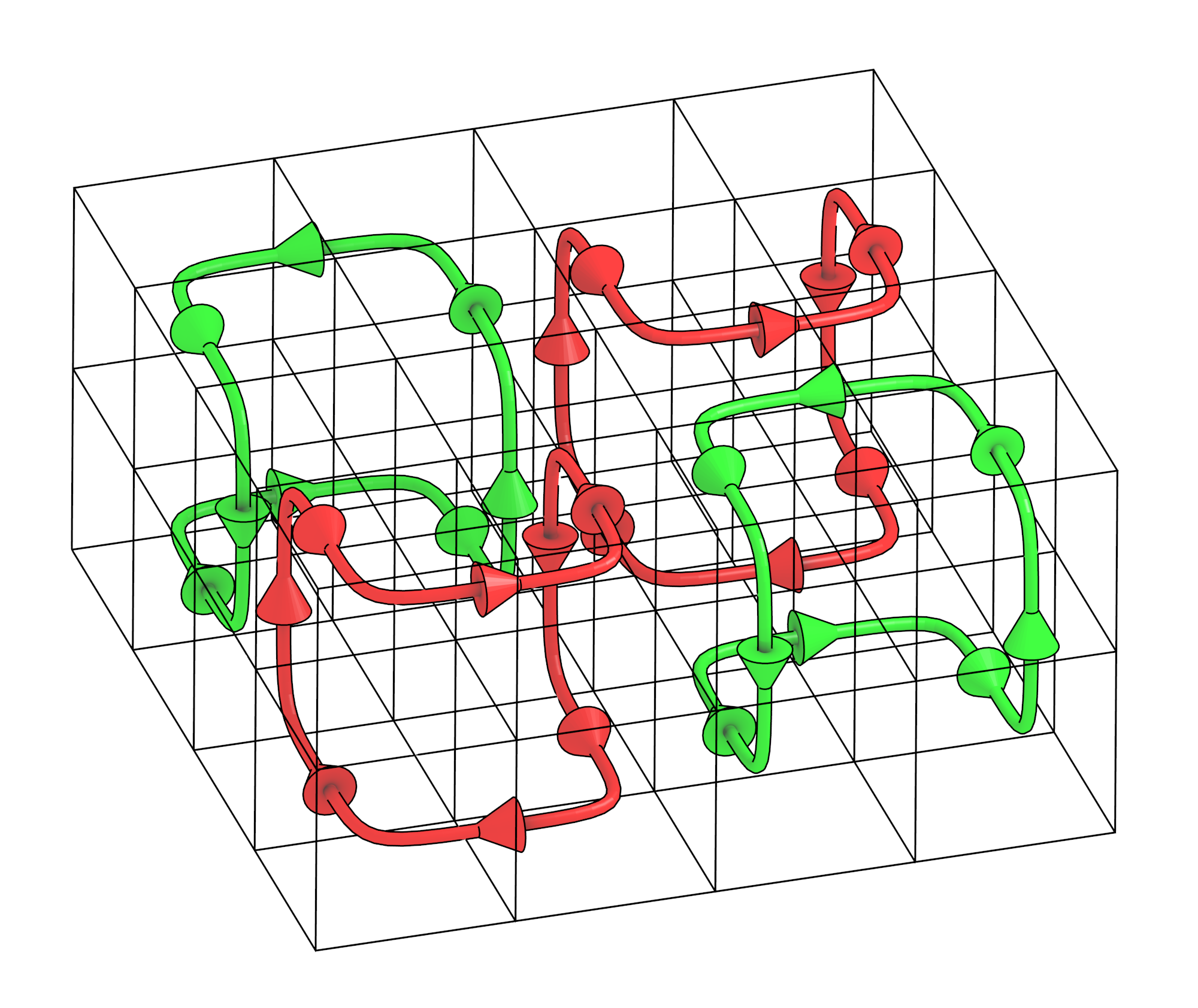}%
        \label{fig:DR_orient}%
    }%
    %\hfill%
    \hspace{20pt}
    \subfloat{%
        \includegraphics[width=0.2\textwidth]{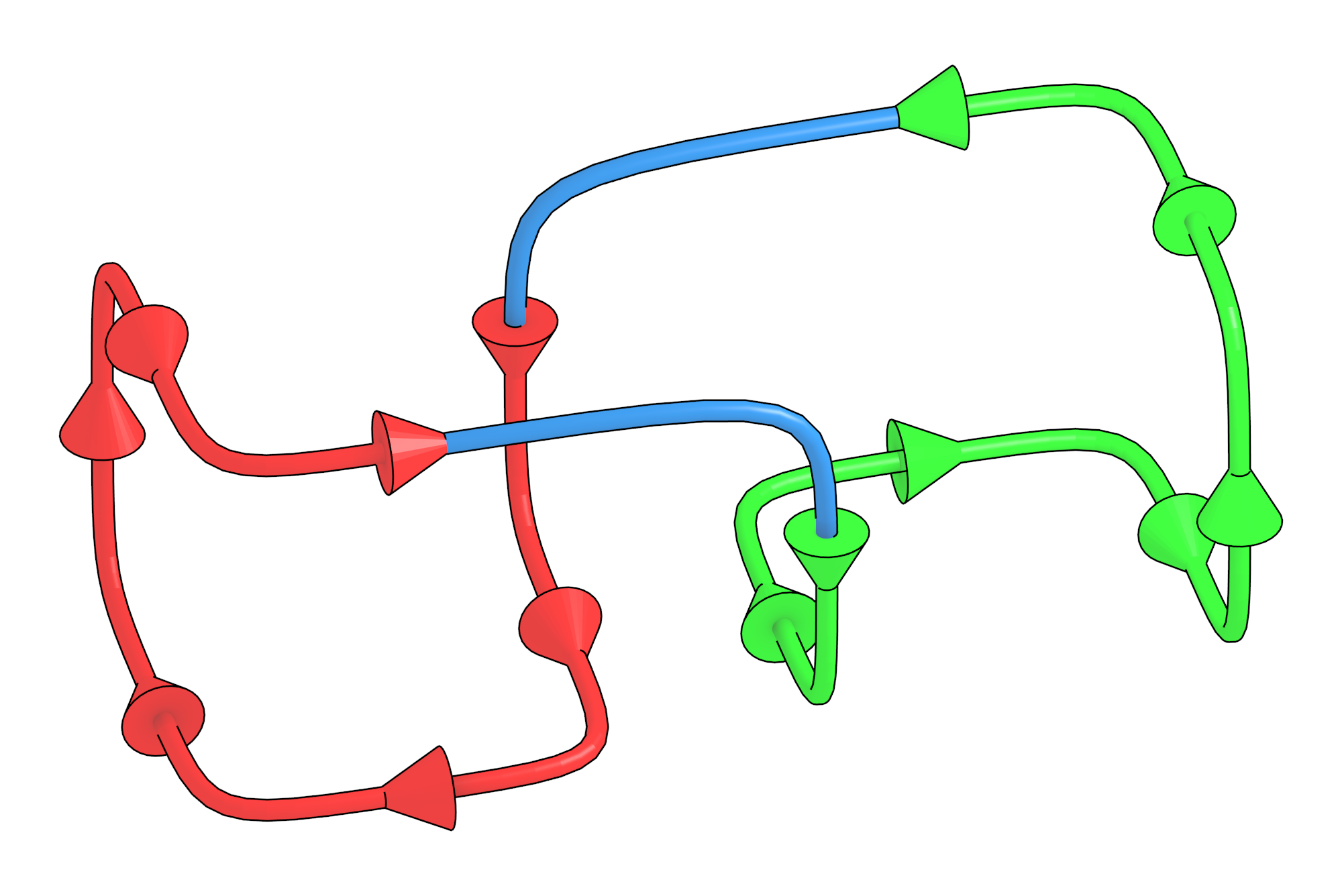}%
        \label{fig:DR_orient_join}%
    }%
    \caption{(Left) Each $2\times 2\times 2$ block of space is assigned an orientation graph which will be used to help generate the Hamiltonian path through our shape. Adjacent blocks are assigned opposite orientation graphs, the edges of which will help guide the Hamiltonian path around the shape. (Right) Orientation graphs of adjacent blocks are joined to form a continuous path}
    \label{fig:orient_graph}
    \end{figure}
    
    The resulting Hamiltonian path, which we will call $H$, passes through each tile in the 2-scaled version of our shape and only took a polynomial amount of time to compute since spanning trees can be found efficiently and only contain a polynomial number of edges. Given $H$, we can arbitrarily choose some vertex on the surface of our shape to represent the starting point of our path $H_1$ and label the rest of the path in order with respect to this one so that the next point is labeled $H_2$, then $H_3$, and so on. Additionally, we can also keep track of the location in space relative to some fixed origin to which each point in our path belongs and note that, using common data structures and basic arithmetic, determining the index of points in $H$ given a location can be done efficiently. 
    
    \subsubsection{Determining Necessary Information to encode in Glues and Signals}
    
    Recall that each case of the disassembly and reassembly processes sometimes required tiles nearby in space to have glues and signals to facilitate each step of the process. We define the following algorithm which is able to describe these glues and signals, showing that we can efficiently describe the tiles necessary for our construction.
    
    Begin with tile $H_1$ and iterate over the entire Hamiltonian path performing the following operations with the current tile labelled $T_i$ and keeping track of a counter $t$ which starts at 0.
    
    \begin{enumerate}
        \item Determine which of the 6 disassembly cases would apply to this particular tile by looking at adjacent tile locations and considering only those tiles not yet flagged with a detachment time.
        \item At this point, we know exactly which case $T_i$ will use during the detachment process. Assign any glues and signals necessary to this tile and adjacent tiles.
        \item Flag $T_i$ as being detached at time $t$.
        \item If $T_i$ used case 2, also mark the tile following $T_i$ as being detached at time $t$ and skip the next tile in the path for the next iteration.
        \item increment $t$ and $i$.
    \end{enumerate}
    
    Our algorithm now knows which glues and signals are necessary for each tile that will make up the phenotype. We can now iterate over all tiles in the construction and make a set consisting of each unique tile in the phenotype. Additionally, the genome tiles necessary for the process are even simpler to define since there is only a small fixed number needed for each case. This shows that the system in which this process occurs can be described efficiently by an algorithm and that we are not doing an unreasonable amount of pre-computation by including the necessary information in our glues and signals.
    
    \subsubsection{Glues for Converting to Linear DNA}\label{sec:kinky-to-linear-append}
    
    The disassembly process above results in arbitrarily many ``kinky'' genomes which are capable of being used to produce a replica of the original phenotype. In order for this process to be possible however, the kinky genome produced by the disassembly process needs glues and signals to indicate locations that should be ``un-kinked'' and replicated. This is no problem however since the only cases in the disassembly process that could induce a kink in our constructed genome are 1, 2, and 3. The kink induced in the genome in any of these cases solely depends on the dangling genome direction and next path direction. Since there are only a finite number of such cases and since our tileset will have a unique set of genome tiles that attach in each such case, we can easily specify the necessary glues and signals to the corresponding genome tiles. This guarantees that the conversion to linear DNA is possible for any genome constructed by the disassembly process.
    
    % In order for arbitrary replication of this genome to occur, we refer to the processes described in section \ref{sec:details-kinkase}. There we use a gadget called kink-ase to convert a linear sequence of genome tiles into a ``kinky'' one which is capable of constructing a shape. This process is easily reversible using a similar gadget which follows the steps in figure \ref{fig:linear-to-kin} in reverse. This process converts the kinky genome made during the disassembly of our phenotype into a linear genome which can be replicated arbitrarily using the process described in section \ref{sec:detail-genome}. These replicated genomes can then be converted back into ``kinky'' genomes and used to construct the arbitrarily many replicas of the phenotype as is required by the definitions of self-replicator.

    % The only iteration cases in the disassembly process that could cause the genome to have a turn in it are cases 1, 2, and 3. In these cases, it shouldn't be difficult to see that the algorithm can be aware of exactly what kind of turn would be induced in the genome. Therefore, there is no issue in defining the genome tiles that result from those instances of the disassembly process to contain glues necessary for the conversion to linear DNA.\todo{add a few more sentences of explanation about this}
    \vspace{-10pt}
    \subsection{Correctness of Theorem \ref{thm:deconstructor}}\label{sec:decontructor-correctness}
    
    First, we restate Theorem \ref{thm:deconstructor} for convenience:
    
    \newcounter{tempcounter2}
    \setcounter{tempcounter2}{\value{theorem}}
    
    \setcounter{theorem}{\value{tempcounter}}
    
    \begin{theorem}\label{thm:deconstructor-append}
    There exists a universal tile set $T$ such that for every shape $S$, there exists an STAM* system $\mathcal{R} = (T,\sigma_{S^2},2)$ where $\sigma_{S^2}$ has shape $S^2$ and $\mathcal{R}$ is a self-replicator for $\sigma_{S^2}$ with waste size 2.
    \end{theorem}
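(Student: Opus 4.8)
The plan is to prove the theorem by implicitly defining the system $\mathcal{R}$ and then verifying directly the two defining conditions of a self-replicator: (i) every terminal assembly is either equivalent to the seed phenotype $\sigma_{S^2}$ or has size at most $2$, and (ii) the count of assemblies $\approx \sigma_{S^2}$ increases without bound. First I would fix, independently of $S$, the finite universal tile set $T$ consisting of the structural and genome tile types together with the kink-ase, utility, and filler types, and then specify $\sigma_{S^2}$ by running the polynomial-time algorithm of Section \ref{sec:hampath_append}: compute a spanning tree of $S$, scale by $2$, assign the two orientation graphs in a checkerboard pattern, and join them into a Hamiltonian path $H$; then decorate each structural tile with the glues and signals dictated by the disassembly case that applies to its point along $H$. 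Because only a constant amount of information is needed per tile (its local path and adjacency structure, hence its case number), this decoration can be realized with the fixed set $T$, so $T$ is genuinely universal.

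Next I would prove correctness of disassembly by induction on the iteration index, using the safe disassembly criterion (SDC) as the induction hypothesis. The base case checks that after the initial genome tile attaches to the start of $H$ and the first cooperative attachment occurs, all four SDC conditions hold. For the inductive step I would invoke Lemma \ref{lem:DR_all_cases} to reduce to the six essentially different cases and, for each, verify that the prescribed sequence of cooperative attachments, signal firings, and deactivations (a) detaches exactly the intended structural tile(s), (b) leaves the growing genome bound to the remaining phenotype through a single flexible glue on the new previous genome tile, and (c) re-establishes all four SDC conditions for the next iteration. Here SDC condition $2$ guarantees an obstacle-free escape path, so the detached structural tile can always diffuse away and the next genome tile can always diffuse in; this is exactly where the $2$-scaling is essential. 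Iterating along $H$ to its final tile terminates the process with a well-formed kinky genome encoding $H$.

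I would then chain together the previously established modules to obtain both the remaining behavior and condition (ii). The kink-ase reversal of Section \ref{sec:details-kinkase} converts the kinky genome into a linear genome; the replication process of Section \ref{sec:detail-genome}, already shown correct within Theorem \ref{thm:universal-constructor}, copies linear genomes without consuming them; and the reassembly process reruns the same six cases in reverse to grow a fresh phenotype from a kinky genome. Modifying the duplication step so that each linear genome yields one copy earmarked for further replication and one earmarked for conversion-and-reassembly ensures both that the pool of linear genomes grows exponentially and that, since no valid action may be postponed forever, infinitely many of them are converted into phenotypes, giving condition (ii). For condition (i) I would enumerate the reachable terminal assemblies: completed phenotypes, each $\approx \sigma_{S^2}$, and the possible junk, namely singleton detached structural tiles, spent kink-ase and fuel assemblies, utility tiles, and the pairs of filler genome tiles that detach in cases $5$ and $6$, all of size at most $2$, which pins the waste size at $2$.

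The hardest part will be the soundness half of condition (i): ruling out \emph{unintended} terminal assemblies. Because the STAM* dynamics are fully asynchronous and nondeterministic, I must argue that the chosen glue domains and strengths, together with the staging enforced by the signals, prevent any spurious cooperative binding (for example a genome tile attaching out of order, or a detached structural tile rebinding) from yielding a large stable assembly that is neither a phenotype nor bounded junk. I would handle this with a case analysis over which \texttt{on} glues can be simultaneously exposed in any reachable configuration, showing that the only strength-$\ge 2$ attachments ever available are precisely those used on the intended disassembly, replication, and reassembly pathways, and that every temporarily blocked configuration (as noted for the asynchronous signal timing) is eventually resolved, so that no partial structure is terminal. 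Establishing this global invariant across all executions, rather than merely along the intended trajectory, is the main obstacle.
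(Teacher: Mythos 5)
Your proposal follows essentially the same route as the paper: construct the phenotype from the polynomial-time Hamiltonian path algorithm, disassemble via the six cases of Lemma \ref{lem:DR_all_cases} with the SDC as the inductive invariant, convert the kinky genome to a linear one, fork the duplication into a replication copy and a reassembly copy, and account for the waste-size-$2$ filler pairs from cases 5 and 6. The one place you go beyond the paper is in explicitly flagging the need for a global invariant ruling out spurious strength-$\ge 2$ attachments in arbitrary asynchronous executions --- the paper's own correctness argument asserts this implicitly rather than proving it, so your identification of that as the main remaining obstacle is accurate rather than a gap in your approach.
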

    
    \setcounter{theorem}{\value{tempcounter2}}
    
    We have shown how, given any shape $S$ as input, we can scale it by factor $2$ to $S^2$ and efficiently find a Hamiltonian path through $S^2$. We can then compute the tile types and signals needed at each location to build a phenotype which can serve as a seed supertile for an STAM* system $\mathcal{R}$ using a universal tile set $T$. At temperature $2$, $\mathcal{R}$ will deconstruct the input supertiles to create kinky genome assemblies. Each kinky genome assembly will then first create a copy of the linear genome, and then either continue to create copies of the linear genome, or initiate the growth of a new copy of the phenotype (which consumes the copy of the kinky genome). The new copies of the phenotype will become terminal assemblies, in the shape of $S^2$. The other terminal assemblies are junk assemblies of size $\le 2$ (during the reassembly process for cases 5 and 6, for certain next path directions, pairs of filler tiles will detach), and the linear genome assemblies are never terminal as each facilitates the growth of infinite new copies. Thus, $\mathcal{R}$ is a self-replicator for $S^2$ and since this works for arbitrary shapes at scale factor $2$, $T$ is a universal tile set for shape self-replication for the class of scale factor 2 shapes.

}

\section{Shape Building via Hierarchical Assembly}\label{sec:2ham}

In this section we present details of a shape building construction which makes use of hierarchical self-assembly. The main goals of this construction are to (1) provide more compact genomes than the previous constructions, and (2) to attempt to more closely mimic the hierarchical assembly that occurs in the replication of biological systems, e.g. individual proteins are independently constructed and then they combine with other proteins to form cellular structures. First, we define a class of shapes for which our base construction works, then we formally state our result.

Let a \emph{block-diffusable} shape be a shape $S$ which can be divided into a set of rectangular prism shaped blocks\footnote{A rectangular prism is simply a 3D shape that has 6 faces, all of which are rectangles.} whose union is $S$ (following the algorithm of Section~\ref{sec:2ham-decomp}) such that a connectivity tree $T$ can be constructed through those blocks and if any prism is removed but $T$ remains connected, that prism can be placed arbitrarily far away and move in an obstacle-free path back into its location in $S$.

\begin{theorem}\label{thm:2HAM}
There exists a tile set $U$ such that, for any block-diffusable shape $S$, there exists a scale factor $c \ge 1$ and STAM* system $\mathcal{T_S} = (U,\sigma_{S^c},2)$ such that $S^c$ self-assembles in $\calT_S$ with waste size 1. Furthermore, $|\sigma_S|=O(|S|^{1/3})$.
\end{theorem}

To prove Theorem~\ref{thm:2HAM}, we present the algorithm which computes the encoding of $S$ into seed assembly $\sigma_S$ as well as the value of the scale factor $c$ (which may simply be $1$), and then explain the tiles that make up $U$ so that $\mathcal{T_S}$ will produce components that hierarchically self-assemble to form a terminal assembly of shape $S^c$.
At a high level, in this construction the seed assembly is the \texttt{genome}, which is a compressed linear encoding of the target shape that is logically divided into separate regions (called \texttt{genes}), and each {\tt gene} independently initiates the growth a (potentially large) portion of the target shape called a \texttt{block}. Once sufficiently grown, each {\tt block} detaches from the \texttt{genome}, completes its growth, and freely diffuses until binding with the other {\tt blocks}, along carefully defined binding surfaces called {\tt interfaces}, to form the target shape.

It is important to note that there are many potential refinements to the construction we present which could serve to further optimize various aspects such as \texttt{genome} length, scale factor, tile complexity, etc., especially for specific categories of target shapes. For ease of understanding, we present a relatively simple version of the construction, and in several places we point out where such optimizations and/or tradeoffs could be made.
Throughout this section, we will refer to $S$ as the target shape of our system. Note that for some shapes, it may be the case that a scale factor $c>1$ is required for the input shape $S$ (and the details of how that is computed are provided in Section~\ref{sec:scale-and-interface}) but for simplicity we'll refer to the target shape as $S$ whether or not it is a scaled version. We will first describe how the shape $S$ can be broken into a set of constituent {\tt block}s, then how the {\tt interface}s between {\tt block}s are designed, then how individual {\tt block}s self-assemble before being freed to hierarchically combine into an assembly of shape $S$.

\begin{figure}
%\vspace{-10pt}
    \centering
    \begin{subfigure}{.45\linewidth}
        \includegraphics[width=0.95\textwidth]{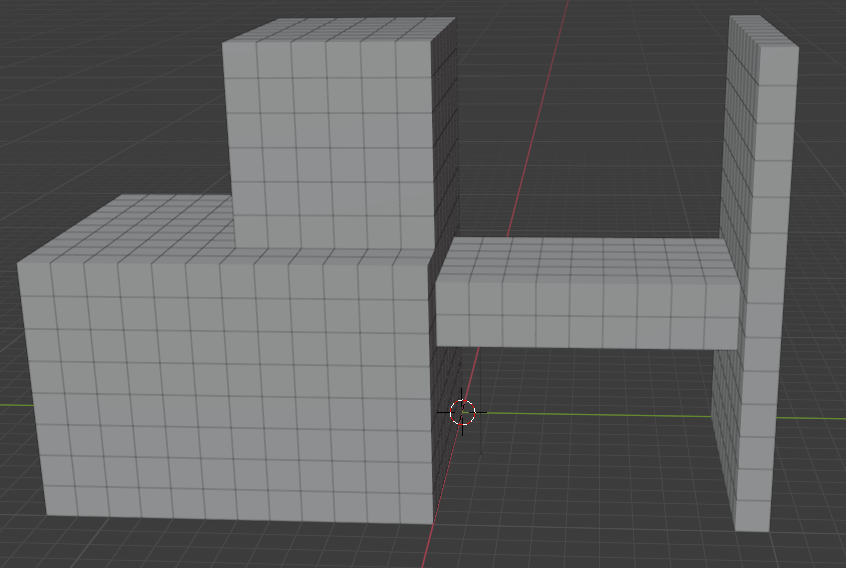}
        \caption{\label{fig:3D-structure}}
    \end{subfigure}
    %\hspace{0.08\textwidth}
    \begin{subfigure}{.45\linewidth}
        \includegraphics[width=0.95\textwidth]{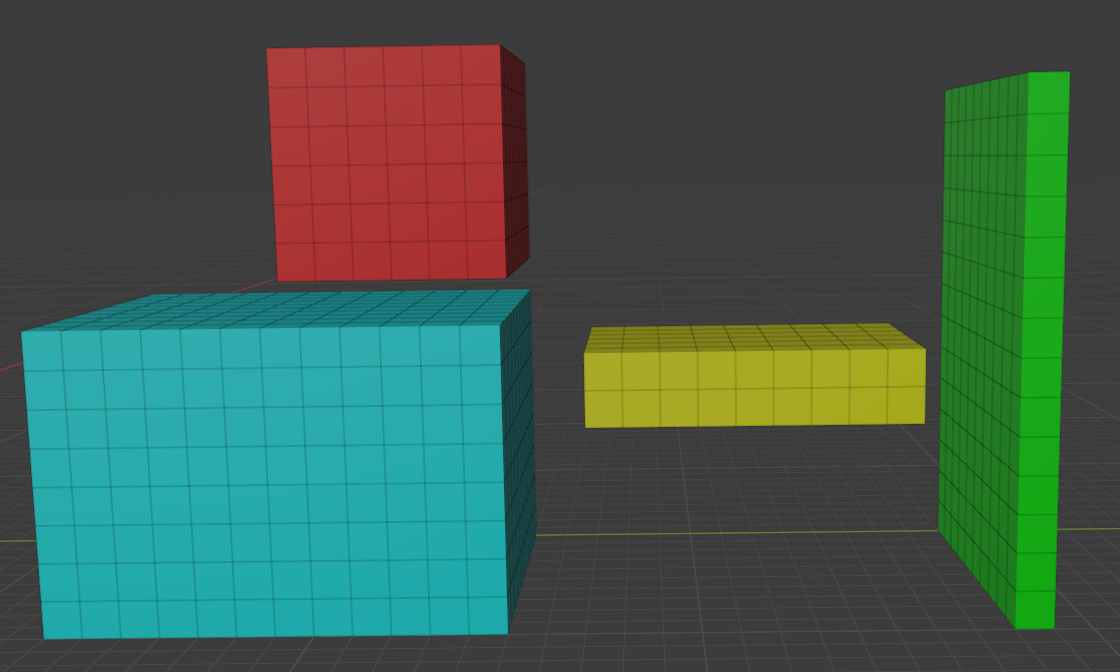}
        \caption{\label{fig:3D-structure-decomp}}
    \end{subfigure}
    \caption{(a) An example 3D shape $S$. (b) $S$ split into 4 {\tt block}s, each of which can be grown from its own \texttt{gene}. Note that the surfaces which will be adjacent when the {\tt block}s combine will also be assigned {\tt interface}s to ensure correct assembly of $S$.}
    \label{fig:2HAM-block-decomp}
    \vspace{-10pt}
\end{figure}

\vspace{-10pt}
\subsection{Decomposition into {\tt block}s}
\label{sec:2ham-decomp}
\vspace{-5pt}

Since $S$ is a shape in $\mathbb{Z}^3$, it is possible to split it into a set of rectangular prisms whose union is $S$. We do so using a simple greedy algorithm which seeks to maximize the size of each rectangular prism, which we call a {\tt block}, and we call the full set of {\tt block}s $B$.

After the application of a greedy algorithm to compute an initial set $B$, we refine it by splitting some of the {\tt block}s as needed to form a binding graph in the form of a tree $T$ such that every {\tt block} is connected to at least one adjacent {\tt block}, but also so that each {\tt block} has no more than one connected neighbor in each direction in $T$. This results in the final set of {\tt block}s that combine to define $S$, can join along the edges defined by $T$, and each {\tt block} has at most $6$ neighbors to which it combines. (Figure \ref{fig:2HAM-block-decomp} shows a simple example.)
\update{
    % \subsection{Shape decomposition into blocks}\label{sec:shape-decomp}
    
    % We present a relatively naive, greedy algorithm for decomposing $S$ into a set $R$ of rectangular prisms, called {\tt block}s, that can combine to form a structure of shape $S$. (See Figure \ref{fig:2HAM-block-decomp} for an example of a shape and its decomposition into {\tt block}s.) 
    Note that for our shape-replicating construction to work for $S$, it also requires that $S$, once divided into rectangular prisms, is block-diffusable. Our algorithm does not ensure block-diffusability, and in fact, we conjecture that there exist shapes for which this is not possible without arbitrarily scaling the shapes.
    Below, we provide the algorithm which splits $S$ into a set of {\tt block}s.
    
    \begin{enumerate}
        \item Define $S' = S$.
        
        \item Initialize the set of {\tt blocks} $B = \emptyset$.
    
        \item Define the function $P$ so that on input $v \in S'$ (i.e. $v$ is a voxel in $S'$), $P(v)$ returns the largest (by volume) rectangular prism (as the set of coordinates contained within it) containing $v$ within $S'$.\label{step:loop1}
        
        \item Let $p_{max}$ be the largest rectangular prism (by volume) returned by $P$ for any $v \in S'$.
        
        \item Add $p_{max}$ as a {\tt block} to the set of {\tt block}s $B$, and remove the voxels of $p_{max}$ from $S'$. (Note that this may make $S'$ into a disconnected set of points, but that is okay.)
        
        \item If $S' \neq \emptyset$, return to step \ref{step:loop1}.
    \end{enumerate}
    
    We now have $B$ as a preliminary set of {\tt block}s, which we will modify as necessary to ensure that each {\tt block} has only one adjacent neighbor to which it will need to bind in each direction.
    
    \begin{enumerate}
        \item Define the graph $G$ such that for each $b \in B$, $G$ has a corresponding node, and there is an edge between each pair of nodes of $G$ that correspond to {\tt block}s that are adjacent to each other in $S$. \label{step:loop2}
        
        \item Generate a tree $T$ from graph $G$ by removing edges from each cycle until no cycles remain.
        
        \item For each $b \in B$, if there exist $b',b'' \in B$ where $b \ne b' \ne b'' \ne b$ such that $b$ is adjacent to both $b'$ and $b''$ along the same plane in $S$, and there are edges in $T$ (1) between the nodes representing $b$ and $b'$ and (2) the nodes representing $b$ and $b''$, then split $b$ into two new rectangular prisms, $b_1$ and $b_2$, such that each is adjacent to exactly one of $b'$ and $b''$ (this is always possible since all of $b,b',$ and $b''$ are rectangular prisms). \label{step:split}
        
        \item Remove $b$ from $B$ and add $b_1$ and $b_2$ to $B$.
        
        \item If any {\tt block} was split in step \ref{step:split}, loop back to step \ref{step:loop2}.
        
    \end{enumerate}
    
    The tree $T$ is a graph whose edges connect the nodes representing {\tt block}s which must bind to each other in the final assembly. At this point, each $b \in B$ will have at most 1 adjacent $b' \in B$ on each side to which it must bind, and each $b \in B$ will have at least one other $b' \in B$ to which it must bind. We will refer to any pair of {\tt block}s which must bind to each other as \emph{connected}.
}

\begin{figure}
    \centering
    \begin{subfigure}{\linewidth}
        \includegraphics[width=.9\textwidth]{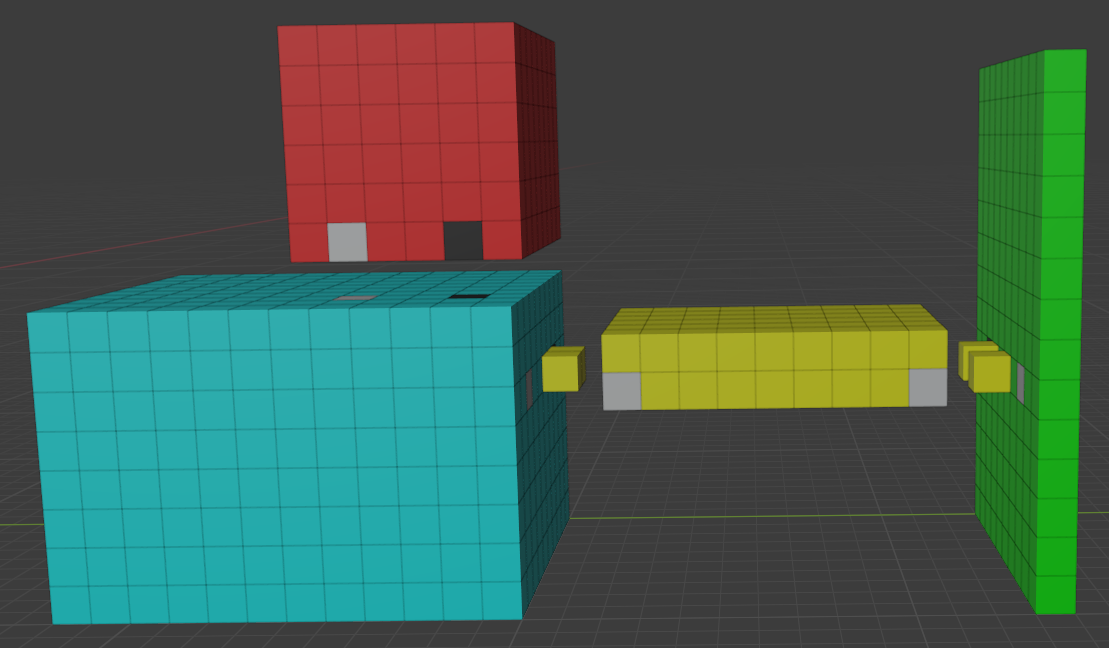}
        \caption{\label{fig:3D-structure-decomp-interfaces}}
    \end{subfigure}
    \hspace{0.06\textwidth}
    \begin{subfigure}{\linewidth}
        \includegraphics[width=0.9\textwidth]{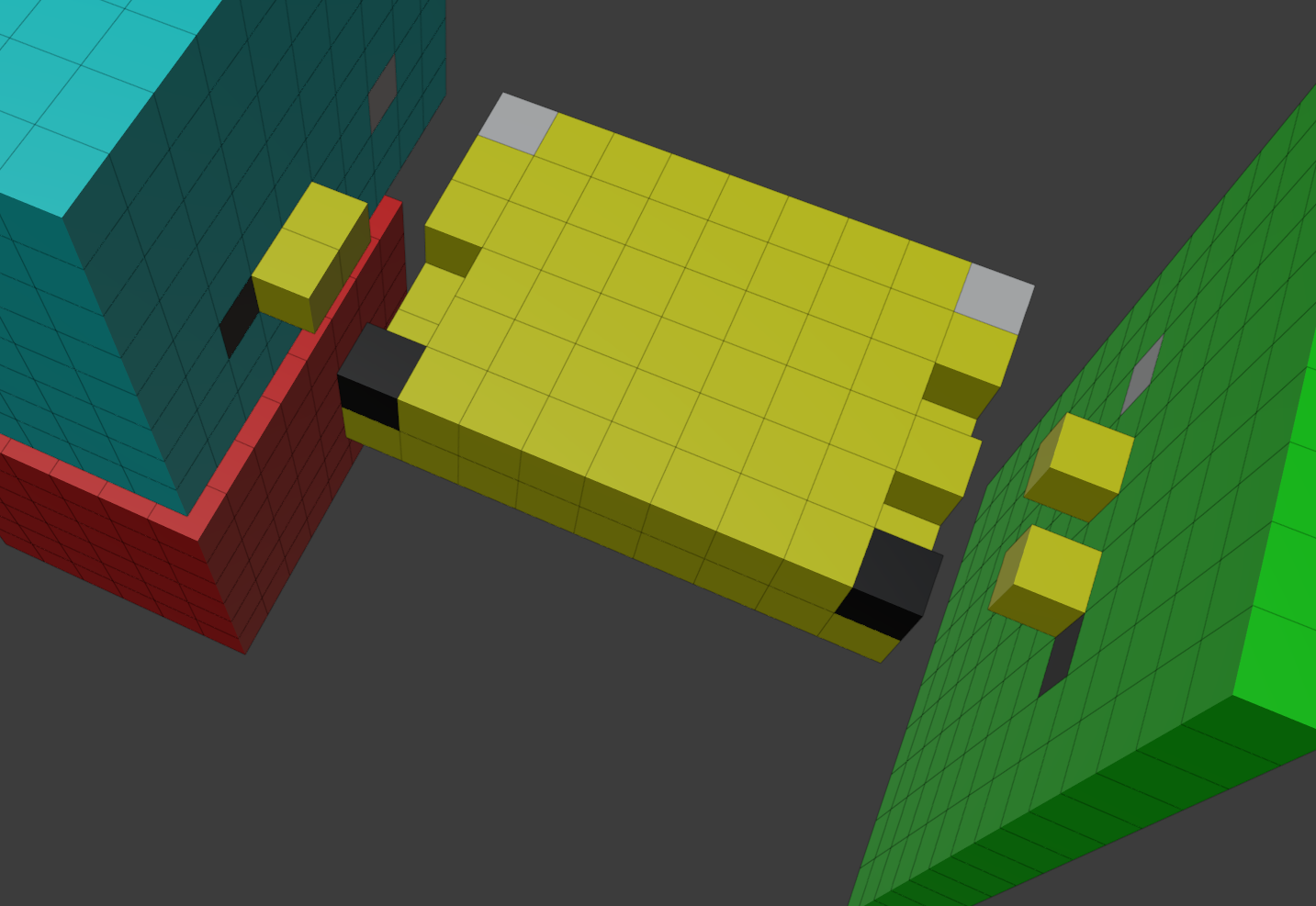}
        \caption{\label{fig:3D-structure-decomp-interfaces2}}
    \end{subfigure}
    \caption{(a) The {\tt block}s for the example shape $S$ from Figure \ref{fig:2HAM-block-decomp} with example {\tt interface}s included. (b) View from underneath showing more of the {\tt interface}s between {\tt block}s. Note that the actual {\tt interface}s created by the algorithm would be shorter, but to make the example more interesting their sizes have been increased.}
    \label{fig:2HAM-block-interfaces2}
    \vspace{-20pt}
\end{figure}

\vspace{-10pt}
\subsection{Scale factor and interface design}\label{sec:scale-and-interface}
\vspace{-5pt}

The {\tt block}s self-assemble individually, then separate from the {\tt genome} to freely diffuse until they combine together via {\tt interface}s along the surfaces between which there were edges in the binding tree $T$. Each {\tt interface} is assigned a unique length and number. The two {\tt block}s that join along a given {\tt interface} are assigned complementary patterns of ``bumps'' and ``dents'' and a pair of complementary glues on either side of those patterns (to provide the necessary binding strength between the blocks).

\update{
    We now describe the size and composition of the \\{\tt interface} between connected {\tt block}s.
    Each \texttt{interface} will include two specially designated glues, one on each end of the {\tt interface}, and assuming the length of the \texttt{interface} is $n$, an $n-2$ tile wide portion in between those glues which will eventually be mapped to a particular ``geometry'' of bumps and dents (i.e. tiles protruding from a surface, and openings for tiles in a surface). No \texttt{interface} can be shorter than $2$. Also, since each \texttt{interface} must be unique, there is only one valid \texttt{interface} of length $2$, and for each $n > 2$ there will be $2^{(n-2)/2}$ valid interfaces because each bit of the assigned number is represented by two bits in the geometry. For a 0-bit, the pattern $01$ is used, and for a 1-bit the pattern $10$ is used. This ensures that each geometry is compatible only with its complementary geometry (see \cite{GeoTiles} for further examples.) Figure \ref{fig:2HAM-block-interfaces2} shows an example of {\tt interface}s which could be added to the {\tt block}s of the example shape from Figure \ref{fig:2HAM-block-decomp}. Note, however, that for the sake of a more interesting example larger {\tt interface}s are shown than would be assigned by the algorithm presented, which would have created one {\tt interface} of size 2, with only White and Black glues, and two of size 4, one with a ``dent'' then ``bump'' to represent $01$ which maps to 0, and one with a ``bump'' then ``dent'' to represent $10$ which maps to 1.
    
    \begin{enumerate}
        \item Define the function $\texttt{RECT}$ such that, for each connected pair $b,b' \in B$,  $\texttt{RECT}(b,b')$ returns the rectangle along which $b$ and $b'$ are adjacent in $S$, and the function $\texttt{RECTMAX}(b,b') = \texttt{max}(m,n)$ where $m$ and $n$ are the lengths of the sides of the rectangle returned by $\texttt{RECT}(b,b')$ (i.e. it returns the length of the maximum dimension of the rectangle).
        
        \item Initialize the mapping $\texttt{INTERFACE-LENGTH}$ which maps a connected pair $b$ and $b'$ to an integer such that $\texttt{INTERFACE-LENGTH}(b,b')$ $ = 2$. (\texttt{INTERFACE-LENGTH} will eventually specify the length of the \texttt{interface} between {\tt block}s.)
        
        \item Define the function \texttt{COUNT} such that, for each $k > 1$, $\texttt{COUNT}(k)$ is equal to the number of connected pairs $b,b' \in B$ such that $\texttt{INTERFACE-LENGTH}(b,b')$ $ = k$. (That is, \texttt{COUNT} returns the number of pairs of {\tt block}s that are currently assigned interfaces of length $k$.)
        
        \item While there exists $k > 1$ such that $\texttt{COUNT}(k) > 2^{(k-2)/2}$:
        
        \begin{enumerate} 
            \item Select a connected pair $b,b'$ where \\$\texttt{INTERFACE-LENGTH}(b,b') = k$ and update the mapping $\texttt{INTERFACE-LENGTH}$ so that \\$\texttt{INTERFACE-LENGTH}(b,b') = k+1$.
        \end{enumerate}
        
        \item  If there exists a connected pair $b,b' \in B$ such that $\texttt{INTERFACE-LENGTH}(b,b') > \texttt{RECTMAX}(b,b')$, this (simplified) construction requires the shape $S$ to be scaled because there are too many {\tt interface}s of one or more lengths for them all to be unique\footnote{The number of unique {\tt interface}s for any length can easily be increased using methods discussed later.}. Therefore, replace $S$ with $S^2$ (the scaling of $S$ by $2$) and restart the construction from shape decomposition, at the beginning of Section~\ref{sec:2ham-decomp}.
        
    \end{enumerate}
    
    At this point, the mapping $\texttt{INTERFACE-LENGTH}$ defines a valid mapping of lengths to each \texttt{interface}. We now assign a valid geometric pattern (i.e. a series of ``bumps'' and ``dents'') to each.
    
    \begin{enumerate}
        \item Let $s$ equal the value of the maximum of the width, height, and depth of $S$ (i.e. the length of its greatest dimension).
        
        \item For each integer $1 < i \le s$, let $I_i = \{ (b,b') \mid $ where $b,b' \in B$ are connected and $\texttt{INTERFACE-LENGTH}(b,b') = i\}$. Thus, $I_i$ is the set of connected pairs of {\tt block}s which have {\tt interface}s of length $i$.
        
        \item For each $I_i$ where $|I_i| > 0$, assign an arbitrary, fixed ordering to $I_i$ and for $0 < |I_i| < j$, let $I_{i_j}$ be the $j$th connected pair in $I_i$.
        
        \item For each $I_{i_j}$:
        
        \begin{enumerate}
            \item Recall that $i$ is the assigned {\tt interface} length.
            
            \item Assign $j$ as the number assigned to the {\tt interface} (after the number of bits is doubled so that each 0-bit is represented by $01$ and each 1-bit by $10$).
            
            \item Let $(b,b') = I_{i_j}$ and $r = \texttt{RECT}(b,b')$
            
            \item As $r$ is a rectangle, it is 2-dimensional and has only two of width ($x$ dimension), height ($y$ dimension), and depth ($z$ dimension). If its width is $\ge i$, we call $r$ an East-West (EW) rectangle. Else, if its height is $\ge i$, we call $r$ a North-South (NS) rectangle. Otherwise, its depth must be $\ge i$ (by design of the algorithm determining the assigned value of $i$, it will fit in at least one dimension of $r$) and we call $r$ an Up-Down (UD) rectangle.
        
        %\item TODO: change this slightly so that we just find the length of the interface and then first try to fit it EW, then NS, then UD, since the rectangle may now be larger than the interface.  Let $r = \texttt{RECT}(b,b')$ for an arbitrary connected pair of {\tt block}s $b,b' \in B$. As $r$ is a rectangle, it is 2-dimensional and has only two of width ($x$ dimension), height ($y$ dimension), and depth ($z$ dimension). Additionally, one of its dimensions is greater than the other, or they are equal (i.e. it is a square). If the $x$ dimension is non-zero and greater than or equal to the other non-zero dimension, we call $r$ an East-West (EW) rectangle. Else, if the $y$ dimension is non-zero and greater than or equal to the other non-zero dimension, we call $r$ a North-South (NS) rectangle. Otherwise, we call $r$ an Up-Down (UD) rectangle.
        
            \item Define $\texttt{RECT-ROW}$ as a function such that on input $b,b' \in B$, $\texttt{RECT-ROW}(b,b')$ returns a single row of coordinates as follows. Rectangle $r$ is either EW, NS, or UD and has one other non-zero dimension ($x$, $y$, or $z$) other than the dimension its type is named for. If that other non-zero dimension is $x$ (resp. $y$, resp. $z$), set direction $d = E$ (resp. $N$, resp. $U$).  If $\texttt{RECT}(b,b')$ returns EW (resp. NS, resp. UD) rectangle $r$, $\texttt{RECT-ROW}(b,b')$ returns the row furthest in direction $d$ which runs EW (resp. NS, resp UD) in $r$.
        
            \item Let $r' = \texttt{RECT-ROW}(b,b')$. If $r'$ is an EW (resp. NS, resp. UD) rectangle, we define the \texttt{interface} for $I_{i_j}$ such that the easternmost (resp. northernmost, resp. uppermost) location in $r'$ is assigned the Black glue, the adjacent $i-2$ locations are assigned the $i-2$ bits of the binary representation of the number $j$, in order, with the least significant bit in the easternmost (resp. northernmost, resp. uppermost) location, and the next location is assigned the White glue, making it the westernmost (resp. southernmost, resp. downwardmost) location containing a non-zero amount the {\tt interface} information. The other locations of the row of $r'$ are assigned ``empty'' values. Define the function $\texttt{INTERFACE}(b,b')$ such that it returns this \texttt{interface} definition for the entire row of $r'$ for the {\tt interface} between $b$ and $b'$. (Recall that by our construction, any connected pair can have at most one \texttt{interface}.)
        \end{enumerate}
        
    \end{enumerate}
}

% \vspace{-10pt}
% \subsection{{\tt Block} growth}
% \vspace{-5pt}

% The growth of each {\tt block} is initiated by a portion of the linear {\tt genome} called a {\tt gene}, which is merely a line of tiles with glues exposed in one direction that encode all of the information required for the {\tt block} to self-assemble to the correct dimensions and with the necessary {\tt interface}s. The techniques used to encode the information and allow the {\tt block}s to grow are very standard tile assembly techniques involving binary counters, zig-zag growth patterns, and rotation of patterns of information. The information to seed the counters and encode the {\tt interface}s is encoded in the outward facing glues of the {\tt gene} and can be done so with the universal tile set $U$ since only a constant amount of information needs to be encoded in any particular {\tt gene} glue, due to the design of {\tt block}s and the fact that each has at most a single {\tt interface} on each side which is no longer than that side. Signals are used for detecting completed growth of {\tt block}s, controlling growth of {\tt interface}s so ``bump'' {\tt interface}s can't complete before all ``bumps'' are in place, and ``dent'' {\tt interface}s can grow beyond ``dent'' locations and then those tiles can fall out, and also so {\tt block}s can dissociate from {\tt gene}s.

\update{
    \vspace{-10pt}
    \subsection{Growth of a {\tt block}}\label{sec:block-growth}
    \vspace{-5pt}
    
    Each {\tt block} $b \in B$ making up shape $S$ has at most $6$ {\tt interface}s. Because of this constant bound, and the fact that each {\tt block} is a rectangular prism, it is possible to encode all of the information needed to grow an entire {\tt block} $b$ within a sequence of glues, taken from a set of glues that is constant over any shape $S$, that is no longer than the longest dimension of $b$.\footnote{Later we will also briefly mention ways in which the length can actually be as small as the $\log$ of the longest dimension.} We call each such sequence a {\tt gene}. In this section we show how a {\tt gene} can be encoded and how a {\tt block} can then grow from it.
    
    \begin{figure}
        \centering
        \includegraphics[width=\linewidth]{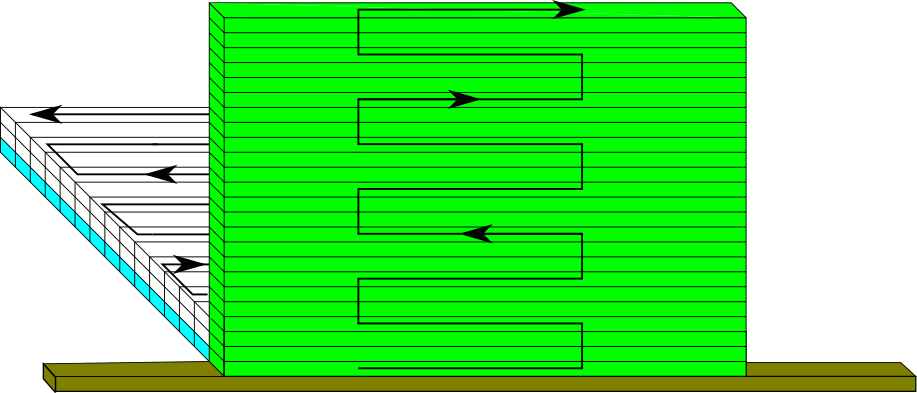}
        \caption{Schematic representation of the order of block growth (without directions shown for every row). Starting from a \texttt{gene} section, the green surface grows upward in a zig-zag pattern. As each row of the green face completes, one plane can grow perpendicularly to it (the first is shown in blue, with the next two in white). Each of these also grows in a zig-zag pattern away from the green face.}
        \label{fig:side-growth}
        \vspace{-10pt}
    \end{figure}
    
    Each {\tt block} grows so that one of its $6$ faces grows directly upward off of the {\tt block}'s {\tt gene}. The growth of this plane happens in a zig-zag manner, meaning that the first row grows completely from left to right (zigging), then the second from right to left (zagging), and the pattern continues until the growth terminates. (Shown schematically in green in Figure~\ref{fig:side-growth}.) The zig-zag pattern of growth allows for each row to transmit (and update) information it reads from the row below it (to be discussed shortly).
    
    As each row of the first face completes, a plane growing perpendicular to the first face can begin its growth. (The first such plane is shown in light blue in Figure~\ref{fig:side-growth}, and the next two in white.) Every row of each such plane also grows in a zig-zag manner, which allows information to be transmitted from the green initiating rows throughout each plane.
    
    To control the size of each plane, a pair of binary counters are used. The upward facing glues of the {\tt gene} encode a series of bits (which we will call the \emph{green} bits). As the face grows upward, every other row increments the value of the binary number represented by the bits, and every other row checks to see if all bits are equal to 1. If they are all equal to 1, upward growth terminates. An example can be seen in Figure~\ref{fig:counting}.
    
    \begin{figure}
        \centering
        \begin{subfigure}{\linewidth}
            \includegraphics[width=.9\textwidth]{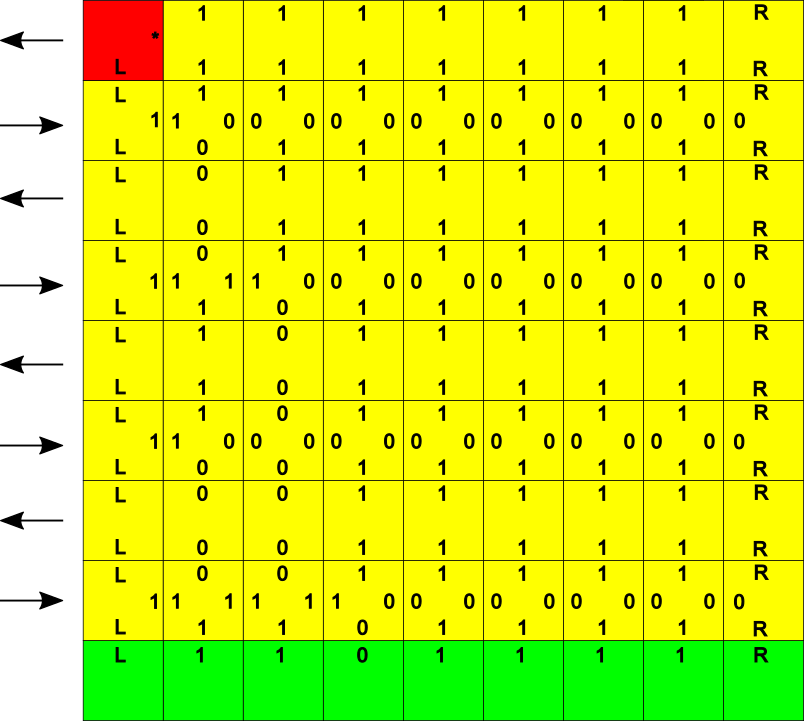}
            \caption{\label{fig:counting}}
        \end{subfigure}
        \hspace{0.06\textwidth}
        \begin{subfigure}{\linewidth}
            \includegraphics[width=.9\textwidth]{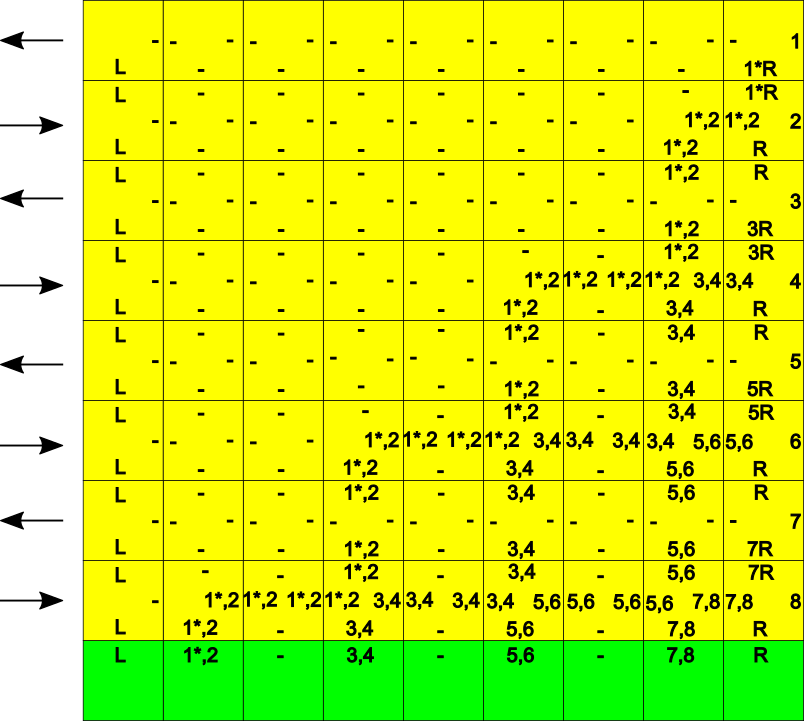}
            \caption{\label{fig:pattern-rotation}}
        \end{subfigure}
        \caption{Examples of basic growth patterns in {\tt blocks} in the hierarchical construction. (a) Example of a binary counter which increments every other row, and checks for all 1s on the others. Note that the least significant bit is on the left. When the ``checking'' row detects all 1s, a red tile marks the end of further upward growth. (b) A basic example of the rotation of a pattern.}
        \label{fig:2HAM-block-growth-details}
        \vspace{-10pt}
    \end{figure}
    
    \begin{figure}
        \centering
        \includegraphics[width=0.2\textwidth]{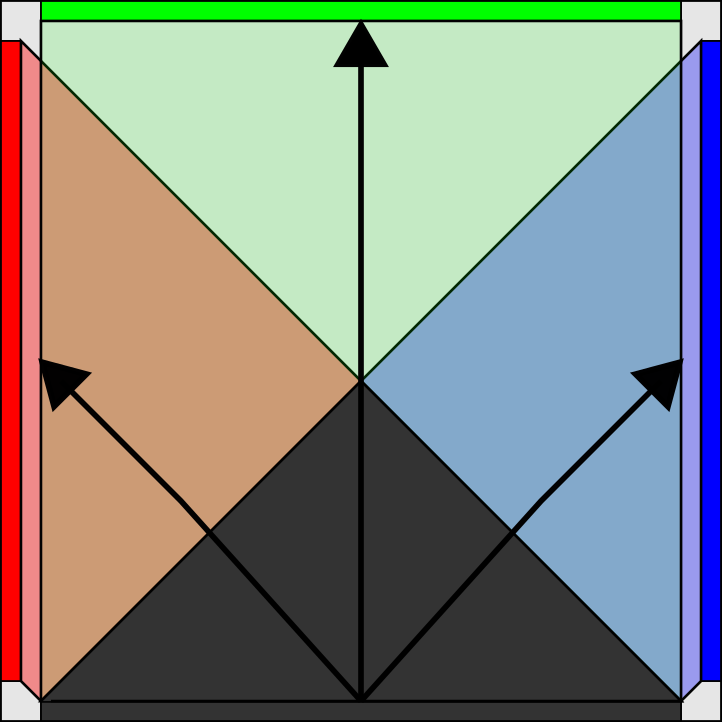}
        \caption{Schematic of how information can be transmitted to three sides of one plane of a {\tt block}, with rotations of alternating rows (as shown in Figure~\ref{fig:pattern-rotation}) rotating information to the left and right sides. \label{fig:3-sides}}
        \vspace{-10pt}
    \end{figure}
    
    We will call the bits of the counter which control the length of the perpendicular planes (shown as blue and white in Figure~\ref{fig:side-growth}) the \emph{blue} bits. These bits are also encoded in the upward facing glues of the {\tt gene} (i.e. each glue can encode both a green and a blue bit by making $4$ glues, one for each pair of bit values $00$, $01$, $10$, and $11$). However, as each row of the green face assembles, rather than using the blue bits to count, each row presents the blue bits on both its upward and backward facing glues. This allows them to be propagated up throughout the green face, unchanged, and to control the distance grown by each perpendicular plane, which uses them as the bits for its counter.
    
    With the {\tt gene}'s length implicitly encoding the size of one dimension of the growing {\tt block}, and the green and blue counter bits controlling the sizes of the other two dimensions, the {\tt block} grows into a rectangular prism of the correct dimensions. (Note that growing counters, zig-zag growth, rotating bits, etc. are very standard techniques in tile assembly literature - see \cite{IUSA,j2HAMIU,Versus,SolWin07,RotWin00} for just some examples - and issues like growing sides of odd length, despite the zig-zag pattern, are easily handled with a few extra glues that signal for one additional row to grow.)
    
    Each {\tt block} has a fixed orientation relative to the others when they are attached together to form the shape $S$, and since we (arbitrarily) assign each shape a canonical translation and rotation, each {\tt block} has a canonical orientation which allows us to refer to its sides by the directions they face in that orientation. Throughout, we talk about {\tt block}s in term of this orientation, irrespective of the orientation in which they grow.
    
    This (simplified version of the) construction has each {\tt gene} equal to the length of the longest dimension of the {\tt block} it initiates. This could lead to the first surface to grow being any of at least $4$ sides, so without lack of generality we fix a preferred ordering as: North, East, South, West, Up, Down. Therefore, of the multiple faces which share the longest dimension, that appearing first in the ordering grows ``first'' (i.e. as the green face, as shown in Figure~\ref{fig:side-growth}), and with the side attached to the {\tt gene} being that whose coordinates are the smallest along the direction of upward growth of the first face.
    
    \subsection{{\tt interface} growth}
    
    With the dimensions of each {\tt block} correctly controlled, the next thing to ensure is correct growth of the {\tt block}'s {\tt interface}s. As previously mentioned, there are at most $6$ of these (no more than one per side), and each \\{\tt interface} consists of two outward facing glues (Black and White) with a possible series of ``bumps'' and ``dents'' between them, geometrically encoding the bits of the number which is uniquely assigned to that {\tt interface}. If the {\tt interface} is on the North, East, or Up side, in the location of each bit $b = 1$ there is a tile which extends from the side as a ``bump'', and in the location of each bit $b = 0$, there is no such bump. If the {\tt interface} is on the South, West, or Down side, in the location of each bit $b = 1$ there is an empty tile location (i.e. a ``dent''), and in the location of each bit $b = 0$, there is no such dent. (See Figure \ref{fig:2HAM-block-interfaces2} for examples of {\tt interfaces} with ``bumps'' and ``dents''.)
    
    The information defining each {\tt interface} can be encoded as a series of glues representing the locations of the Black and White {\tt interface} glues plus each of the bits of the assigned {\tt interface} number, as well as the information about whether the $1$-bits are encoded as ``bumps'' or ``dents'' for the particular surface. Using the same technique as mentioned previously for adding information about an extra bit to the glues extending from the {\tt gene}, we can similarly add the information which defines each of the (up to $6$) {\tt interface}s of a {\tt block}. Therefore, we individually discuss the patterns by which the information specifying each {\tt interface} is propagated into the correct locations, and note that all of that information can be encoded in the outward facing glues of the {\tt gene} and then distributed to the proper locations in the {\tt block} during the growth process previously described. Once we've explained how the information about each {\tt interface} arrives at the correct location, we will discuss the tiles which encode it.
    
    There are $6$ sides, and for each side $2$ orientations which must be considered for the possible {\tt interface} on that side (note that on {\tt block} sides which don't have {\tt interface}s, nothing needs to be done beyond the growth of the side to the correct dimensions as previously described). One orientation we will refer to as ``parallel'' to the {\tt gene}, and the other as ``perpendicular'' (although these terms aren't technically accurate for all cases). The parallel cases are depicted in Figure~\ref{fig:parallel-interfaces}, and the perpendicular cases are depicted in Figure~\ref{fig:perpendicular-interfaces}.
    
    \begin{figure*}[ht]
        \centering
        \includegraphics[width=0.95\textwidth]{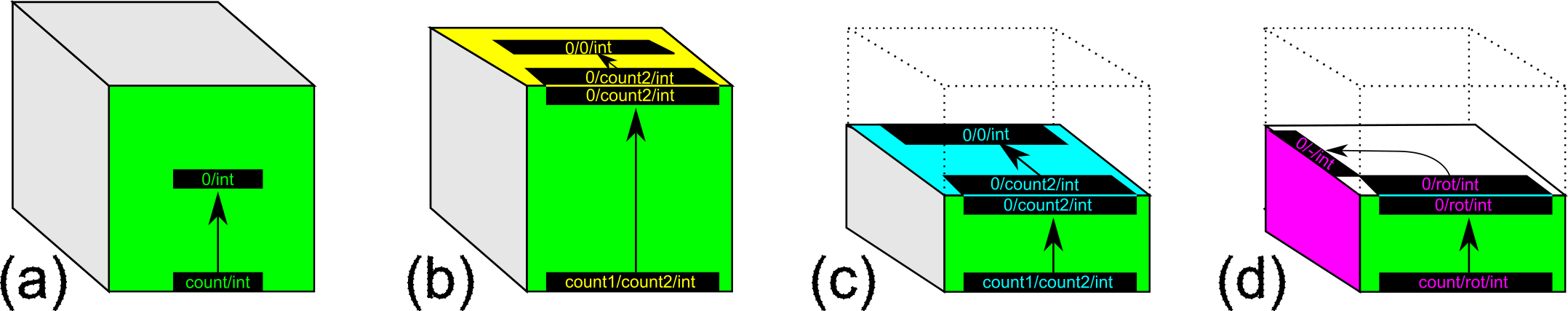}
        \caption{Schematic representation of the patterns by which {\tt interface} information is propagated into the correct positions for ``parallel'' {\tt interface}s. (a) A counter is used to determine the correct height for the {\tt interface} on the green side, (b) Two counters are used to position the {\tt interface} on the yellow side. The first counts to the top of the green side, then the bits of the second and the {\tt interface} are rotated onto the yellow colored plane and the second counts to the proper location for the {\tt interface} on that side. (c) Two counters are used to position the {\tt interface} on the back side of the {\tt block}. The first counts to the correct height, then the bits of the second and the {\tt interface} are rotated onto the blue colored plane and the second counter counts the distance to the back surface. (d) To position the {\tt interface} on the pink side, a counter first counts to the correct height, then the bits are rotated to the pink face during the outward growth of the white plane. Note that the side opposite the pink {\tt interface} is positioned analogously but with an opposite rotation, and the bottom {\tt interface} is positioned similar to the top (yellow) but without the necessity of the first counter.}
        \label{fig:parallel-interfaces}
        \vspace{-5pt}
    \end{figure*}
    
    \begin{figure*}[ht]
        \centering
        \includegraphics[width=0.95\textwidth]{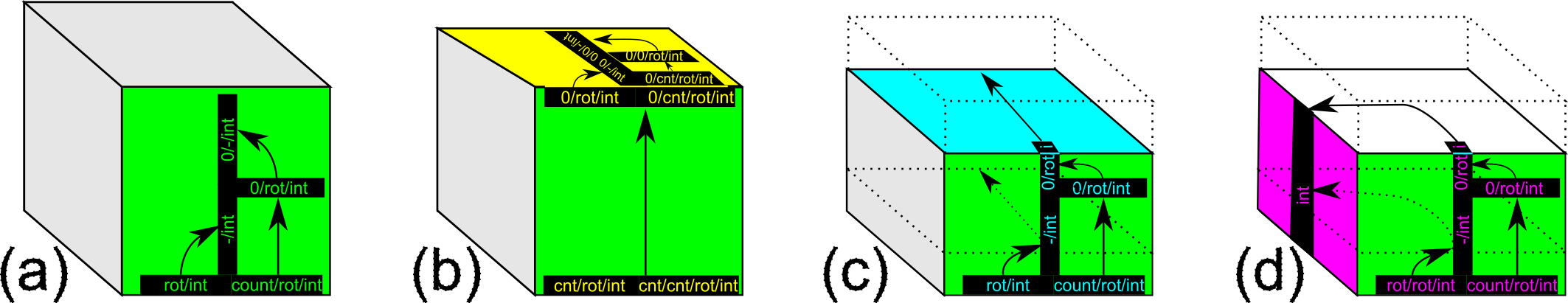}
        \caption{Schematic representation of the patterns by which {\tt interface} information is propagated into the correct positions for ``perpendicular'' {\tt interface}s. (a) For a perpendicular {\tt interface} on the green side, the information is split into two halves (depending on the actual length and position of the {\tt interface}). The left half is rotated upward immediately, and the second half has a counter which first moves it upward to the halfway point, and then it is rotated. Note that any offset from the center can be accommodated by shifting the location of the split and the height of the counter. If the {\tt interface} needs to be completely to the left or right, only one rotation is needed, and no splitting of the information or counting is needed. (b) The positioning of the {\tt interface} on the top is the same as for the green side, but a counter first propagates all information to the top, where it is rotated to the yellow surface. (And the same holds for the bottom surface but without the initial counter.) (c) To position the {\tt interface} on the back surface, the same rotations and counting are used as for the green surface. However, then the information from each row is carried all the way to the back surface following the counter which dictates that distance. (d) To position the {\tt interface} for the pink surface, again the same rotations and counting are used to align the information on the green surface, but then the information of each row is rotated to the pink surface as its plane grows away from the green surface. The surface opposite the pink is handled similarly, but with an opposite rotation.}
        \label{fig:perpendicular-interfaces}
        \vspace{-10pt}
    \end{figure*}
    
    It is important to note that the patterns shown in Figures~\ref{fig:parallel-interfaces} and \ref{fig:perpendicular-interfaces} suffice when each {\tt interface} is anywhere from the minimum allowed size (i.e. $2$) up to the maximum size, which is the full length of the side on which it is located. This is because the construction is designed so that the length of the {\tt gene}, and thus the green side, is the length of the longest dimension of the {\tt block}. Thus, there is room for the information in a longest-possible {\tt interface} to be correctly positioned, and shorter {\tt interfaces} can also be correctly positioned by correctly shifting the locations of information in the {\tt gene} so that the counters and rotations will propagate it correctly. Additionally, Figures~\ref{fig:parallel-interfaces} and \ref{fig:perpendicular-interfaces} depict the cases where each {\tt interface} is in the center of its surface, but any position along each surface can be accommodated by simply adjusting initial information alignment along the {\tt gene}, counter values, and/or the location of splits between rotations and counting.
    
    Recall that the {\tt block}s on either side of an {\tt interface} have complementary geometries, i.e. one has ``bumps'' in the $1$-bit locations and the other has ``dents''. Once the information encoding an {\tt interface} reaches the correct location on the correct surface, the locations assigned the Black and White glues of the {\tt interface} receive tiles which have strength-1 glues of those types exposed on the exterior of the {\tt block} for the {\tt block} with a bump {\tt interface}, and the {\tt block} with the dent {\tt interface} receives tiles which expose the complements of those glues (i.e. Black$^*$ and White$^*$, respectively). Additionally, in $1$-bit positions for a {\tt block} with a bump {\tt interface}, tiles attach which have strength-2 glues exposed, allowing the ``bump'' tiles to attach, and signals ensure that all ``bump'' tiles have attached before the Black tile can attach and enable the {\tt interface} to bind to its counterpart. See Figure \ref{fig:bump-tiles} for details of the signals.
    
    \begin{figure}[ht]
        \centering
        \includegraphics[width=0.4\textwidth]{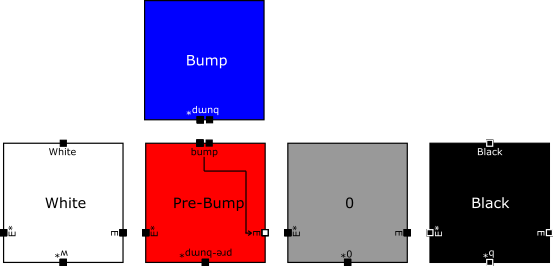}
        \caption{Templates for tile types which make up a ``bump'' {\tt interface} that extends in the same direction as the growth direction of the plane. These tile types are used for an {\tt interface} that grows from left to right, so each ``Pre-bump'' tile stalls the growth of the {\tt interface} until all of its associated ``Bump'' tiles have attached, ensuring that all ``bumps'' are in place before the Black tile is in place, since that allows the {\tt block} to attach to a {\tt block} with the complementary half of the {\tt interface}.}
        \label{fig:bump-tiles}
        \vspace{-10pt}
    \end{figure}
    
    \subsubsection{Formation of ``bumps'' and ``dents'', and detachment of {\tt block}s}
    
    The first portion of this construction which requires signals are the formations of the {\tt interface}s and the detachment of {\tt block}s from the {\tt genome}. (To make the exposition easier to understand, we will wait until Section \ref{sec:block-combo} to discuss a layer of signals which overlay each {\tt block}.) These situations are relatively straightforward to handle with signals, and in this section we provide an overview of the various cases and how they are handled. We also provide depictions of templates for the required signal tile types, which abstract away the fact that a variety of additional information (useful for the further growth of a {\tt block}) may be propagated through the signal tiles using standard glues. This is simply handled by making a set of signal tile types for each template provided, with a unique tile type for each glue type which needs to pass additional information through it. For each of the provided sets of templates for signal tile types, there are tile sets generated for each of the various permutations of such glues, as well as the orientations and locations of {\tt interface}s. However, the number of permutations and thus signal tile types is a constant, irrespective of the shape $S$. Which set is to be used for each {\tt interface} is encoded along with the definition of the {\tt interface} in the corresponding {\tt gene}.
    
    The general scenarios we will address are: (1) the growth of an {\tt interface} at the terminal edge of the plane of a {\tt block}, (2) the growth of an {\tt interface} in the middle of a plane of a {\tt block},  and (3) the detachment of a {\tt block} from the {\tt genome}, both when the attached row does and does not need to also encode an {\tt interface}.
    
    \paragraph{The growth of an {\tt interface} at the edge of the plane of a {\tt block}}
    \label{sec:end-of-plane}
    
    If the ``dents'' of an {\tt interface} appear in the row (or column) immediately adjacent to the edge of the plane in which the {\tt interface} information is being propagated, the White, Black, and $0$-bit tiles (i.e. those of the {\tt interface} not corresponding to ``dent'' locations) are attached with strength-2 glues to the row (or column) preceding the final row (or column).
    No tiles are ever placed in the locations of the ``dents'', and therefore no signals are required to allow tiles to detach, this basic scenario is handled without any signals.
    
    The growth of the ``bumps'' of an {\tt interface} requires that all ``bump'' tiles are in place before the Black and White glues are in place and active. 
    Otherwise, the {\tt interface} could be missing ``bumps'' and allow incorrect binding of the {\tt block} with another {\tt block}. The templates for the necessary tiles are shown in Figure~\ref{fig:bump-tiles}.
    Note that any ``Pre-Bump'' tiles will not activate the glue needed for the next tile of the {\tt interface} to attach until its ``Bump'' tile has attached. Therefore, only if all ``Bump'' tiles have attached throughout the {\tt interface}, will the {\tt interface} be able to grow to the point of the ``Black'' tile attaching.
    (Note that this assumes growth from the ``White'' side of the {\tt interface} to the ``Black'', and appropriately modified tiles exist for {\tt interface}s growing in the opposite direction.) The same tile types are used whether the {\tt interface} is at the edge of a plane (in the growth direction) or in the middle, except that in the middle of a plane the ``bumps'' must be going either into, or out of, the plane and thus the ``bump'' glue is either on the side of the tile facing into the page, or that facing outward. The directions for the White and Black glues are the same.
    
    \paragraph{The growth of an {\tt interface} in the middle of a plane of a {\tt block}}
    
    If the ``dents'' of an {\tt interface} appear in any row (or column) before the end of the plane, in the direction in which the {\tt interface} information is being propagated (i.e. in the middle of the plane), then tiles will initially be placed in the ``dent'' locations in order to allow information to be propagated through those locations, but signals will eventually cause them to deactivate glues and dissociate.
    The templates for the tile types of an {\tt interface} if it is positioned in such a location are shown in Figure~\ref{fig:dent-tiles}, and a an example sequence showing a portion of a growing plane and {\tt interface} which places, then eventually loses, tiles in the ``dent'' locations can be seen in Figure~\ref{fig:dent-tiles-sequence}. 
    
    The case in which the ``bumps'' of an {\tt interface} appear in the middle of a plane is (nearly) identical to the case in Section \ref{sec:end-of-plane} and was discussed there.
    
    \begin{figure}[ht]
        \centering
        \includegraphics[width=0.45\textwidth]{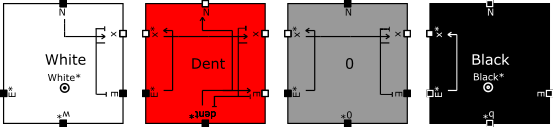}
        \caption{Templates for tile types which make up a ``dent'' {\tt interface} that is not at the edge of the plane in which it is growing. These tile types are used for an {\tt interface} that grows from left to right. Signal propagation which initiates tile detachment begins once a tile has attached to the north of a White tile, since the zig-zag growth pattern means that a row must have completed growth to the north of the {\tt interface}, growing right to left. This allows the ``dent'' tiles to propagate any needed information via their northern glues before they dissociate. An example growth sequence can be seen in Figure~\ref{fig:dent-tiles-sequence}.}
        \label{fig:dent-tiles}
        \vspace{-10pt}
    \end{figure}
    
    \begin{figure}[ht!]
        \centering
        \includegraphics[width=0.45\textwidth]{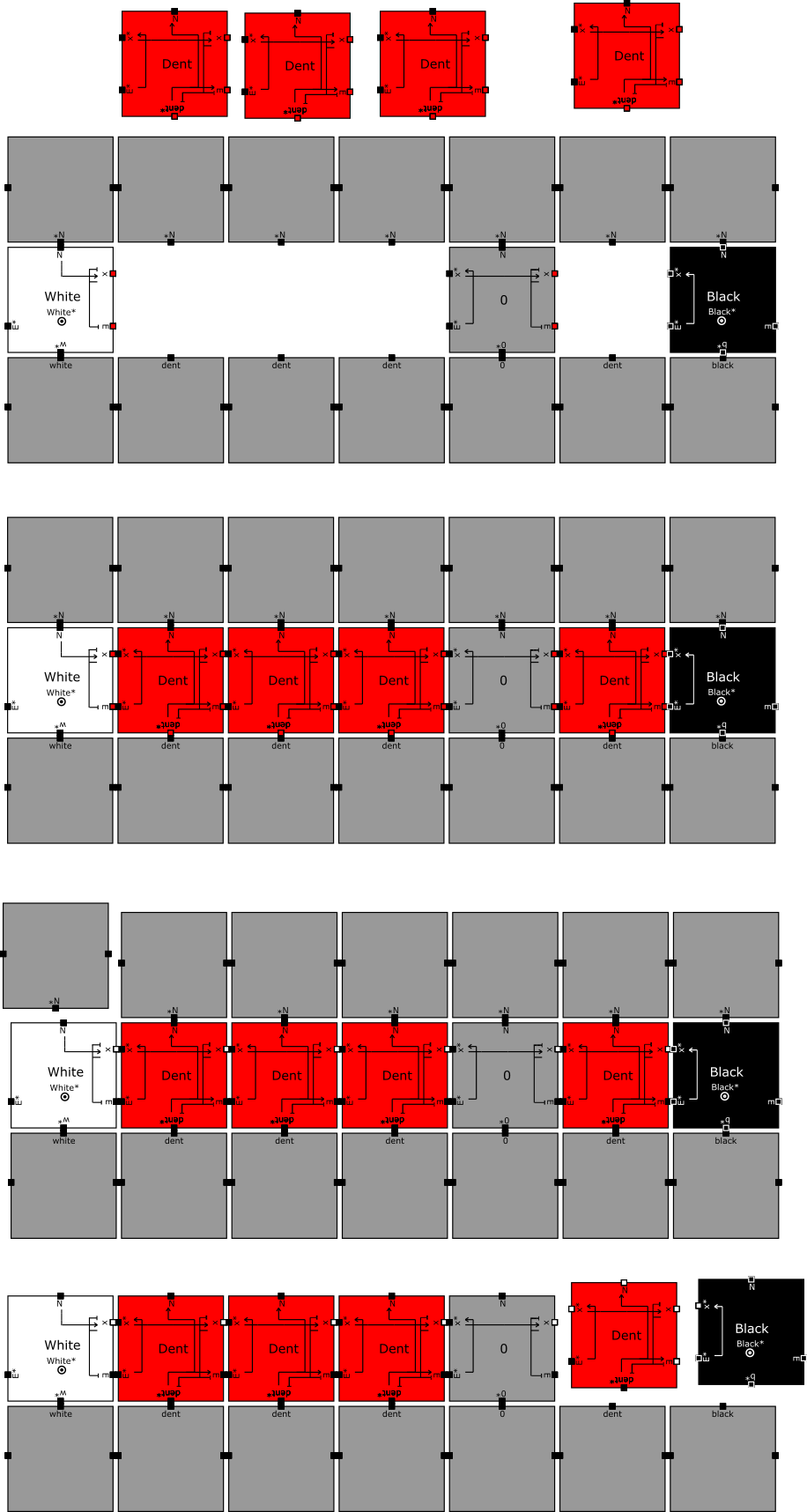}
        \caption{From bottom to top, example sequence of growth around ``dent'' tiles and their dissociation, using tiles of the templates shown in Figure~\ref{fig:dent-tiles}.}
        \label{fig:dent-tiles-sequence}
        \vspace{-10pt}
    \end{figure}
    
    \paragraph{The detachment of a {\tt block} from the {\tt genome}}
    
    We now discuss the detachment of a {\tt block} from the {\tt genome}, which we break into 3 sub-cases: (1) the attached row does not include an {\tt interface}, (2) the attached row encodes an {\tt interface} with ``dents'', and (3) the attached row encodes an {\tt interface} with ``bumps''.
    
    The templates for the tile types used in Case 1 are shown in Figure \ref{fig:block-detach-tiles}. The completion of the first row, consisting of one Left tile, one Right tile, and perhaps many Mid tiles, causes the ``x''glues to bind the tiles of the row together with strength 2 and to deactivate their glues attached to the {\tt genome} to the south. The stable assembly containing the complete first row can detach at any point during the continued growth of the {\tt block} and it will correctly complete. Also, once complete it will be free to bind with {\tt blocks} that have complementary {\tt interface}s.
    
    \begin{figure}[ht]
        \centering
        \includegraphics[width=0.35\textwidth]{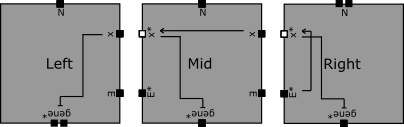}
        \caption{Templates for the tile types which make up the first row of a {\tt block} and allow the {\tt block} to detach once the first row has completed. It is assumed that the first row of a {\tt block} always grows from left to right. The ``x'' glues ensure that all tiles are bound with strength 2, and thus is it safe for the row to detach from the {\tt genome} as soon as it completes. The rest of the {\tt block} correctly grows from this row.}
        \label{fig:block-detach-tiles}
        \vspace{-10pt}
    \end{figure}
    
    The templates for the tile types used in Case 2 are shown in Figure \ref{fig:detach-dent-tiles}. In this case, a special tile type is also needed to bind to the north of the White tile to ensure that they are bound with strength 2, since it's possible for the White tile to be the leftmost and to have a ``dent'' location next to it, meaning it will end up attached only to the tile to its north. Essentially, the tile types ensure that the row to the north completes before any detachments. Then, signal propagation from left to right causes Dent tiles to dissociate, all tiles to deactivate the glues binding them to the {\tt genome}, and 0 and Black tiles to bind to their northern neighbors with strength 2 to ensure the stability of the assembly.
    
    \begin{figure}[ht]
        \centering
        \includegraphics[width=0.45\textwidth]{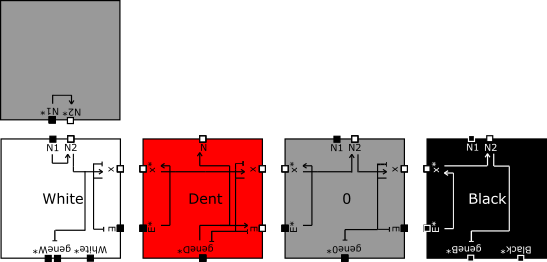}
        \caption{Templates for the tile types which make up the first row of a {\tt block} that allow the {\tt block} to detach once the first row has completed, when that first row also has to create an {\tt interface} with ``dents''. Note that the tiles shown here are for the scenario in which the {\tt interface} extends for the entire length of the row, since this is a slightly more complicated case due to the necessity of $\tau$-stability of the White and Black tiles. If the {\tt interface} only occupies a portion of the row, a few trivial edits are made to the template for the necessary tile types.}
        \label{fig:detach-dent-tiles}
        \vspace{-10pt}
    \end{figure}
    
    The templates for the tile types used in Case 3 are shown in Figure \ref{fig:detach-dent-tiles}. In this case, the glues attaching the row to the {\tt genome} are detached via the propagation of an ``x'' signal which also ensures that tiles are bound with strength 2, and a ``w'' signal ensures that all Bump tiles are attached before the White glue can be activated, making sure the {\tt interface} is correctly completed before it is able to bind to another {\tt block}.
    
    \begin{figure}[ht]
        \centering
        \includegraphics[width=0.45\textwidth]{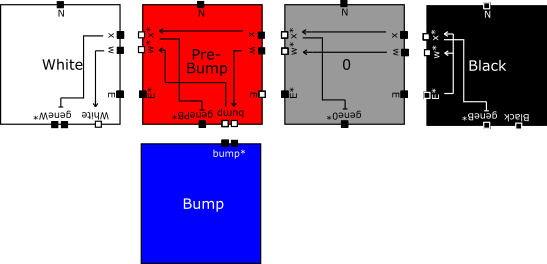}
        \caption{Tiles which make up the first row of a {\tt block} which allow the {\tt block} to detach once the first row has completed, when that first row also has to create an {\tt interface} with ``bumps''. (Note that the tiles shown here are for the scenario in which the {\tt interface} extends for the entire length of the row. If the {\tt interface} only occupies a portion of the row, a few trivial edits are made to the template for the necessary tile types.) Once the row completes with the attachment of a Black tile, the propagation of the ``x'' signal initiates the deactivation of glues attached to the {\tt genome} and the ``w'' signal ensures that all Bump tiles have attached before the White glue is activated. The signals also ensure the $\tau$-stability of the row so it can safely detach.}
        \label{fig:detach-bump-tiles}
        \vspace{-10pt}
    \end{figure}
    
    % %\begin{wrapfigure}{r}{3.0in}
    % \begin{figure}
    %     \centering
    %     \includegraphics[width=0.45\textwidth]{}
    %      \caption{A basic example of expansion of a {\tt block} in two directions. Note that growth could instead be only in the vertical direction.}
    %     \label{fig:side-increase}
    %     \vspace{-10pt}
    % \end{figure}
    % %\end{wrapfigure}
    
    \subsection{Combination of {\tt block}s to form the target shape}\label{sec:block-combo}
    Once a {\tt block} has detached from its {\tt gene}, it is a freely floating supertile which may or may not require additional tile attachments to complete its own growth. However, only {\tt interface}s that have completed are able to bind with strength $2$ to the complementary {\tt interface}s of other {\tt block}s. Additionally, we now discuss a set of signals that allow for a {\tt block} to determine when all tiles have attached. The growth of each plane in a {\tt block} follows the same zig-zag pattern so that the final tile placed in each plane (other than possibly ``bump'' tiles of {\tt interface}s) falls into a single vertical column. These tiles are augmented with signals such that when the final tile of the bottommost plane attaches, it activates a glue that allows it to bind to the tile above it (whose complementary glue will be activated when it attaches). The tile above it in turn passes this signal upward, with each in the column doing the same until the final tile of the top plane is reached. Once that tile (which is of a special type) is placed, it is guaranteed that all tiles of all planes (other than possibly ``bump'' tiles of {\tt interface}s) have attached since each plane signals its completion in order from bottom to top.
    
    Upon receiving the ``completion'' signal, the final tile of the top plane then sends that signal outward, spreading across all tiles on all $6$ surfaces of the {\tt block}. These ``surface'' tiles are all equipped with signals that allow them to receive and pass on this completion signal (and during the growth of the {\tt block} it is always known which tiles will be on a surface since they are at an edge of their plane of growth). The previous description of the signals which activate the Black and White glues (and their complements) on {\tt interface}s was slightly simplified to omit this final detail: the previously described signals which activate those glues actually activate glues facing neighboring tiles so that only at that point they are able to receive the completion signal. It is the reception of this signal which actually activates the Black and White (and Black* and White*) glues on the {\tt interface}s.
    
    The addition of the extra layer of ``completion'' signals ensures that only a {\tt block} that has received all of the tiles of its body can have active {\tt interface}s. Once an {\tt interface} is active and able to bind to the complementary {\tt interface} of another {\tt block}, the {\tt block} combines to a growing supertile consisting of the {\tt block}s forming an assembly of shape $S$. Furthermore, by the definition of a block-diffusable shape and the fact that $S$ is such a shape, it is always possible for a free {\tt block} to attach as needed in any such growing supertile. Thus, the {\tt block}s will eventually form completed, and terminal, assemblies of shape $S$.
    }

\vspace{-10pt}
\subsection{Overview of the hierarchical construction}
\vspace{-5pt}

We have described how we can begin with an arbitrary block-diffusable 3D shape $S$, decompose it into rectangular prisms called {\tt block}s with complementary {\tt interface}s between them, encode the information needed to make each block into a {\tt gene} subassembly of a {\tt genome} seed assembly, and how the {\tt block}s can independently grow, detach from the {\tt genome}, and attach to each other to form an assembly of the target shape $S$ (or a scaled version if needed). By the design of the {\tt interface}s, the {\tt block}s can only combine in the correct manner. 
Once a {\tt block} is freely diffusing and complete, it can combine along its {\tt interface}s with the {\tt block}s that have complementary {\tt interface}s since, due to the fact that $S$ is a block-diffusable shape, free {\tt block}s can always diffuse into the proper locations to form the complete shape. We've described a tile set $U$ that can be used to (1) form the linear seed assembly $\sigma_S$ , and (2) to self-assemble the {\tt block}s which correctly combine to form the target assembly.  The STAM* system $\mathcal{T_S} = (U, \sigma_S, 2)$ will produce an infinite number of copies of terminal assemblies of shape $S$ (properly scaled if necessary). The only fuel (a.k.a. consumed, junk assemblies) will be singleton Dent tiles that attached during {\tt block} growth then detached. Note that this construction can be combined with the previous constructions as well, to create a version of a shape self-replicator.

\update{
\vspace{-10pt}
    \subsection{Possible enhancements to the hierarchical construction}
    \vspace{-5pt}
    
    There are many ways in which this construction could be easily modified to further optimize tile complexity and other parameters. For example, to shrink the length of the {\tt genome}, {\tt gene}s could be compressed so that they are no longer required to be as long as the largest dimension of a {\tt block}. Instead, in cases where {\tt interface}s are shorter than {\tt block} side lengths and appropriately positioned, it is possible to shrink the {\tt gene} encoding a block to as small as $\log$-width. This can be done by incorporating counters that also grow out the width of a {\tt block}.% (An overview can be seen in Figure~\ref{fig:side-increase}.) 
    Additional, even asymptotically optimal, compression could be achieved by instead encoding the shortest program which can output the {\tt gene} necessary to grow a {\tt block} and then a ``fuel efficient'' Turing machine \cite{jSignals} can be simulated with signal tiles which grow from the {\tt genome} until that encoding is output, allowing {\tt block} growth to proceed from there. Note that this option could greatly increase the the fuel consumed. 
    
    As another example, the necessity to scale certain shapes could be removed by only slightly increasing tile complexity, i.e. the size of $U$. For example, by adding a constant number $m$ of tile types to also be candidates for the ends of {\tt interfaces} (along with the White and Black tiles), the number of {\tt interface}s of each length (which is the limiting number potentially requiring scaling of a shape) can be increased by a factor on the order of $m^2$. There are many other such variations that can be used to balance several factors of the construction to optimize trade-offs for desired goals. Also, for many variations on the specific algorithm which is used to determine the encoding of $S$ into the {\tt genome}, no changes are even required to $U$, so the algorithm can be modified to favor particular tradeoffs over others (e.g. scale factor over {\tt genome} length) without any other modifications to the system.
    
    Finally, it is easy to combine this construction with the previous constructions. For instance, tile types could be added to $U$ from the construction in Section \ref{sec:simple-replicator} that also create duplicate copies of $\sigma_S$. Additionally, an actual self-replicating system could be built by including the shape-deconstruction capabilities of the construction in Section \ref{sec:deconstruct}. Let $M$ be a Turing machine that performs the following computation. Given an input string consisting of the turns of a path through $\mathbb{Z}^3$ (i.e. the path encoded in a seed assembly genome of the construction in Section \ref{sec:simple-replicator}), it first computes the points of the shape $S$ generated by that path. It then performs the computations for the hierarchical replicator of this section to compute a valid input {\tt genome} for it. Simulation of an arbitrary Turing machine is straightforward even with static aTAM tiles (e.g. \cite{jSADS,jCCSA,SolWin07}) and can additionally be made ``fuel efficient'' using signal tiles \cite{jSignals}. Therefore, there exists a system which can take as input an assembly as for the construction of Section \ref{sec:deconstruct} and use the components of that construction to deconstruct it into a linear genome. Tiles which simulate $M$ then perform the generation of the input {\tt genome} for the hierarchical replicator, which proceeds to make copies of assemblies of shape $S$. This is a more complicated self-replicator which consumes much more fuel (i.e. the TM computation tiles - but note that using techniques of \cite{jSignals} that amount is greatly reduced, and the junk assemblies can all be guaranteed to be of small, constant size) but after the {\tt genome} is computed once it is infinitely replicated along with copies of the shape.

\vspace{-10pt}
\section{The Requirement for Deconstruction}\label{sec:need-to-deconstruct}
\vspace{-5pt}

    \begin{definition}
    Given a tile set $T$, a \emph{porous assembly} $\alpha$, over tiles in $T$, is one in which it is possible for unbound tiles of one or more types in $T$ to pass freely through either (1) the body of one or more tiles in $\alpha$, or (2) the gaps between tiles in $\alpha$ (which means between bound glues if the tiles are bound to each other), or (3) a combination of both. Conversely, a \emph{non-porous assembly} is one in which no unbound tiles can pass through any of the tile bodies or gaps between tiles.
    \end{definition}
    
    For theoretical results, we tend to consider all tile bodies to be solid, or at least solid enough to prevent the diffusion of other tiles through them. Whether or not an assembly is porous then depends upon factors such as the spacing between tiles, lengths of glues, and spacing of glues. For instance, the seed assemblies for the construction in Section \ref{sec:deconstruct} are non-porous assuming glues are spread evenly along the edges of tiles.
    
    In this section we prove that in the STAM* there cannot be a universal shape self-replicator in systems with non-porous assemblies that does not use (an arbitrary amount of) deconstruction.
    
    \begin{theorem}\label{thm:need-to-deconstruct}
    Let $U$ be an STAM* tile set such that for an arbitrary 3D shape $S$, the STAM* system $\mathcal{T} = (U,\sigma_S,\tau)$ with $\dom \sigma = S$, $\mathcal{T}$ is a shape self-replicator for $S$ and $\sigma$ is non-porous. Then, for any $r \in \mathbb{N}$, there exists a shape $S$ such that $\mathcal{T}$ must remove at least $r$ tiles from the seed assembly $\sigma_S$.
    \end{theorem}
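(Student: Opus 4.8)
The plan is to argue by contradiction using an information-theoretic counting argument, where the key leverage is that $U$ is a single \emph{universal} tile set (hence of constant size) that must serve every shape, so all shape-specific information must reside in the seed $\sigma_S$. Suppose the conclusion fails: then there is a fixed $r_0 \in \mathbb{N}$ such that for \emph{every} shape $S$, the replicator $\mathcal{T} = (U,\sigma_S,\tau)$ admits a successful replication (one producing a new terminal assembly of shape $S$) during which fewer than $r_0$ tiles are ever removed from the seed $\sigma_S$. I will derive a contradiction by exhibiting a family of shapes whose information content outstrips what $r_0$ deconstructions can reveal.

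First I would fix a large integer $n$ and consider the family $\mathcal{F}_n$ of shapes obtained from the solid $n \times n \times n$ cube by deleting an independent set of strictly interior voxels (no two deleted voxels adjacent, none touching the boundary). Each such deletion yields a connected polycube whose \emph{exterior surface is identical} to that of the solid cube, yet the shapes differ only in their sealed internal cavities. Since there are $\Omega(n^3)$ interior voxels and the independent sets among them number $2^{\Omega(n^3)}$, we obtain $|\mathcal{F}_n| = 2^{\Omega(n^3)}$ distinct shapes sharing one fixed exterior geometry. Next I would bound the information any trajectory with fewer than $r_0$ removals can access. Writing $c_U = |U|$ and noting that every tile carries only $O(1)$ glue/signal-state information, the common exterior has $O(n^2)$ surface tiles, so the possible labelings (tile types and states) of a seed's exposed surface number at most $2^{O(n^2)}$. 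Removing a tile from a non-porous assembly exposes only the $O(1)$ faces of its neighbors, so a trajectory removing fewer than $r_0$ tiles reveals at most $O(n^2 + r_0)$ tiles in total; hence the full \emph{accessible content} $C(S)$ — the labeled partial structure the process ever touches — ranges over at most $2^{O(n^2 + r_0)}$ possibilities. For fixed $r_0$ and $n$ large, $2^{O(n^2 + r_0)} < 2^{\Omega(n^3)} = |\mathcal{F}_n|$, so by pigeonhole there are distinct $S_1 \ne S_2 \in \mathcal{F}_n$ with $C(S_1) = C(S_2)$.

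The final and hardest step is the indistinguishability (``replay'') argument, which is where non-porosity does the real work. Because the replication of $S_1$ interacts only with the accessible content $C(S_1)$ and with singleton tiles drawn from the universal solution (identical for both systems), and because the unrevealed interior of $\sigma_{S_1}$ is sealed — by non-porosity no singleton tile can diffuse in to reach it and no hidden interior tile can escape to influence growth — the entire producing trajectory for $S_1$ depends on the seed only through $C(S_1)$. Since $C(S_1) = C(S_2)$, that same trajectory is executable in the system $(U,\sigma_{S_2},\tau)$ and yields a terminal assembly of shape $S_1$. But $(U,\sigma_{S_2},\tau)$ is a shape self-replicator for $S_2$, so every terminal assembly it produces either has shape $S_2$ or has size at most its (finite) waste size; since the produced shape-$S_1$ assembly has size $\Theta(n^3)$ and $S_1 \ne S_2$, for large $n$ this is a non-waste terminal assembly of the wrong shape, a contradiction. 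This contradiction shows no such uniform bound $r_0$ can exist, proving the theorem.

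I expect the main obstacle to be making the ``replay'' step fully rigorous. One must verify that interleaving deconstruction with growth never lets the sealed interior leak information (so that $C(S)$ genuinely captures everything the process can condition on), that $C(S)$ is well defined despite the nondeterministic scheduling of the STAM$^*$, and that the cap of $O(n^2 + r_0)$ revealed tiles survives \emph{adaptive} removal choices rather than only a fixed removal pattern. Carefully defining $C(S)$ as the union over all reachable configurations of the exposed tile labelings, and checking that each step valid in the $S_1$-run remains enabled in the $S_2$-run because its enabling glues and tiles lie in the shared content $C(S_1)=C(S_2)$, should resolve these points.
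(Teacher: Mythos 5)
Your proposal is correct and rests on the same core engine as the paper's proof: a family of $2^{\Omega(n^3)}$ connected shapes that are interior perturbations of an $n\times n\times n$ cube with identical exteriors, a pigeonhole over the at most $\lambda^{O(n^2)}$ possible labeled exterior subassemblies (with $\lambda$ the constant number of tile types times glue/signal states), and an indistinguishability argument powered by non-porosity. The two genuine differences are worth recording. First, the shape family: you delete an independent set of strictly interior voxels, whereas the paper carves interior planes into a comb pattern with optionally present ``teeth'' hanging off a spine that meets the exterior at a single tile; both give connectivity and the same counting, and yours is arguably simpler to verify. Second, and more substantively, the amplification to $r$ removals: the paper's pigeonhole only establishes that \emph{at least one} tile must be removed per cube gadget, and then obtains the bound $r$ by chaining $r$ disjoint cube gadgets along a thin path, whereas you fold the removal budget directly into the counting by charging each removal the $O(1)$ newly revealed neighbor labels (plus, once you account for adaptively chosen removal locations, an $O(\log n)$ factor, giving $2^{O(n^2 + r\log n)}$ accessible contents --- still dwarfed by $2^{\Omega(n^3)}$). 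Your version buys a single connected cube-like witness shape forcing $r$ removals rather than a chain of $r$ gadgets each forcing one, at the cost of having to make the ``replay'' step robust to interleaved, adaptive deconstruction; the paper's version keeps the replay argument confined to the zero-removal case, which is why its exposition of that step can be (and is) terser. Both proofs share the same soft spots --- the rigor of the replay step and the implicit assumption that the waste-size bound is smaller than $\Theta(n^3)$ --- so neither of your flagged concerns is a gap relative to the paper's own standard.
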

    
    \begin{proof}
    We prove Theorem \ref{thm:need-to-deconstruct} by contradiction. Therefore, assume that $U$ is a tile set in the STAM* capable of shape replicating any shape $S$ and that seed assembly $\sigma_S$ is non-porous. Let $t = |U|$, $g$ be the maximum number of glues on any tile type in $U$, and $s$ be the maximum number of signals on any tile type in $U$. Note that for any position in an assembly over tiles in $U$, there is a maximum number of $\lambda = t(3^g)(3^s)$ possible tile types and tile states (accounting for all possible states of glues and signals).
    
    We define a shape $c$ which is an $n \times n \times n$ cube, for some $n \in \mathbb{N}$ to be defined, with every point on the exterior of the cube included in the shape. For every $xy$ plane (i.e. horizontal plane) in the interior of the cube, the points contained within $c$ follow the pattern shown in Figure \ref{fig:imposs-center}, where the grey locations are all included and a subset of the green locations are included. Note that only one plane has a connection to the exterior, and no other tiles of any plane in the interior are adjacent to a location of the exterior. Define the set $C$ as the set of all such $c$ where there is one for each possible pattern of green locations included and excluded.
    
    \begin{figure}
        \centering
        \includegraphics[width=0.15\textwidth]{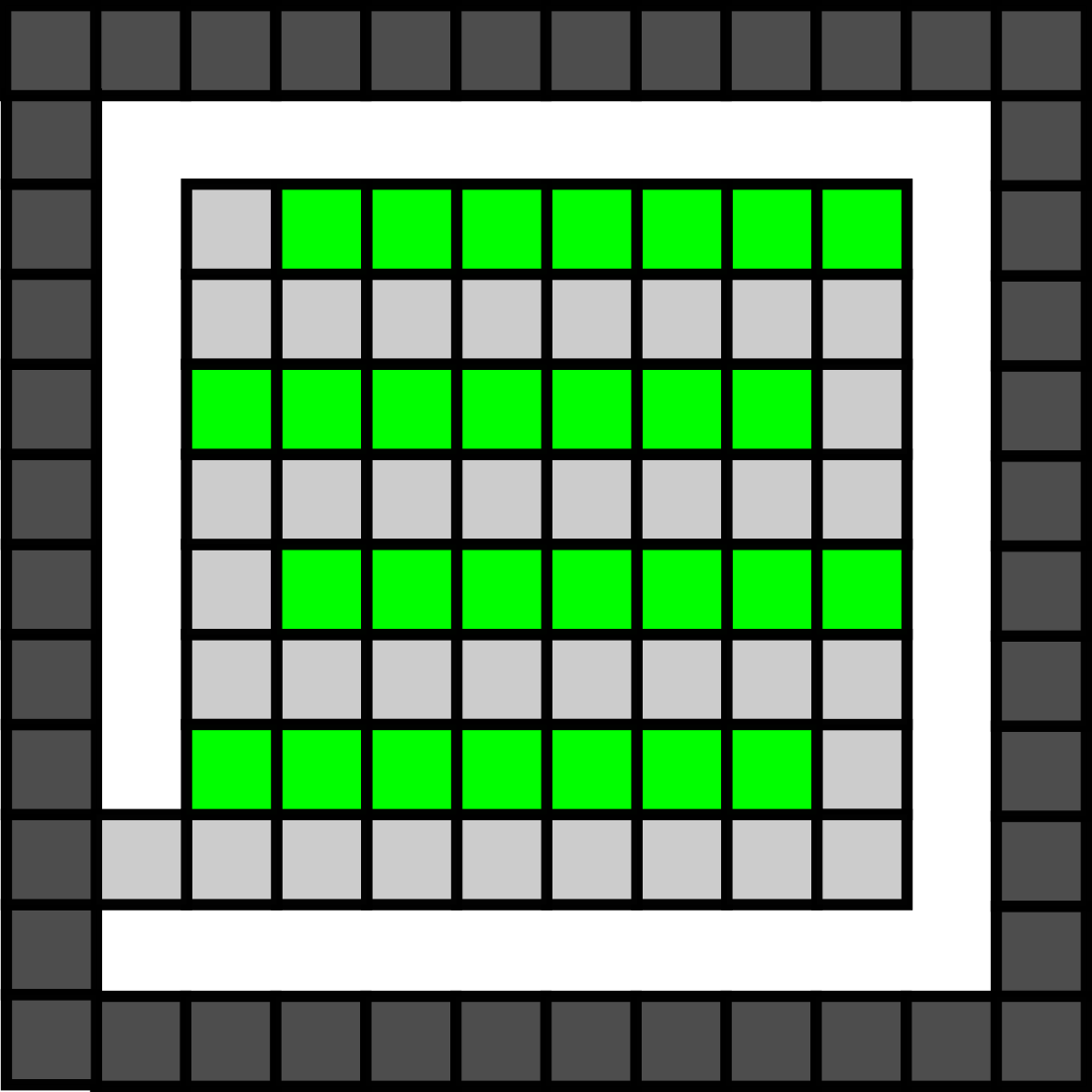}
        \caption{An example interior $xy$ plane within the cube $c$ of the proof of Theorem \ref{thm:need-to-deconstruct}. The plane in this example has the single connection to the exterior of the cube (dark grey), and all light grey locations are included, along with a subset of the green locations.}
        \label{fig:imposs-center}
        \vspace{-10pt}
    \end{figure}
    In order to ensure that only a single location of a single $xy$ plane in the interior of the cube is adjacent to the exterior (i.e. to leave a gap all around) the number of $xy$ planes with occupied locations is $n-4$. The width of each green row is $n-5$. The number of green rows in each $xy$ plane is $(n-4)/2$. Therefore, the number of green interior positions is $(n-4)(n-5)(n-4)/2$. The number of shapes which include every possible subset of those green positions is $2^{(n-4)(n-5)(n-4)/2}$, and this is the size of the set $C$.
    
    Conversely, the number of unit cube locations on the exterior of each $n \times n \times n$ cube is $6(n-1)^2$.
    
    By our assumption, for every $c \in C$, there exists an STAM* system $\mathcal{T}_c = (U,\sigma_c,\tau)$ such that $\mathcal{T}_c$ shape self-replicates $c$. However, for each such $\sigma_c$, the total number of options for a tile in each exterior location (including states) is $\lambda$, and therefore the total number of unique subassemblies composing the exterior surfaces of the cube is $\lambda^{6(n-1)^2}$. Also, since $s$ is the maximum number of signals on any tile type in $T$, $s!$ represents every possible ordering of completion of signals on the tile with the most signals. We can choose a value of $n$ (for the side lengths of the cubes) such that $(s!)\lambda^{6(n-1)^2+1} < 2^{(n-4)(n-5)(n-4)/2}$, since the exponents of the left and right sides grow on the order of $n^2$ and $n^3$, respectively, and all other terms are constants with respect to $n$. Let $n$ be such a sufficiently large value and then note that by the pigeonhole principle, for two $c_1,c_2 \in C$, the systems $\mathcal{T}_{c_1}$ and $\mathcal{T}_{c_2}$ must have identical subassemblies composing the exteriors of their seed assemblies as well as the single tile attaching each exterior to the interior planes. Additionally, there must be an assembly sequence such that the single tile of each exterior subassembly that is connected to the interior planes must experience the same ordering of completion of signals (since anything that could happen on their exteriors must be the same for both, and there were enough assemblies with the same subassemblies to guarantee the same order of completion of their signals for at least two of them). Since $\sigma_{c_1}$ and $\sigma_{c_2}$ are non-porous, there can be no other factors in $\mathcal{T}_{c_1}$ and $\mathcal{T}_{c_2}$ which influence the growth of assemblies, and so both systems must be able to yield the same terminal assemblies. This contradicts that they shape self-replicate $c_1$ and $c_2$ since these are different shapes. Finally, in order to achieve the arbitrary bound $r$ for required tile removals, we can simply adapt our target shape to be a ``chain'' of $r$ cubes (all of which can be made to be unique) connected by a single-tile-wide path of tiles and otherwise completely separated. The previous argument holds for each of the $r$ cubes, and since none can be replicated without the removal of at least one tile, a lower bound of the removal of at least $r$ tiles is established.
    
    \end{proof}
}

%\newpage

%%
%% Bibliography
%%
\bibliographystyle{spmpsci}
\bibliography{tam,experimental_refs}

\begin{thebibliography}{10}
\providecommand{\url}[1]{{#1}}
\providecommand{\urlprefix}{URL }
\expandafter\ifx\csname urlstyle\endcsname\relax
  \providecommand{\doi}[1]{DOI~\discretionary{}{}{}#1}\else
  \providecommand{\doi}{DOI~\discretionary{}{}{}\begingroup
  \urlstyle{rm}\Url}\fi

\bibitem{RNaseSODA2010}
Abel, Z., Benbernou, N., Damian, M., Demaine, E.D., Demaine, M.L., Flatland,
  R., Kominers, S.D., Schweller, R.T.: Shape replication through self-assembly
  and {R}{N}{A}se enzymes.
\newblock In: SODA 2010: Proceedings of the Twenty-first Annual ACM-SIAM
  Symposium on Discrete Algorithms, pp. 1045--1064. Society for Industrial and
  Applied Mathematics, Austin, Texas (2010)

\bibitem{SelfReplicationDNA}
Alseth, A., Hader, D., Patitz, M.J.: {Self-Replication via Tile Self-Assembly
  (Extended Abstract)}.
\newblock In: M.R. Lakin, P.~\v{S}ulc (eds.) 27th International Conference on
  DNA Computing and Molecular Programming (DNA 27), \emph{Leibniz International
  Proceedings in Informatics (LIPIcs)}, vol. 205, pp. 3:1--3:22. Schloss
  Dagstuhl -- Leibniz-Zentrum f{\"u}r Informatik, Dagstuhl, Germany (2021).
\newblock \doi{10.4230/LIPIcs.DNA.27.3}.
\newblock \urlprefix\url{https://drops.dagstuhl.de/opus/volltexte/2021/14670}

\bibitem{OrigamiBox}
Andersen, E.S., Dong, M., Nielsen, M.M., Jahn, K., Subramani, R., Mamdouh, W.,
  Golas, M.M., Sander, B., Stark, H., Oliveira, C.L.P., Pedersen, J.S.,
  Birkedal, V., Besenbacher, F., Gothelf, K.V., Kjems, J.: Self-assembly of a
  nanoscale {D}{N}{A} box with a controllable lid.
\newblock Nature \textbf{459}(7243), 73--76 (2009).
\newblock \doi{10.1038/nature07971}.
\newblock \urlprefix\url{http://dx.doi.org/10.1038/nature07971}

\bibitem{OrigamiSeed}
Barish, R.D., Schulman, R., Rothemund, P.W.K., Winfree, E.: {An
  information-bearing seed for nucleating algorithmic self-assembly}.
\newblock Proceedings of the National Academy of Sciences \textbf{106}(15),
  6054--6059 (2009).
\newblock \doi{10.1073/pnas.0808736106}.
\newblock \urlprefix\url{http://dx.doi.org/10.1073/pnas.0808736106}

\bibitem{BeckerRR06}
Becker, F., Rapaport, I., R{\'e}mila, E.: Self-assembling classes of shapes
  with a minimum number of tiles, and in optimal time.
\newblock In: Foundations of Software Technology and Theoretical Computer
  Science (FSTTCS), pp. 45--56 (2006)

\bibitem{BeckerTimeOpt}
Becker, F., R{\'e}mila, E., Schabanel, N.: Time optimal self-assembly for 2d
  and 3d shapes: The case of squares and cubes.
\newblock In: A.~Goel, F.C. Simmel, P.~Sos\'{\i}k (eds.) DNA, \emph{Lecture
  Notes in Computer Science}, vol. 5347, pp. 144--155. Springer (2008)

\bibitem{bui2018localized}
Bui, H., Shah, S., Mokhtar, R., Song, T., Garg, S., Reif, J.: Localized
  {D}{N}{A} hybridization chain reactions on {D}{N}{A} origami.
\newblock ACS Nano \textbf{12}(2), 1146--1155 (2018)

\bibitem{Versus}
Cannon, S., Demaine, E.D., Demaine, M.L., Eisenstat, S., Patitz, M.J.,
  Schweller, R.T., Summers, S.M., Winslow, A.: Two hands are better than one
  (up to constant factors): Self-assembly in the 2{HAM} vs. a{TAM}.
\newblock In: N.~Portier, T.~Wilke (eds.) STACS, \emph{LIPIcs}, vol.~20, pp.
  172--184. Schloss Dagstuhl - Leibniz-Zentrum fuer Informatik (2013)

\bibitem{AGKS05g}
Cheng, Q., Aggarwal, G., Goldwasser, M.H., Kao, M.Y., Schweller, R.T.,
  de~Espan\'{e}s, P.M.: Complexities for generalized models of self-assembly.
\newblock SIAM Journal on Computing \textbf{34}, 1493--1515 (2005)

\bibitem{Moteins}
Cheung, K.C., Demaine, E.D., Bachrach, J.R., Griffith, S.: Programmable
  assembly with universally foldable strings (moteins).
\newblock IEEE Transactions on Robotics \textbf{27}(4), 718--729 (2011)

\bibitem{CookFuSch11}
Cook, M., Fu, Y., Schweller, R.T.: Temperature 1 self-assembly: Deterministic
  assembly in 3{D} and probabilistic assembly in 2{D}.
\newblock In: SODA 2011: Proceedings of the 22nd Annual ACM-SIAM Symposium on
  Discrete Algorithms. SIAM (2011)

\bibitem{DDFIRSS07}
Demaine, E.D., Demaine, M.L., Fekete, S.P., Ishaque, M., Rafalin, E.,
  Schweller, R.T., Souvaine, D.L.: Staged self-assembly: nanomanufacture of
  arbitrary shapes with ${O}(1)$ glues.
\newblock Natural Computing \textbf{7}(3), 347--370 (2008)

\bibitem{OneTile}
Demaine, E.D., Demaine, M.L., Fekete, S.P., Patitz, M.J., Schweller, R.T.,
  Winslow, A., Woods, D.: One tile to rule them all: Simulating any tile
  assembly system with a single universal tile.
\newblock In: Proceedings of the 41st International Colloquium on Automata,
  Languages, and Programming (ICALP 2014), {\rm IT University of Copenhagen,
  Denmark, July 8-11, 2014}, \emph{LNCS}, vol. 8572, pp. 368--379 (2014)

\bibitem{2HAMIU}
Demaine, E.D., Patitz, M.J., Rogers, T.A., Schweller, R.T., Summers, S.M.,
  Woods, D.: The two-handed assembly model is not intrinsically universal.
\newblock In: 40th International Colloquium on Automata, Languages and
  Programming, ICALP 2013, Riga, Latvia, July 8-12, 2013, Lecture Notes in
  Computer Science. Springer (2013)

\bibitem{j2HAMIU}
Demaine, E.D., Patitz, M.J., Rogers, T.A., Schweller, R.T., Summers, S.M.,
  Woods, D.: The two-handed tile assembly model is not intrinsically universal.
\newblock Algorithmica \textbf{74}(2), 812--850 (2016).
\newblock \doi{10.1007/s00453-015-9976-y}.
\newblock \urlprefix\url{https://doi.org/10.1007/s00453-015-9976-y}

\bibitem{Dot09}
Doty, D.: Randomized self-assembly for exact shapes.
\newblock In: Proceedings of the 50th Annual IEEE Symposium on Foundations of
  Computer Science (FOCS), pp. 85--94. IEEE (2009)

\bibitem{DotKarMasNegativeJournal}
Doty, D., Kari, L., Masson, B.: Negative interactions in irreversible
  self-assembly.
\newblock Algorithmica \textbf{66}(1), 153--172 (2013)

\bibitem{IUSA}
Doty, D., Lutz, J.H., Patitz, M.J., Schweller, R.T., Summers, S.M., Woods, D.:
  The tile assembly model is intrinsically universal.
\newblock In: Proceedings of the 53rd Annual IEEE Symposium on Foundations of
  Computer Science, FOCS 2012, pp. 302--310 (2012)

\bibitem{FTAM}
Durand{-}Lose, J., Hendricks, J., Patitz, M.J., Perkins, I., Sharp, M.:
  Self-assembly of 3-{D} structures using 2-{D} folding tiles.
\newblock In: D.~Doty, H.~Dietz (eds.) {DNA} Computing and Molecular
  Programming - 24th International Conference, {DNA} 24, Jinan, China, October
  8-12, 2018, Proceedings, \emph{Lecture Notes in Computer Science}, vol.
  11145, pp. 105--121. Springer (2018)

\bibitem{Polyominoes}
Fekete, S.P., Hendricks, J., Patitz, M.J., Rogers, T.A., Schweller, R.T.:
  Universal computation with arbitrary polyomino tiles in non-cooperative
  self-assembly.
\newblock In: Proceedings of the Twenty-Sixth Annual ACM-SIAM Symposium on
  Discrete Algorithms (SODA 2015), San Diego, CA, USA {\rm January 4-6, 2015},
  pp. 148--167 (2015).
\newblock \doi{10.1137/1.9781611973730.12}.
\newblock
  \urlprefix\url{http://epubs.siam.org/doi/abs/10.1137/1.9781611973730.12}

\bibitem{jSignals3D}
Fochtman, T., Hendricks, J., Padilla, J.E., Patitz, M.J., Rogers, T.A.: Signal
  transmission across tile assemblies: 3{D} static tiles simulate active
  self-assembly by 2{D} signal-passing tiles.
\newblock Natural Computing \textbf{14}(2), 251--264 (2015)

\bibitem{GeoTiles}
Fu, B., Patitz, M.J., Schweller, R.T., Sheline, R.: Self-assembly with
  geometric tiles.
\newblock In: A.~Czumaj, K.~Mehlhorn, A.M. Pitts, R.~Wattenhofer (eds.)
  Automata, Languages, and Programming - 39th International Colloquium, {ICALP}
  2012, Warwick, UK, July 9-13, 2012, Proceedings, Part {I}, \emph{LNCS}, vol.
  7391, pp. 714--725. Springer (2012)

\bibitem{OptimalShapes3D}
Furcy, D., Micka, S., Summers, S.M.: Optimal program-size complexity for
  self-assembly at temperature 1 in 3{D}.
\newblock In: {DNA} Computing and Molecular Programming - 21st International
  Conference, {DNA} 21, Boston and Cambridge, MA, USA, August 17-21, 2015.
  Proceedings, pp. 71--86 (2015).
\newblock \doi{10.1007/978-3-319-21999-8\_5}.
\newblock \urlprefix\url{http://dx.doi.org/10.1007/978-3-319-21999-8\_5}

\bibitem{Polygons}
Gilbert, O., Hendricks, J., Patitz, M.J., Rogers, T.A.: Computing in continuous
  space with self-assembling polygonal tiles.
\newblock In: Proceedings of the Twenty-Seventh Annual ACM-SIAM Symposium on
  Discrete Algorithms (SODA 2016), Arlington, VA, USA {\rm January 10-12,
  2016}, pp. 937--956 (2016)

\bibitem{DDDIU}
Hader, D., Koch, A., Patitz, M.J., Sharp, M.: The impacts of dimensionality,
  diffusion, and directedness on intrinsic universality in the abstract tile
  assembly model.
\newblock In: S.~Chawla (ed.) Proceedings of the 2020 {ACM-SIAM} Symposium on
  Discrete Algorithms, {SODA} 2020, Salt Lake City, UT, USA, January 5-8, 2020,
  pp. 2607--2624. {SIAM} (2020)

\bibitem{GeoTilesUCNC2019}
Hader, D., Patitz, M.J.: Geometric tiles and powers and limitations of
  geometric hindrance in self-assembly.
\newblock In: Proceedings of the 18th Annual Conference on Unconventional
  Computation and Natural Computation (UCNC 2019), Tokyo, Japan {\rm June
  3–7, 2019}, pp. 191--204 (2019)

\bibitem{jSTAM-fractals}
Hendricks, J., Olsen, M., Patitz, M.J., Rogers, T.A., Thomas, H.: Hierarchical
  self-assembly of fractals with signal-passing tiles.
\newblock Submit to Natrual Computing

\bibitem{STAMshapes}
Hendricks, J., Patitz, M.J., Rogers, T.A.: Replication of arbitrary hole-free
  shapes via self-assembly with signal-passing tiles.
\newblock In: C.S. Calude, M.J. Dinneen (eds.) Unconventional Computation and
  Natural Computation - 14th International Conference, {UCNC} 2015, Auckland,
  New Zealand, August 30 - September 3, 2015, Proceedings, \emph{Lecture Notes
  in Computer Science}, vol. 9252, pp. 202--214. Springer (2015).
\newblock \doi{10.1007/978-3-319-21819-9\_15}.
\newblock \urlprefix\url{http://dx.doi.org/10.1007/978-3-319-21819-9\_15}

\bibitem{jRTAM}
Hendricks, J., Patitz, M.J., Rogers, T.A.: Reflections on tiles (in
  self-assembly).
\newblock Natural Computing \textbf{16}(2), 295--316 (2017).
\newblock \doi{10.1007/s11047-017-9617-2}.
\newblock \urlprefix\url{https://doi.org/10.1007/s11047-017-9617-2}

\bibitem{j2HAMSim}
Hendricks, J., Patitz, M.J., Rogers, T.A.: The simulation powers and
  limitations of higher temperature hierarchical self-assembly systems.
\newblock Fundam. Inform. \textbf{155}(1-2), 131--162 (2017).
\newblock \doi{10.3233/FI-2017-1579}.
\newblock \urlprefix\url{https://doi.org/10.3233/FI-2017-1579}

\bibitem{JonoskaSignals1}
Jonoska, N., Karpenko, D.: Active tile self-assembly, {P}art 1: Universality at
  temperature 1.
\newblock International Journal of Foundations of Computer Science
  \textbf{25}(02), 141--163 (2014).
\newblock \doi{10.1142/S0129054114500087}

\bibitem{JonoskaSignals2}
Jonoska, N., Karpenko, D.: Active tile self-assembly, {P}art 2: Self-similar
  structures and structural recursion.
\newblock International Journal of Foundations of Computer Science
  \textbf{25}(02), 165--194 (2014).
\newblock \doi{10.1142/S0129054114500099}

\bibitem{FlexibleVsRigid}
Jonoska, N., McColm, G.: Flexible versus rigid tile assembly.
\newblock In: C.~Calude, M.~Dinneen, G.~Păun, G.~Rozenberg, S.~Stepney (eds.)
  Unconventional Computation, \emph{Lecture Notes in Computer Science}, vol.
  4135, pp. 139--151. Springer Berlin Heidelberg (2006).
\newblock \doi{10.1007/11839132\_12}.
\newblock \urlprefix\url{http://dx.doi.org/10.1007/11839132\_12}

\bibitem{FlexibleCompModel}
Jonoska, N., McColm, G.L.: A computational model for self-assembling flexible
  tiles.
\newblock In: Proceedings of the 4th international conference on Unconventional
  Computation, UC'05, pp. 142--156. Springer-Verlag, Berlin, Heidelberg (2005).
\newblock \doi{10.1007/11560319\_14}.
\newblock \urlprefix\url{http://dx.doi.org/10.1007/11560319\_14}

\bibitem{JonoskaFlexible}
Jonoska, N., McColm, G.L.: Complexity classes for self-assembling flexible
  tiles.
\newblock Theoretical Computer Science \textbf{410}(4-5), 332--346 (2009).
\newblock \doi{10.1016/j.tcs.2008.09.054}.
\newblock \urlprefix\url{http://dx.doi.org/10.1016/j.tcs.2008.09.054}

\bibitem{KariTriHex}
Kari, L., Seki, S., Xu, Z.: Triangular and hexagonal tile self-assembly
  systems.
\newblock In: Proceedings of the 2012 international conference on Theoretical
  Computer Science: computation, physics and beyond, WTCS'12, pp. 357--375.
  Springer-Verlag, Berlin, Heidelberg (2012).
\newblock \doi{10.1007/978-3-642-27654-5\_28}.
\newblock \urlprefix\url{http://dx.doi.org/10.1007/978-3-642-27654-5\_28}

\bibitem{STAMPatternRep}
Keenan, A., Schweller, R., Zhong, X.: Exponential replication of patterns in
  the signal tile assembly model.
\newblock Natural Computing \textbf{14}(2), 265--278 (2014)

\bibitem{SignalsReplication}
Keenan, A., Schweller, R.T., Zhong, X.: Exponential replication of patterns in
  the signal tile assembly model.
\newblock In: D.~Soloveichik, B.~Yurke (eds.) DNA, \emph{Lecture Notes in
  Computer Science}, vol. 8141, pp. 118--132. Springer (2013)

\bibitem{jCCSA}
Lathrop, J.I., Lutz, J.H., Patitz, M.J., Summers, S.M.: Computability and
  complexity in self-assembly.
\newblock Theory Comput. Syst. \textbf{48}(3), 617--647 (2011)

\bibitem{OrigamiTiles}
Liu, W., Zhong, H., Wang, R., Seeman, N.C.: Crystalline two-dimensional
  dna-origami arrays.
\newblock Angewandte Chemie International Edition \textbf{50}(1), 264--267
  (2011).
\newblock \doi{10.1002/anie.201005911}.
\newblock \urlprefix\url{http://dx.doi.org/10.1002/anie.201005911}

\bibitem{jSignals}
Padilla, J.E., Patitz, M.J., Schweller, R.T., Seeman, N.C., Summers, S.M.,
  Zhong, X.: Asynchronous signal passing for tile self-assembly: Fuel efficient
  computation and efficient assembly of shapes.
\newblock International Journal of Foundations of Computer Science
  \textbf{25}(4), 459--488 (2014)

\bibitem{SignalTilesExperimental}
Padilla, J.E., Sha, R., Kristiansen, M., Chen, J., Jonoska, N., Seeman, N.C.: A
  signal-passing {DNA}-strand-exchange mechanism for active self-assembly of
  {DNA} nanostructures.
\newblock Angewandte Chemie International Edition \textbf{54}(20), 5939--5942
  (2015)

\bibitem{2HAM-temp1}
Patitz, M.J., Rogers, T.A., Schweller, R.T., Summers, S.M., Winslow, A.:
  Resiliency to multiple nucleation in temperature-1 self-assembly.
\newblock In: Proceedings of the 22nd International Conference on DNA Computing
  and Molecular Programming (DNA 22), Ludwig-Maximilians-Universität, Munich,
  Germany {\rm September 4-8, 2016}, pp. 98--113 (2016)

\bibitem{jSADS}
Patitz, M.J., Summers, S.M.: Self-assembly of decidable sets.
\newblock Natural Computing \textbf{10}(2), 853--877 (2011)

\bibitem{Qian1196}
Qian, L., Winfree, E.: Scaling up digital circuit computation with {DNA} strand
  displacement cascades.
\newblock Science \textbf{332}(6034), 1196--1201 (2011)

\bibitem{RotOrigami05}
Rothemund, P.W.K.: Design of dna origami.
\newblock In: ICCAD '05: Proceedings of the 2005 IEEE/ACM International
  conference on Computer-aided design, pp. 471--478. IEEE Computer Society,
  Washington, DC, USA (2005)

\bibitem{RotWin00}
Rothemund, P.W.K., Winfree, E.: The program-size complexity of self-assembled
  squares (extended abstract).
\newblock In: STOC '00: Proceedings of the thirty-second annual ACM Symposium
  on Theory of Computing, pp. 459--468. ACM, Portland, Oregon, United States
  (2000)

\bibitem{SchulYurWinfEvolution}
Schulman, R., Yurke, B., Winfree, E.: Robust self-replication of combinatorial
  information via crystal growth and scission.
\newblock Proceedings of the National Academy of Sciences \textbf{109}(17),
  6405--10 (2012).
\newblock
  \urlprefix\url{http://www.biomedsearch.com/nih/Robust-self-replication-combinatorial-information/22493232.html}

\bibitem{Simmel2019}
Simmel, F.C., Yurke, B., Singh, H.R.: Principles and applications of nucleic
  acid strand displacement reactions.
\newblock Chemical Reviews \textbf{119}(10), 6326--6369 (2019)

\bibitem{SolWin05}
Soloveichik, D., Winfree, E.: Complexity of compact proofreading for
  self-assembled patterns.
\newblock In: A.~Carbone, N.A. Pierce (eds.) {DNA} Computing, 11th
  International Workshop on {DNA} Computing, DNA11, London, ON, Canada, June
  6-9, 2005. Revised Selected Papers, \emph{Lecture Notes in Computer Science},
  vol. 3892, pp. 305--324. Springer (2005)

\bibitem{SolWin07}
Soloveichik, D., Winfree, E.: Complexity of self-assembled shapes.
\newblock SIAM Journal on Computing \textbf{36}(6), 1544--1569 (2007)

\bibitem{SummersTemp}
Summers, S.M.: Reducing tile complexity for the self-assembly of scaled shapes
  through temperature programming.
\newblock Algorithmica \textbf{63}(1-2), 117--136 (2012).
\newblock \doi{10.1007/s00453-011-9522-5}.
\newblock \urlprefix\url{http://dx.doi.org/10.1007/s00453-011-9522-5}

\bibitem{WangE12182}
Wang, B., Thachuk, C., Ellington, A.D., Winfree, E., Soloveichik, D.: Effective
  design principles for leakless strand displacement systems.
\newblock Proceedings of the National Academy of Sciences \textbf{115}(52),
  E12182--E12191 (2018)

\bibitem{wei2012complex}
Wei, B., Dai, M., Yin, P.: Complex shapes self-assembled from single-stranded
  dna tiles.
\newblock Nature \textbf{485}(7400), 623--626 (2012)

\bibitem{StrandDispInTiles}
Zhang, D.Y., Hariadi, R.F., Choi, H.M., Winfree, E.: Integrating {DNA}
  strand-displacement circuitry with {DNA} tile self-assembly.
\newblock Nature communications \textbf{4}(1), 1--10 (2013)

\bibitem{ZhangDavidYu2011DDnu}
Zhang, D.Y., Seelig, G.: Dynamic {DNA} nanotechnology using strand-displacement
  reactions.
\newblock Nature chemistry \textbf{3}(2), 103--113 (2011)

\end{thebibliography}

% \clearpage

%\setcounter{section}{0}
%\renewcommand{\thesection}{\Alph{section}}
% \appendix
% \def\appendixname{Technical Appendix}

% \begin{appendices}
% \input{prelims-append}
% \input{simple-replicator-append}
% \input{deconstructive-replicator-append}
% \input{2HAM-replicator-append}
% \input{need-for-deconstruction}
% \end{appendices}

\end{document}